\documentclass{article}

\usepackage[affil-it]{authblk}
\usepackage[usenames,dvipsnames]{xcolor}
\usepackage{amsfonts}
\usepackage{amsmath,amsthm,amssymb,dsfont}
\usepackage{enumerate}
\usepackage[english]{babel}
\usepackage{graphicx}	
\usepackage[caption=false]{subfig}
\usepackage[margin=3cm]{geometry}
\usepackage{url}
\usepackage{todonotes}
\usepackage{bbm}

\usepackage{tikz}
\usetikzlibrary{chains}
\usetikzlibrary{fit}

\usepackage{makecell}

\usepackage{epsfig}
\usetikzlibrary{shapes.symbols,patterns} 
\usepackage{pgfplots}

\usepackage{hyperref}[breaklinks]
\hypersetup{colorlinks=true,citecolor=blue,linkcolor=blue,filecolor=blue,urlcolor=blue}

\usepackage{nicefrac}
\usepackage{mathtools}

\usepackage{algorithm}
\usepackage{algorithmic}
\usepackage{wasysym}

\usepackage{booktabs}


\theoremstyle{plain}
\newtheorem{theorem}{Theorem}[section]
\newtheorem{lemma}[theorem]{Lemma}

\newtheorem{fact}[theorem]{Fact}
\newtheorem{corollary}[theorem]{Corollary}
\newtheorem{proposition}[theorem]{Proposition}

\theoremstyle{definition}

\newtheorem{remark}[theorem]{Remark}

\newtheorem{assumption}[theorem]{Assumption}
\newtheorem{problem}[theorem]{Problem}

\newcommand*{\PP}{\mathbb{P}}
\newcommand*{\E}{\mathbb{E}}

\newcommand*{\cA}{\mathcal{A}}

\newcommand*{\cL}{\mathcal{L}}

\newcommand*{\cM}{\mathcal{M}}

\newcommand*{\cQ}{\mathcal{Q}}
\newcommand*{\cR}{\mathcal{R}}
\newcommand*{\cS}{\mathcal{S}}

\newcommand*{\cU}{\mathcal{U}}

\newcommand*{\cW}{\mathcal{W}}
\newcommand*{\cX}{\mathcal{X}}
\newcommand*{\cY}{\mathcal{Y}}
\newcommand*{\cZ}{\mathcal{Z}}

\newcommand*{\X}{\mathbb{X}}
\newcommand*{\U}{\mathbb{U}}
\newcommand*{\V}{\mathbb{V}}
\newcommand*{\Y}{\mathbb{Y}}
\newcommand*{\M}{\mathbb{M}}

\newcommand*{\N}{\mathbb{N}}
\newcommand*{\Z}{\mathbb{Z}}
\newcommand*{\R}{\mathbb{R}}
\newcommand*{\Q}{\mathbb{Q}}
\newcommand*{\C}{\mathbb{C}}

\newcommand*{\eps}{\varepsilon}

\newcommand*{\poly}{\mathrm{poly}}
\newcommand*{\polylog}{\mathrm{polylog}}

\newcommand*{\ket}[1]{| #1 \rangle}
\newcommand*{\bra}[1]{\langle #1 |}

\newcommand*{\conv}{\mathrm{conv}}

\newcommand*{\distH}{\mathrm{d_H}}

\DeclareSymbolFont{mymathoperators}{OT1}{phv}{m}{n}
\DeclareMathSymbol{\protoast}{\mathbin}{mymathoperators}{"2A}
\renewcommand*{\ast}{\mathbin{\raisebox{-0.7ex}{\ensuremath{\protoast}}}}

\newcommand*{\ci}{\mathrm{i}} 
\newcommand*{\dr}{\mathrm{r}} 
\newcommand*{\di}{\mathrm{d}} 

\newcommand{\norm}[1]{\left\lVert#1\right\rVert}


\usetikzlibrary{calc}
\usetikzlibrary{shapes.geometric}
\tikzset{
    sunflames/.style={
        line width=0.1pt,
        draw=black,
        fill=yellow,
        regular polygon, 
        regular polygon sides=3,
        inner sep=0.035cm
    },
    sunbody/.style={
        line width=0.1pt,
        draw=black,
        fill=yellow,
        circle,
        minimum size=0.5cm
    }
 }


\newcommand{\inprod}[2]{\ensuremath{\left\langle{#1}\vphantom{\big|},\vphantom{\big|}{#2}\right\rangle}}

\allowdisplaybreaks

\title{Quantum speedups for convex dynamic programming}

\author{\normalsize David Sutter$^{1}$, Giacomo Nannicini$^{2}$, Tobias Sutter$^{3}$, and Stefan Woerner$^{1}$}
\affil{\small $^{1}$IBM Quantum, IBM Research -- Zurich\\
\small $^{2}$IBM Quantum, IBM T.J.~Watson Research Center\\
\small $^{3}$Risk Analytics and Optimization Chair, EPFL
}
\date{}

\begin{document}

\maketitle

\begin{abstract}
We present a quantum algorithm to solve dynamic programming problems with convex value functions. For linear discrete-time systems with a $d$-dimensional state space of size $N$, the proposed algorithm outputs a quantum-mechanical representation of the value function in time $O(T \gamma^{dT}\polylog(N,(T/\eps)^{d}))$, where $\eps$ is the accuracy of the solution, $T$ is the time horizon, and $\gamma$ is a problem-specific parameter depending on the condition numbers of the cost functions. This allows us to evaluate the value function at any fixed state in time $O(T \gamma^{dT}\sqrt{N}\,\polylog(N,(T/\eps)^{d}))$, and the corresponding optimal action can be recovered by solving a convex program. The class of optimization problems to which our algorithm can be applied includes provably hard stochastic dynamic programs. Finally, we show that the algorithm obtains a quadratic speedup (up to polylogarithmic factors) compared to the classical Bellman approach on some dynamic programs with continuous state space that have $\gamma=1$.
\end{abstract}

\paragraph{Update:}
We recently discovered an error in the correctness proof of the quantum Legendre-Fenchel transform~\cite[Algorithm~5]{QLFT20}, and we have not found a way to fix it yet. Therefore we are unsure if there is a quantum speedup for the convex dynamic programming setting discussed here. We believe that our hardness results and lower bounds are correct. We will update this manuscript once we have fully resolved this issue.

\section{Introduction}
Quantum computers utilize intrinsic properties of quantum mechanics to perform certain computations with a provable speedup compared to any known classical methods.
Recent technological progress in building quantum hardware strengthens the potential impact of quantum computing, and motivates the development of quantum algorithms that can outperform their classical counterparts. 

In many real-world decision-making problems, decisions are made in stages. Each decision affects the position of the decision-maker in subsequent time steps; the goal is to minimize a measure of overall cost. Often, the problem is further complicated by non-deterministic elements (i.e., costs or state transitions).
The workhorse for this general class of sequential decision-making problems is \emph{dynamic programming} (DP), which dates back to the seminal work of Richard Bellman~\cite{bellman2003dynamic}, and is an established discipline in computer science, operations research, and engineering; for comprehensive references on DP, see~\cite{bertsekas1995dynamic, ref:Bertsekas-12}. In machine learning, DP techniques are increasingly popular under the name of \emph{reinforcement learning}.
The fundamental limitation of DP, captured by Bellman's term ``curse of dimensionality'', is the exponential growth in the dimension of the state and action space of the running time of the standard recursive solution approach, coupled with the fact that all different running time parameters are multiplied together. Thus, even if the standard solution approach is efficient in practice for small problems, it can quickly grow intractable. To solve DP problems, a possibility is to rely on the approximation techniques at the core of \emph{approximate dynamic programming} (ADP) \cite{ref:BerTsi-96,powell2007approximate}.
While ADP cannot resolve the curse of dimensionality in general, it can sometimes lead to a tractable solution approach under some, typically restrictive, conditions, see, e.g., \cite{goldberg2018beating,halman2014fully,halman2019toward}.

It is well known that quantum algorithms can provide a speedup over the best known classical algorithm for several tasks. For example, quadratic speedups over any classical algorithms can be obtained for unstructured search \cite{grover96} and for many computational problems related to Markov chains \cite{szegedy2004quantum}. Because DP is NP-hard, we do not expect more than a quadratic speedup from quantum algorithm \cite{bennett97}. However, attaining such a speedup has escaped so far. 

In this paper we present a quantum algorithm that achieves a quadratic speedup (up to polylogarithmic factors) for certain classes of dynamic programs; specifically, our algorithm only works for problems with a convex structure. While our framework is far from comprising the full generality of DP, it contains several problems that are relevant in practice, and we show that it contains \#P-hard problems. Our paper thus gives a positive, albeit partial, answer to the question of finding a quantum algorithm for DP that is quadratically faster than classical methods.

\subsection{Dynamic programming} \label{sec_DP}
We briefly introduce the DP formalism, distinguishing between a deterministic and a stochastic setting. We refer the reader to the textbooks~\cite{bertsekas1995dynamic,ref:Bertsekas-12} for more details.
\paragraph{Deterministic setting.}
Consider a model comprising a \emph{state space} $\X^d\subseteq \R^d$ and an \emph{action space} $\U^c \subseteq \R^c$ with a deterministic discrete-time system evolving according to the equation
\begin{align*}
x_{t+1} = f(x_t,u_t) \quad \textnormal{for} \quad  t=0,1,\ldots,T-1 \,,
\end{align*}
where $f:\X^d \times \U^c \to \X^d$ describes the system dynamics, $x_0 \in \X^d$ is a given initial state, and $T \in \N$ denotes the time horizon. 
The goal is to find a sequence of optimal actions $\{u^\star_t\}_{t=0}^{T-1}\subseteq \U^c$ via optimal \emph{policies}, i.e., functions $\pi^\star_t:\X^d\to \U^c$ mapping states into actions, with $u^\star_t=\pi^\star_t(x)$, and minimizing the overall cost
\begin{equation*}
\sum_{t=0}^{T-1} g(x_t,u_t) + g_T(x_T)\, ,
\end{equation*}
where $g:\X^d\times \U^c \to \R$ denotes the \emph{running cost} and $g_T:\X^d\to\R$ the \emph{terminal cost}. The optimal total cost achieved from a given stage to the terminal stage is given by the \emph{value function} of the problem, which for the first stage is defined as
\begin{equation}\label{eq:value:fct:def}
J_0(x) := \min_{\pi_0,\ldots,\pi_{T-1}} \left \lbrace \sum_{t=0}^{T-1} g\big(x_t,\pi_t(x_t)\big) + g_T(x_T)\right \rbrace\, .
\end{equation}
The corresponding minimizer, denoted by $\pi^\star_0,\ldots, \pi^\star_{T-1}$ is the \emph{optimal policy}.
The theory of DP states that the optimization problem \eqref{eq:value:fct:def} can be solved by recursively applying the \emph{dynamic programming operator}, defined as
\begin{align} \label{eq_DP}
\mathrm{DP}[J](x) := \min_{u \in \U^c} \{ g(x,u) + J \big( f(x,u)\big) \} \quad \forall x \in \X^d \, ,
\end{align}
where $J:\X^d \to \R$. More precisely, the DP principle \cite[Proposition~1.3.1]{bertsekas1995dynamic} states that we can compute the optimal value function,  defined in \eqref{eq:value:fct:def},  through a sequence of functions recursively defined as
\begin{align} \label{eq:DP:principle}
J_t(x):=\mathrm{DP}[J_{t+1}](x) \qquad \text{for }t=T-1,T-2,\ldots,0\quad \text{and} \quad  \forall x \in \X^d \, ,
\end{align}
where $J_T = g_T$ is the terminal condition.
In technical terms this means
\begin{align}
J_0(x)&=\mathrm{DP}\circ \ldots \circ \mathrm{DP}[J_T](x)  =\mathrm{DP}^T[J_T](x) \qquad \textnormal{and} \label{eq_recursion} \\
\pi_0^\star(x)&\in\arg\min_{u \in \U^d}\big\{g(x,u)+J_0\big(f(x,u)\big) \big\} \, ,\label{eq_recursion_policy}
\end{align}
for all $x\in \X^d$. Hence, we can solve the original problem by applying $T$ times the DP operator in~\eqref{eq_DP}. 
This is a considerable simplification of~\eqref{eq:value:fct:def}: rather than optimizing over policies (i.e.~functions), one needs to solve a sequence of problems of the form~\eqref{eq_DP}, which optimize over $\U^c$.

Despite the simplification via the DP principle, solving Problem~\eqref{eq:value:fct:def} via~\eqref{eq_recursion} is difficult for at least two reasons: (i) evaluating the DP operator~\eqref{eq_DP} requires to solve an optimization problem for each $x \in \X^d$, where $\X^d$ is a possibly infinite set, and (ii) even for a fixed $x \in \X^d$, the optimization over $u \in \U^c$ in~\eqref{eq_DP} may be nontrivial to carry out. We make several assumptions that help us to simplify~\eqref{eq_DP}:
\begin{assumption}[Modelling assumptions]\label{ass:modelling}
We consider states $x=(y,z)$ and decisions $u=(v,w)$ with $y\in \Y^{d_{\dr}} \subset \R^{d_{\dr}}$, $z\in \cZ^{d_{\ci}}_{N_{\ci}} \subset \Z^{d_{\ci}}$, $|\cZ^{d_{\ci}}_{N_{\ci}}|=N_{\ci}$, $v\in \V^{c_{\dr}} \subset \R^{c_{\dr}}$, $w \in \cW^{c_{\ci}}_{M_{\ci}} \subset \Z^{c_{\ci}}$, $|\cW^{c_{\ci}}_{M_{\ci}}|=M_{\ci}$, and define \smash{$\X^d:=\Y^{d_{\dr}}\times \cZ^{d_{\ci}}_{N_{\ci}}$, $\U^c:=\V^{c_{\dr}}\times \cW^{c_{\ci}}_{M_{\ci}}$}, $d:=d_{\dr}+ d_{\ci}$, and $c:=c_{\dr}+ c_{\ci}$. 
Furthermore, the following assumptions hold:
\begin{enumerate}[(i)]
\item\label{it_dynamics} \emph{Linear dynamics}: we have
\begin{align*}
    f(x,u)=\underbrace{\begin{pmatrix}A & 0\\0 & D\end{pmatrix}}_{A'}\underbrace{\begin{pmatrix}y \\z\end{pmatrix}}_{x} + \underbrace{\begin{pmatrix}B &C \\0 &E\end{pmatrix}}_{B'}\underbrace{\begin{pmatrix}v \\w\end{pmatrix}}_{u} \, , 
\end{align*}
for some fixed $A\in \R^{d_{\dr}\times d_{\dr}}$, $B \in \R^{d_{\dr} \times c_{\dr}}$, $C \in \R^{d_{\dr} \times c_{\ci}}$, $D \in \Z^{d_{\ci} \times d_{\ci}}$, and $E \in \Z^{d_{\ci} \times c_{\ci}}$.
\item \label{item:ass:separable} \emph{Separable cost function}: we have $g(x,u)=g_{\mathrm{x}}(x) + g_{\mathrm{u}}(u) = g_{\mathrm{x}}(y,z) + g_{\mathrm{u}}(v,w)$ for some $g_{\mathrm{x}}:\Y^{d_{\dr}}\times \cZ^{d_{\ci}}_{N_{\ci}} \to \R$ and $g_{\mathrm{u}}:\V^{c_{\dr}} \times \cW^{c_{\ci}}_{M_{\ci}} \to \R$.\footnote{Without loss of generality we may assume that \smash{$\Y^{d_{\dr}}=[0,\tau_{\dr}]^{d_{\dr}}$} for \smash{$\tau_{\dr}<\infty$}, \smash{$\max\{ \norm{z}_\infty : z \in \cZ^{d_{\ci}}_{N_{\ci}} \} \leq \tau_{\ci}<\infty$}, \smash{$\V^{c_{\dr}}=[0,\eta_{\dr}]^{c_{\dr}}$} for $\eta_{\dr}<\infty$, and \smash{$\max\{ \norm{w}_{\infty} : w \in \cW^{c_{\ci}}_{M_{\ci}} \} \leq \eta_{\ci}<\infty$}.}
\item \label{item:ass:convexity} \emph{Convexity}: for every $z\in\cZ^{d_{\ci}}_{N_{\ci}}$ the functions $y\mapsto J_T(y,z)$ and $y\mapsto g_{\mathrm{x}}(y,z)$ are convex. For every \smash{$y\in\Y^{d_{\dr}}$} the functions $z\mapsto J_T(y,z)$ and $z\mapsto g_{\mathrm{x}}(y,z)$ are convex extensible. Similarly, for every \smash{$w\in\cW^{c_{\ci}}_{M_{\ci}}$} the function $v\mapsto g_{\mathrm{u}}(v,w)$ is convex and for every $v\in\V^{c_{\dr}}$ the function $w\mapsto g_{\mathrm{u}}(v,w)$ is convex extensible. The convex extensions of $J_T$, $g_{\mathrm{x}}$ and $g_{\mathrm{u}}$ are denoted by $\bar J_T$, $\bar g_{\mathrm{x}}$ and $\bar g_{\mathrm{u}}$, respectively, and are defined in~\eqref{eq:convex:extension}. The sets $\Y^{d_{\dr}}$ and $\V^{c_{\dr}}$ are compact and convex.
\item \label{item:ass:Lipschtiz} \emph{Lipschitz continuity}: the functions $\bar J_T$, and $\bar g_{\mathrm{x}}$ are Lipschitz continuous with constants $L_{\bar J_T}$, and $L_{\bar g_{\mathrm{x}}}$, respectively.
\item \label{item:ass:ConditionNumber} \emph{Finite condition number}: the functions $\bar J_T$, $\bar g_{\mathrm{x}}$, and $\bar g_{\mathrm{u}}$ are differentiable and have a finite condition number denoted by $\kappa_{\bar J_T}$, $\kappa_{\bar g_{\mathrm{x}}}$ and $\kappa_{\bar g_{\mathrm{u}}}$.\footnote{The condition number of a function $f$ is defined as \smash{$\kappa_f:=L'_f/\mu_f$}, where \smash{$L'_f$} is the Lipschitz constant of the gradient of $f$ and \smash{$\mu_f$} denotes the strong convexity parameter.} 
\end{enumerate}
\end{assumption}
We remark that our algorithm allows the matrices $A$, $B$, $C$, $D$, and $E$, as well as the cost functions $g_{\mathrm{x}}$, $g_\mathrm{u}$, to depend on the stage index $t$. However, we assume that they are stage-independent to simplify the exposition. Notice that the state and action spaces are so-called mixed-integer vectors, with $d_\dr, c_\dr$ real components and $d_\ci, c_\ci$ integer components, respectively. Two important cases, that warrant special treatment in some parts of this paper, are the \emph{purely discrete} ($d_\dr = c_\dr = 0$) and \emph{purely continuous} ($d_\ci = c_\ci = 0$) case. Although the above assumptions are restrictive, they are satisfied by many interesting problems, as will be discussed subsequently in the paper. The main motivation for Assumption~\ref{ass:modelling} is that we want to use the quantum Legendre-Fenchel transform (QLFT) \cite{QLFT20} as a subroutine, which inherently requires convex functions. 
Assumption~\ref{ass:modelling}~\eqref{item:ass:ConditionNumber} can be weakened to non-differentiable functions by requiring that $\bar J_T$, $\bar g_{\mathrm{x}}$, and $\bar g_{\mathrm{u}}$ have a finite $W$-parameter as introduced in~\cite[Section~3.1]{QLFT20} and defined in~\eqref{eq_W}. This can be relevant when considering, for example, piecewise-linear cost functions.
Under Assumption~\ref{ass:modelling} the DP operator simplifies to
\begin{align} \label{eq_DP_tilde}
\mathrm{DP}[J](x) 
= g_{\mathrm{x}}(x) + \min_{u \in \U^c} \{ g_{\mathrm{u}}(u) + J( A'x+B'u) \}
=:g_{\mathrm{x}}(x) + \mathrm{DP_{shift}}[J](x) \, ,
\end{align}
where $\mathrm{DP_{shift}}[J]$ denotes the \emph{shifted DP operator}.

The dynamical system defined in Assumption~\ref{ass:modelling} contains continuous as well as discrete variables. To apply the recursion \eqref{eq:DP:principle}, a possible approach is to discretize $\Y^{d_{\dr}}$, by replacing it with \smash{$\cY_{N_{\dr}}^{d_{\dr}}=\{y_0,\ldots,y_{N_{\dr}-1}\} \subseteq \Y$}. This leads to some discretization error, quantified subsequently in the paper.

To simplify notation we denote \smash{$\cX^d_N:=\cY_{N_{\dr}}^{d_{\dr}}\times \cZ^{d_{\ci}}_{N_{\ci}}$}, where $N:=N_{\dr} N_{\ci}$ and $d:=d_{\dr} + d_{\ci}$. 
A function $J':\cX^d_N \to\R$ is called \emph{convex extensible} if there exists a convex function $\bar {J'}:\conv(\cX_{N}^{d})\to\R$ such that $\bar {J'}(x) = J'(x)$ for all $x\in \cX_{N}^{d}$, where $\conv(\cX_{N}^{d})$ denotes the convex hull of $\cX_{N}^{d}$.
We discuss sufficient conditions for a function defined on a discrete set to be convex extensible in Section~\ref{sec_classical}. We next state our last assumption on the model.
\begin{assumption}[Feasibility]\label{ass:feasibility}\
\begin{enumerate}[(i)]
\item\label{it_ass1} Original problem: For every value function $J: \X^d \to \R$ and for every $x \in \X^d$ there exists $u_x^\star \in \arg\min_{u\in\U^{c}} \{g_{\mathrm{u}}(u) + J(A'x + B'u)\} $ such that $A'x+B'u_x^\star \in \X^d$.
\item\label{it_ass2} Conjugate problem: For every value function $J': \cX_N^d \to \R$ and for every $x \in \cX_N^d$ there exists $u_x^\star \in \arg\min_{u\in\U^{c}} \{g_{\mathrm{u}}(u) + \bar {J'}(A'x + B'u)\} $ such that $A'x+B'u_x^\star \in \X^d$.
\end{enumerate}
\end{assumption}
Assumption~\eqref{it_ass1} ensures that the original DP problem is feasible. Assumption~\eqref{it_ass2} extends this feasibility to the discretized problem, which is a requirement for our algorithm.
The goal in DP is to compute the initial value function $J_0(x)$ defined in~\eqref{eq_recursion} and the corresponding optimal policy $\pi^\star_0(x)$ given in~\eqref{eq_recursion_policy} for all $x \in \X^d$. Depending on the application, it often suffices to calculate $J_0(x_0)$ and $\pi^\star_0(x_0)$ for a specific initial state $x_0 \in \X^d$. 

\paragraph{Stochastic setting.}
A natural generalization of the DP setting is to allow non-deterministic elements, which greatly expands its applicability. Let $\xi$ be a discrete random variable with a known probability mass function $p_{\xi}$ that has a support of size $r$. We consider stochastic dynamics of the form $f(x,u)=A'x+B'u+\xi$, generalizing Assumption~\ref{ass:modelling}~\eqref{it_dynamics}. We choose this specific form of stochastic dynamics because it preserves convexity of the DP operator, and it is sufficient for many applications, where it allows modeling a random exogenous shock affecting the state transition; for example, in inventory management applications it allows modeling an uncertain product demand.
The DP operator is defined analogously to~\eqref{eq_DP} by adding an expectation, i.e.,
\begin{align} \label{eq_DP_stoch}
J_t(x)=\mathrm{DP_{\!stoch}}[J_{t+1}](x) :=g_{\mathrm{x}}(x)+ \min_{u \in \U^c} \Big \{g_{\mathrm{u}}(u) + \E\big[ \,J_{t+1}\big(A'x+B'u+\xi\big)\big] \Big \} \quad \forall x \in \X^d \, .
\end{align}
As in the deterministic setting~\eqref{eq_recursion}, the value function is expressed as
\begin{align*}
    J_0(x)= \mathrm{DP}^T[J_T](x)  \qquad \textnormal{where} \quad J_T(x)=g_T(x) \quad \forall x \in \X^d\, .
\end{align*}
The corresponding optimal policy is given by
\begin{align*}
    \pi^\star_0(x)=\arg\min_{u \in \U^d}\Big\{ g_{\mathrm{u}}(u) + \E\big[J_0(A'x+B'u+\xi)\big] \Big\} \, .
\end{align*}
In the stochastic setting we may assume a cost function $g_{\mathrm{u}}$ that depends on the random variable $\xi$. This can be incorporated into the setting above by a slight abuse of notation by letting $g_{\mathrm{u}}(u):= \E[g_{\mathrm{u}}(u,\xi)]$, hence we always write simply $g_{\mathrm{u}}$ in the following.  
Furthermore, we suppose that a feasibility assumption similar to Assumption~\ref{ass:feasibility} holds.
\begin{assumption}[Feasibility]\label{ass:feasibility_stoch}\
\begin{enumerate}[(i)]
\item\label{it_ass1_stoch} Original problem: For every value function $J: \X^d \to \R$ and for every $x \in \X^d$ there exists $u_x^\star \in \arg\min_{u\in\U^{c}} \{g_{\mathrm{u}}(u) + \E[J(A'x + B'u+\xi)]\} $ such that $A'x+B'u_x^\star+\xi \in \X^d$ for every $\xi$.
\item\label{it_ass2_stoch} Conjugate problem: For every value function $J': \cX_N^d \to \R$ and for every $x \in \cX_N^d$ there exists $u_x^\star \in \arg\min_{u\in\U^{c}} \{g_{\mathrm{u}}(u) + \E[\bar {J'}(A'x + B'u+\xi)]\} $ such that $A'x+B'u_x^\star+\xi \in \X^d$ for every $\xi$.
\end{enumerate}
\end{assumption}

Given the structure of the DP problem studied in this paper, we can exploit the concept of a \emph{post-decision state}. More precisely, in our problem the transition probabilities can be defined in terms of an auxiliary state variable, called the post-decision state, defined as $m_t := A'x_t + B' u_t$~\cite[Chapter~6]{ref:Bertsekas-12}. The corresponding post-decision state space is defined as $\M^d:=\{A'x + B'u : x \in \X^d, u \in \U^c\}=:\Q^{d_{\dr}}\times \cR_{P_{\ci}}^{d_{\ci}}$.\footnote{By definition the cardinality of the discrete part of the post-decision space can be bounded by $P_{\ci}\leq M_{\ci} N_{\ci}$.}
The transition probabilities are denoted by
\begin{align*}
 \PP[x_{t+1}=j | x_t = i, u_t=u]
    = \PP[\xi = j - A'i-B'u]
    = \PP[\xi=j-m]
    = p_{\xi}(j-m) \, ,
\end{align*}
where $m=A'i+B'u$.
We define the optimal cost-to-go at the post-decision state $m$ as
\begin{align*} 
    V_t(m)
    := \int_{\X^d} p_{\xi}(x-m) J_t(x) \di x
    = \sum_{k=0}^{r-1} p_{\xi}(\xi_k) J_t(m+\xi_k) \, ,
\end{align*}
where $\xi_0,\ldots,\xi_{r-1}$ denote the $r$ possible realizations of the random variable $\xi$.
Altogether, the concept of post-decision states allows us to consider an equivalent model for the DP problem, with a different state space, which can be advantageous for computation.
Specifically, for all $m \in \M^d$ the cost-to-go function $V_t(m)$ satisfies the recursion
\begin{align}
    V_t(m) 
    &= \sum_{k=0}^{r-1} p_{\xi}(\xi_k) \mathrm{DP}[V_{t+1}](m+\xi_k) \nonumber \\
    &=\sum_{k=0}^{r-1} p_{\xi}(\xi_k) \left( g_{\mathrm{x}}(m+\xi_k) + \min_{u \in \U^c}\{  g_{\mathrm{u}}(u) + V_{t+1}\big(A'(m+\xi_k )+B'u\big) \} \right) \nonumber \\
    &=:\widehat{\mathrm{DP}}[V_{t+1}](m) \, , \label{eq_DPhat}
\end{align}
which is a DP recursion over post-decision states, rather than over the original state space.
The concept of the post-decision state is powerful since the optimal policy to the DP is given by the simple formula
\begin{align*}
    \pi_0^\star(x)=\arg \min_{u \in \U^c}\big \{ g_{\mathrm{u}}(u) + V_0(A'x+B'u)\big\} \, .
\end{align*}

When comparing this DP equation over post-decision states~\eqref{eq_DPhat} to the DP equation over the standard states~\eqref{eq_DP_stoch}, we observe that the expectation and minimization operators are swapped. This structure will be exploited by the quantum algorithm in Section~\ref{sec_quantum_algo}.
As in the deterministic setting, we define a shifted stochastic DP operator as
\begin{align}\label{eq_DP_shifted_stoch}
  \mathrm{\widehat{DP}_{shift}}[V](m)  :=  \widehat{\mathrm{DP}}[V](m) - \E[ g_{\mathrm{x}}(m+\xi)] \quad \forall m \in \M^d \, .
\end{align}

\subsection{Previous work \& hardness of dynamic programming} \label{sec_previous_work}
In the deterministic discrete setting ($d_\dr = c_\dr = 0$), the standard DP algorithm~\cite{bertsekas1995dynamic} computes $J_0(x)$, as well as $\pi^\star_0(x)$, for all $x\in \cX_N^d$ in time $O(T N_\ci M_\ci)$. This is not a polynomial-time algorithm, as even in simple cases (e.g., the state and action spaces are a subset of $\mathbb{Z}$) the running time is pseudopolynomial. In the deterministic continuous setting ($d_\ci = c_\ci = 0$),
the DP problem can be solved in time $O(\poly(d))$, because under Assumption \ref{ass:modelling}, we can rewrite~\eqref{eq_recursion} as a convex optimization problem, that can be solved efficiently with standard methods~\cite{boyd_book}.

The complexity of the quantum algorithm developed in this paper is polylogarithmic in the size of the state space, and exponential in $d$ and $T$. In special cases, thoroughly discussed in Section~\ref{sec_quantum_algo}, the base of the exponential can be one, yielding a polynomial-time algorithm to construct a quantum-mechanical representation of the value function; the running time increases if we want to extract a classical description, and hence our results do not contradict any widely believed complexity-theoretical assumptions. Classically, the convex DP problem is generally easy when both $d$ and $T$ are fixed, because it can be formulated as a mathematical optimization problem with a fixed number of variables and constraints. Its solution in polynomial time is then possible with standard convex optimization techniques in the purely continuous case, or with Lenstra-type algorithms in the integer and mixed-integer case, provided the cost functions are polynomial~\cite{koppe2012complexity}. In the stochastic case, however, the formulation as a mathematical optimization problem has size that depends on the support of the random variables, and to the best of our knowledge the computational complexity of this case has not been determined. When $d$ is not fixed, even the purely discrete convex DP is NP-hard, as the closest vector problem can be trivially reduced to it. In the stochastic case, this paper shows that when $d=1$, the convex purely continuous DP problem is already \#P-hard if $T$ is not fixed, and~\cite{halman2014fully} shows a similar result for the purely discrete case.

Given the wide applicability of the DP framework, finding quantum algorithms for DP with a provable speedup has been a major open question in theoretical computer science. The fact that standard DP algorithms work sequentially, first solving small subproblems and then iteratively increasing the problem size, is generally viewed as an obstacle in deriving good quantum algorithms for DP. We also remark that under the widely believed conjecture $\text{NP} \nsubseteq \text{BQP}$~\cite{BV97}, we can expect at most a polynomial quantum speedup for DP. The paper~\cite{ambainis19_2} presents an algorithm based on Grover search that provides a speedup for some NP-hard problems whose best classical algorithm is an exponential time application of DP. The algorithm decreases the running time from $\tilde O(2^n)$ to $\tilde O(1.728^n)$;\footnote{The $\tilde O(\cdot)$ notation ignores logarithmic factors. The same convention holds for $\tilde \Omega(\cdot)$ that is used later in the manuscript.} this represents an important advance in a field where progress is rare, although it is not a fully-quantum algorithm, but rather a hybrid scheme that applies classical DP with Grover acceleration in several parts of the algorithm. The paper~\cite[Theorem~IV.1]{pooya19} shows that for general DP problems (i.e., not necessarily satisfying Assumption~\ref{ass:modelling}), the speedup that quantum algorithms can achieve is at most quadratic in the number of states and in the size of the decision space, but an algorithm that matches this bound is not known.\footnote{The quantum algorithm presented in \cite{pooya19} is known to contain an error that invalidates the main proof.} 

\subsection{Results}
We study a class of DP problems that we call \emph{convex DP problems}. A convex DP problem satisfies Assumptions~\ref{ass:modelling} and~\ref{ass:feasibility}, and in addition, the value function is convex or convex extensible at every stage; in other words,  the shifted DP operator defined in~\eqref{eq_DP_tilde} (and in~\eqref{eq_DP_shifted_stoch} for the stochastic setting) preserves convexity. Note that Assumptions~\ref{ass:modelling} and~\ref{ass:feasibility} by themselves imply convexity of the value function at every stage only for purely continuous problems (see Lemma~\ref{lem_cont_convexity_preserving}). For discrete DP problems, a few sufficient conditions for the convexity of the value function are known, most notably when the cost functions are $L^\natural$-convex, or the state space is one-dimensional. Problems satisfying these assumptions have many applications in operations management, see~\cite{halman2014fully,chen2015natural}. We discuss these sufficient conditions in more detail in Section~\ref{sec_classical}. Let $\cX^d_N=\{x_0,\ldots,x_{N-1}\}$ denote a discretization of the state space $\X^d$ with $N=N_{\dr}N_{\ci}$ and $N_{\dr} \sim (T/\eps)^{d_{\dr}}$. The main contributions of this paper are summarized as follows:
\begin{itemize}
\item For the deterministic case, we present an algorithm that computes a quantum-mechanical approximation of the value function, i.e., a state $\ket{\psi}=\frac{1}{\sqrt{N}} \sum_{i=0}^{N-1}\ket{i} \ket{\hat J_0(x_i)}$ such that for any $\eps>0$
\begin{align*} 
  |\hat J_0(x_i) -  J_0(x_i)| \leq \eps \quad \forall i\in \{0,\ldots, N-1\} \, ,
\end{align*}
in time $O(T \gamma^{dT}\polylog(N_{\ci},(T/\eps)^d))$, where the parameter $\gamma$ is a problem specific parameter depending on the condition numbers of the functions $\bar g_{\mathrm{x}},\bar g_{\mathrm{u}}$, and $\bar J_T$. For some DP problems, e.g., separable quadratic cost functions with condition number $1$, we have $\gamma=1$ (see Remark~\ref{rmk_gamma}). 
We refer to Theorem~\ref{thm_QDP} for a more precise statement.
\item The presented algorithm (combined with amplitude amplification) can output an $\eps$-approximation of $J_0(x_i)$, for any $i \in \{0,\ldots,N-1\}$, in time $O(T \gamma^{dT}(T/\eps)^{d_{\dr}/2} \sqrt{N_\ci}\, \polylog(N_\ci,(T/\eps)^d))$. The corresponding optimal policy $\pi_0^\star(x_i)$ can be approximately computed by additionally solving a convex optimization problem over the decision space.
We refer to Corollary~\ref{cor_eval_J0} for the details.
\item The quantum algorithm can be extended to the stochastic setting with discrete random variables, and the running time increases by a factor $O(r \gamma^r)$, where $r$ is the maximum size of the support of the random variables (see Corollary~\ref{cor_QDP_stoch}).
\item We show that the class of problems to which our quantum algorithm applies contains $\#$P-hard problems with $\gamma=1$ (see Proposition~\ref{prop_hardDP}). Furthermore, we show that for quadratic continuous stochastic DPs with $\gamma=1$ the quantum algorithm achieves a quadratic speedup compared to the best-known classical algorithm.
\item We show that in an oracle setting the quantum algorithm is optimal (up to polylogarithmic factors) for problems with $\gamma=1$ (see Section~\ref{sec_oracle_optimal}). This is a consequence of lower bounds for the computation of the QLFT proven in \cite[Section 6]{QLFT20}. 
\end{itemize}
This work presents the first quantum algorithm for solving the class of convex DP problems that achieves a quadratic speedup in certain cases; moreover, it does so while relying on the Bellman backward recursion. To the best of our knowledge, classical lower bounds for convex DP problems of the type studied in this paper are not known: this leaves open the possibility that there exist faster classical algorithms than the Bellman recursion. However, we believe that the quadratic speedup shown in this paper may be of interest to the research community even if faster classical algorithms are eventually discovered. To obtain a quantum speedup in the context of the Bellman backward recursion, we leverage the quantum speedup of the QLFT. More specifically, the QLFT exploits quantum superposition to obtain a provable quantum speedup over any classical algorithm to compute the LFT in some scenarios; this is shown in \cite{QLFT20}, and summarized in Section \ref{sec_qlft}. Our DP algorithm is essentially the quantization of a classical LFT-based algorithm for DP, but the running time analysis and proof of correctness require many intermediate technical results. 

We leave it as an open question to determine if the convexity assumptions can be relaxed while still keeping the quadratic speedup. This would be important as many combinatorial optimization problems formulated in the framework of DP (e.g., knapsack, traveling salesman) do \emph{not} have a convex value function, and it therefore remains an open question to find a quantum DP approach that is faster than classical algorithms for those problems. However, we note that this may require different techniques compared to the ones introduced in this work.

\section{Preliminaries}
\subsection{Notation}
For $N \in \N$ we denote $[N]:=\{0,1,\ldots,N-1\}$. 
For a vector $x=(x_0,\ldots,x_{N-1})$ its Euclidean and maximum norm are denoted by by $\norm{x}$ and $\norm{x}_{\infty}$, respectively. For $1\leq k \leq \ell \leq n$ we denote $x_k^\ell=(x_{k},x_{k+1},\ldots,x_{\ell})$. Analogously, we write $\ket{x_k^\ell}$ for the computational basis state $\ket{x_{k},x_{k+1},\ldots,x_{\ell}}$, and $\ket{f(x_k^\ell)}$  as a shorthand for $\ket{f(x_{k}),f(x_{k+1}),\ldots,f(x_{\ell})}$.
A function $f:\R^d \to \R$ is called Lipschitz continuous with constant $L_f \geq 0$ if $|f(x)-f(y)|\leq L_f \norm{x-y}$ for all $x,y \in \R^d$. 
The function is said to be $\mu_f$-strongly convex if for all $x,y \in \R^d$ and $t\in [0,1]$ we have $f(tx +(1-t)y)\leq t f(x) +(1-t)f(y) +\frac{1}{2}\mu_f t(1-t)\norm{x-y}^2$. 
The condition number of a strongly convex function $f$ is defined by $\kappa_f:=L'/\mu$~\cite[Section~2.1.3]{ref:nesterov-book-04}, where $L'$ denotes the Lipschitz constant of $\nabla f$. We note that by definition $\kappa_f \geq 1$. 
To simplify notation, we call a function $f$ that is convex extensible Lipschitz continuous with constant $L_f$ and/or $\mu_f$-strongly convex if their convex extension $\bar f$ satisfies these properties. Analogously we say that $f$ has condition number $\kappa_f$ meaning that $\bar f$ has condition number $\kappa_{\bar f}$.
The indicator function is defined by $\mathds{1}\{X\}:=1$ if $X=\mathrm{true}$ and $0$ otherwise.
The logical 'and' and 'or' operations are denoted by $\wedge$ and $\vee$, respectively. 
For a set $\cS$ let $\Delta_{\mathcal{S}}:= \max\{ \sup_{s,s'\in\mathcal{S}}\|s-s'\|, \sup_{s\in \cS}\norm{s}\}$. The one-sided Hausdorff distance between two sets $\cX$ and $\cY$ is denoted by $\distH(\cX,\cY) := \sup_{x\in \cX}\inf_{y\in \cY} \|x-y\|$. 
\subsection{Discrete convex functions}
We now define convex extensible functions, a concept that is used in Assumption~\ref{ass:modelling}. A function defined over the integers $J':\cX^d_N\to\R$, where $\cX^d_N=\{x_0,\ldots,x_{N-1}\}$, is called \textit{convex extensible} if there exists a convex function $\bar J':\X^d\to\R$ such that $J'(x) = \bar J'(x)$ for all $x\in\cX^d_N$. Following \cite[Section~3.4]{ref:Murota-03}, the convex extension $\bar J'$ can be defined as 
\begin{equation} \label{eq:convex:extension}
\bar J'(x) = \sup_{p\in\R^d, \alpha \in \R} \left\{ \inprod{p}{x}+\alpha \ : \ \inprod{p}{y}+\alpha \leq J'(y) \ \forall y\in \cX^d_N \right\}.
\end{equation}
Several classes of discrete functions exist that are known to be convex extensible: convex seprable functions; $L^\natural$-convex functions (``L'' stands for ``Lattice''); and $M^\natural$-convex functions (``M'' stands for ``Matroid''), see \cite{ref:Murota-03} for definitions and details. These concepts are relevant for this paper only insofar as they are helpful to give sufficient conditions for the value function to be convex extensible over the integers.

\subsection{Quantum Legendre-Fenchel transform}
\label{sec_qlft}
For a function $f: \X^d \to \R$, its \emph{Legendre-Fenchel transform} (LFT), also known as \emph{convex conjugate} or simply \emph{Legendre transform}, is the function $f^{\ast}: \mathbb{S}^d \to \R$ defined as
\begin{align}  \label{eq_LFT}
f^{\ast}(s) := \sup_{x \in \X^d}\{\langle s, x \rangle - f(x) \} \, ,
\end{align}
where $\mathbb{S}^d$ denotes the dual space that is chosen to be a compact and convex subset of $\R^d$.
We use the shorthand notation $f^{\ast}=\cL_{x\to s}(f)$. It is important to remark that the LFT is usually defined over a topological vector space, but in our case, $\X^d$ may not even be a compact set when $d_i > 0$. We take \eqref{eq_LFT} to be the definition of the LFT for any set $\X^d$.  In Section~\ref{sec_classical} we discuss a classical algorithm for DP that is based on the LFT, and that is the foundation for our quantum algorithm. We prove all the standard properties of the LFT that are necessary for our algorithm to work, ensuring that they hold for the type of sets $\X^d$ (i.e., mixed-integer sets) and functions $f$ (i.e., convex extensible) considered in this paper; intuitively, this is the case because we are working with convex extensible functions, ensuring that we can equivalently work with a continuous convex function that matches the original function at the discrete points.
If the primal and dual spaces $\X^d$ and $\mathbb{S}^d$ are replaced by discrete sets $\cX_N^d$ and $\cS_K^d$ of size $N$ and $K$, respectively, the transform analogous to~\eqref{eq_LFT} is called \emph{discrete LFT} and defined as
\begin{align} \label{eq_dLFT}
f^{*}(s) := \max_{x \in \cX_N^d}\{\langle s, x \rangle - f(x) \} \, ,
\end{align}
for $s \in \cS_K^d$.
We use the two different asterisk symbols \raisebox{-0.425mm}{\large{$\ast$}} and $*$ to distinguish between the continuous and the discrete LFT and refer the interested reader to~\cite{QLFT20} for more details about the discrete LFT.
The following results summarize some properties of the LFT and in particular the difference between the discrete and the continuous LFT.
\begin{lemma}\label{lem_contLFT}
Let $f,g: \X^d \to \R$ be such that $|f(x)-g(x)|\leq \eps$ for all $x\in \X^d$. Then $|f\!\protoast\!(s)-g\!\protoast\!(s)| \leq \eps$ for all $s \in \mathbb{S}^d$ and $|f^*(s)-g^*(s)| \leq \eps$ for all $s \in \cS_K^d$.
\end{lemma}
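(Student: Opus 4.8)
The plan is to prove the two inequalities directly from the definitions of the continuous LFT~\eqref{eq_LFT} and the discrete LFT~\eqref{eq_dLFT}, exploiting the fact that the pointwise hypothesis $|f(x)-g(x)|\leq\eps$ holds uniformly over all of $\X^d$ (and hence over any discrete subset $\cX_N^d$). The key observation is that the Legendre-Fenchel transform is built from a supremum (or maximum) of the affine-in-$s$ functionals $x\mapsto\langle s,x\rangle - f(x)$, and taking a supremum is a $1$-Lipschitz operation with respect to the uniform norm: if two families of functions differ pointwise by at most $\eps$, their suprema differ by at most $\eps$.

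First I would handle the continuous case. Fix $s\in\mathbb{S}^d$. The hypothesis gives $g(x)\leq f(x)+\eps$ for every $x\in\X^d$, hence $\langle s,x\rangle - g(x)\geq \langle s,x\rangle - f(x)-\eps$ for every $x$. Taking the supremum over $x\in\X^d$ on both sides yields
\begin{align*}
g\!\protoast\!(s)=\sup_{x\in\X^d}\{\langle s,x\rangle-g(x)\}\geq \sup_{x\in\X^d}\{\langle s,x\rangle-f(x)\}-\eps=f\!\protoast\!(s)-\eps.
\end{align*}
Symmetrically, using $f(x)\leq g(x)+\eps$, one obtains $f\!\protoast\!(s)\geq g\!\protoast\!(s)-\eps$. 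Combining the two gives $|f\!\protoast\!(s)-g\!\protoast\!(s)|\leq\eps$ for all $s\in\mathbb{S}^d$. The argument is purely order-theoretic and never touches convexity, differentiability, or the structure of $\X^d$, so it applies verbatim to the mixed-integer sets considered here.

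For the discrete case, the identical reasoning applies with the supremum over $\X^d$ replaced by the maximum over the finite set $\cX_N^d$ in~\eqref{eq_dLFT}. Since the hypothesis $|f(x)-g(x)|\leq\eps$ is assumed for all $x\in\X^d\supseteq\cX_N^d$, in particular it holds for every $x\in\cX_N^d$, so the chain of inequalities above goes through with $\max_{x\in\cX_N^d}$ in place of $\sup_{x\in\X^d}$, yielding $|f^*(s)-g^*(s)|\leq\eps$ for all $s\in\cS_K^d$. There is essentially no obstacle in this lemma: the only point requiring a moment's care is confirming that the supremum/maximum is attained or at least finite so that the inequality manipulation is legitimate, but this is guaranteed because the dual space $\mathbb{S}^d$ is compact and, in the discrete case, $\cX_N^d$ is finite; the one-sided bounds $g\!\protoast\!(s)\geq f\!\protoast\!(s)-\eps$ in fact hold even allowing $+\infty$ values, so no finiteness caveat is strictly needed. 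The main ``work'' is simply recognizing that this is the standard stability of the conjugate under uniform perturbations, which underpins the later discretization-error analysis in the paper.
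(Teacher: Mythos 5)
Your proof is correct and follows essentially the same route as the paper: both arguments exploit that the supremum defining the conjugate is $1$-Lipschitz under uniform perturbation of $f$, the only cosmetic difference being that the paper phrases the one-sided bound via a maximizing sequence for $f\!\protoast\!(s)$ while you manipulate the suprema directly. Nothing further is needed.
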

\begin{proof}
Let $(x_n)_{n\in \N}$ be a maximizing sequence of $f\!\protoast\!(s)$, i.e., $\lim_{n\to \infty}( \inprod{x_n}{s}-f(x_n))=f\!\protoast\!(s)$.
By definition we have 
\begin{align*}
f\!\protoast\!(s)-g\!\protoast\!(s)
\leq \lim_{n \to \infty} \big( g(x_n)-f(x_n) \big)
\leq \eps \, .
\end{align*}
The same argument shows $g\!\protoast\!(s)-f\!\protoast\!(s)\leq \eps$ which proves the assertion. The statement for the discrete LFT follows analogously.
\end{proof}
\begin{lemma} \label{lem_discreteLFT_approx}
Let $f:\X^d \to \R$ be Lipschitz continuous with constant $L_f$ and let $\cX_N^d$ and $\cS_K^d$ denote the primal and dual spaces of the discrete LFT. Then,
\begin{enumerate}[(i)]
\item\label{it_LFT1} $|f\!\protoast\!(s) - f^*(s)| \leq (1+\sqrt{d})L_f \distH(\X^d,\cX_N^d)$ for all $s \in \cS_K^d$;
\item\label{it_LFT2} $f\protoast\!$ and $f^*$ are Lipschitz continuous with constants $\Delta_{\X^d}$ and $\Delta_{\cX_N^d}$, respectively;
\item\label{it_LFT3} $|f\!\protoast\!\protoast(x) - f^{**}(x)|\leq  (1+\sqrt{d})\Delta_{\X^d} \distH(\mathbb{S}^d,\cS_K^d) + (1+\sqrt{d})L_{f} \distH(\mathbb{X}^d,\cX_N^d)$ for all $x \in \cX_N^d$;
\item\label{it_LFT4} $f\!\protoast\!\protoast(x)=f(x)$ for all $x \in \X^d$ if $f$ is proper convex or convex extensible.
\end{enumerate}
\end{lemma}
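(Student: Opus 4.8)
The plan is to establish the four parts in order, since each feeds into the next. Part (i) I would split into two inequalities. Because $\cX_N^d \subseteq \X^d$, the discrete maximum ranges over a subset of the continuous domain, so $f^*(s) \le f\protoast(s)$ at once. For the reverse direction I would take an $x^\star \in \X^d$ (a maximizing sequence if the supremum is not attained, exactly as in the proof of Lemma~\ref{lem_contLFT}) that nearly realizes $f\protoast(s)$, choose a grid point $y \in \cX_N^d$ with $\norm{x^\star - y} \le \distH(\X^d,\cX_N^d)$ up to arbitrarily small slack, and use Lipschitz continuity to get
\[ f\protoast(s) - f^*(s) \le \inprod{s}{x^\star - y} + \big(f(y)-f(x^\star)\big) \le (\norm{s}+L_f)\,\norm{x^\star-y}. \]
The coefficient $1+\sqrt{d}$ then comes from the structural property that the dual grid sits in the box $[-L_f,L_f]^d$ -- the slopes of an $L_f$-Lipschitz function are bounded by $L_f$ coordinatewise -- so that $\norm{s}\le\sqrt{d}\,\norm{s}_\infty\le\sqrt{d}\,L_f$. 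Part (ii) is the standard conjugate-Lipschitz estimate: evaluating $f\protoast(s_1)$ at its (near-)maximizer $x_1$ and reusing $x_1$ as a feasible point for $f\protoast(s_2)$ gives $f\protoast(s_1)-f\protoast(s_2)\le\inprod{s_1-s_2}{x_1}\le\norm{s_1-s_2}\,\sup_{x\in\X^d}\norm{x}$, and $\sup_{x\in\X^d}\norm{x}\le\Delta_{\X^d}$ by definition of $\Delta$; the identical computation over $\cX_N^d$ yields the constant $\Delta_{\cX_N^d}$ for $f^*$.

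For part (iii) I would use a triangle inequality that separates a ``discretize the dual space'' error from a ``propagate the primal error'' term. Setting $g:=f\protoast$ and $h:=f^*$, and letting $g^*$ denote the discrete conjugate of the continuous function $g$ taken over the grid $\cS_K^d$, note that $f\protoast\protoast = g\protoast$ and $f^{**}=h^*$, so
\[ |f\protoast\protoast(x)-f^{**}(x)| \le |g\protoast(x)-g^*(x)| + |g^*(x)-h^*(x)|. \]
The first term is bounded by applying part (i) to $g$, which is $\Delta_{\X^d}$-Lipschitz by part (ii), now with primal space $\mathbb{S}^d$ and discretization $\cS_K^d$; this gives $(1+\sqrt{d})\Delta_{\X^d}\distH(\mathbb{S}^d,\cS_K^d)$. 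The second term compares the discrete conjugates over the common grid $\cS_K^d$ of the two functions $g$ and $h$, which by part (i) differ uniformly by at most $(1+\sqrt{d})L_f\distH(\X^d,\cX_N^d)$; Lemma~\ref{lem_contLFT} transports this uniform bound through the conjugation unchanged. Adding the two contributions reproduces the claimed inequality.

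Finally, part (iv) is the Fenchel--Moreau biconjugate theorem, and I expect the convex extensible case to be the main obstacle, precisely because $f\protoast$ is defined through a supremum over the non-convex mixed-integer set $\X^d$. For genuinely proper convex $f$ the identity $f\protoast\protoast=f$ is classical, once $\mathbb{S}^d$ is large enough to contain the relevant subgradients, which Lipschitz continuity guarantees. For the convex extensible case I would extend $f$ to all of $\R^d$ by $+\infty$ outside $\X^d$, calling it $\tilde f$, so that $f\protoast=\tilde f\protoast$ and hence $f\protoast\protoast=\tilde f\protoast\protoast$, where the biconjugate $\tilde f\protoast\protoast$ is the largest lower semicontinuous convex minorant of $\tilde f$. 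The key observation is that the convex extension $\bar f$ from \eqref{eq:convex:extension}, extended by $+\infty$ off $\conv(\X^d)$, is itself a proper lower semicontinuous convex minorant of $\tilde f$ that agrees with it on $\X^d$. Therefore, for every $x\in\X^d$ the value $\tilde f\protoast\protoast(x)$ is squeezed between $\bar f(x)=f(x)$ and $\tilde f(x)=f(x)$, forcing equality. The delicate points to verify are that restricting the dual supremum to the compact set $\mathbb{S}^d$ does not alter the biconjugate (again via the slope bound coming from Lipschitz continuity) and that $\conv(\X^d)$ is closed, so that the extended $\bar f$ is genuinely lower semicontinuous and the minorant property is valid.
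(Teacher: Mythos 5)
Your proposal is correct and follows essentially the same route as the paper's proof: parts (i)--(iii) match line for line (the one-sided inequality from $\cX_N^d\subseteq\X^d$, the nearest-grid-point plus Cauchy--Schwarz estimate with the dual slope bound $\norm{s}\le\sqrt{d}L_f$, the reused-maximizer argument for (ii), and the triangle-inequality decomposition through $f\protoast^{*}$ combined with Lemma~\ref{lem_contLFT} for (iii)). The only divergence is in (iv): the paper writes $f\protoast\protoast$ explicitly as a max--min, gets the lower bound by relaxing the inner minimization from $\X^d$ to $\conv(\X^d)$ and invoking Fenchel--Moreau for $\bar f$, and the upper bound by plugging $x'=x$; you instead extend $f$ by $+\infty$ and squeeze $\tilde f\protoast\protoast$ between $\bar f$ and $\tilde f$ via the largest-lsc-convex-minorant characterization of the biconjugate. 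Both hinge on the same key fact, and the issue you flag --- that the supremum over the compact $\mathbb{S}^d$ rather than all of $\R^d$ must not change the biconjugate, guaranteed by the Lipschitz slope bound --- is left implicit in the paper's appeal to Fenchel--Moreau as well, so you are if anything slightly more careful there.
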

The proof is given in Appendix~\ref{app_pfLemmadiscLFT}. We end the discussion on properties of the LFT by recalling the following statements that are straightforward to verify.
\begin{fact} \label{fact_LFT_basic}
Let $f,g : \X^d \to \R$. Then,  
\begin{enumerate}[(i)]
    \item $f \leq g$ implies $f\protoast \geq g\protoast$; \label{it_reverse}
    \item $g(x):=f(x)+\alpha$ for $\alpha \in \R$ implies $g\!\protoast\!(s)=f\!\protoast\!(s)-\alpha$;\label{it_shift}
\end{enumerate}
\end{fact}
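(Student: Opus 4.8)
The final statement to prove is Fact~\ref{fact_LFT_basic}, which asserts two elementary properties of the Legendre-Fenchel transform: monotonicity reversal under pointwise domination, and additive shift behavior.

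\textbf{Part (i): monotonicity reversal.} The plan is to unfold the definition of the LFT in~\eqref{eq_LFT} and exploit the monotonicity of the supremum. Suppose $f \leq g$ pointwise on $\X^d$, meaning $f(x) \leq g(x)$ for all $x \in \X^d$. Then for every fixed $s \in \mathbb{S}^d$ and every $x \in \X^d$, we have $\langle s, x\rangle - g(x) \leq \langle s, x\rangle - f(x)$, because subtracting a larger quantity yields a smaller value. First I would observe that this pointwise inequality between the two expressions, taken over the common index set $\X^d$, is preserved when passing to the supremum over $x$: taking $\sup_{x \in \X^d}$ on both sides gives
\begin{align*}
g\protoast(s) = \sup_{x \in \X^d}\{\langle s, x\rangle - g(x)\} \leq \sup_{x \in \X^d}\{\langle s, x\rangle - f(x)\} = f\protoast(s).
\end{align*}
Since $s$ was arbitrary, this establishes $f\protoast \geq g\protoast$ everywhere on $\mathbb{S}^d$.

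\textbf{Part (ii): additive shift.} Here I would again return to the definition and simply pull the constant out of the supremum. If $g(x) = f(x) + \alpha$ for a fixed $\alpha \in \R$, then for any $s$,
\begin{align*}
g\protoast(s) = \sup_{x \in \X^d}\{\langle s, x\rangle - f(x) - \alpha\} = \left(\sup_{x \in \X^d}\{\langle s, x\rangle - f(x)\}\right) - \alpha = f\protoast(s) - \alpha,
\end{align*}
where the middle equality holds because $\alpha$ does not depend on the optimization variable $x$ and hence factors out of the supremum as an additive constant.

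I do not anticipate any genuine obstacle, as both claims follow directly from the definition and the order-preserving nature of the supremum; indeed, the statement itself is flagged as ``straightforward to verify.'' The only point requiring minor care is that both arguments rely solely on $\X^d$ being the common domain of $f$ and $g$, with no appeal to convexity, compactness, or the mixed-integer structure of $\X^d$; this is precisely why the properties hold for the general definition of the LFT adopted in~\eqref{eq_LFT}, regardless of whether $\X^d$ is compact. The analogous statements for the discrete LFT in~\eqref{eq_dLFT} would follow by the identical argument with $\sup$ replaced by $\max$ over the finite set $\cX_N^d$.
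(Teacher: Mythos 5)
Your proof is correct and follows the only natural route: unfolding the definition~\eqref{eq_LFT}, using that the supremum preserves pointwise inequalities for part (i), and pulling the additive constant out of the supremum for part (ii). The paper states this fact without proof as ``straightforward to verify,'' and your argument is exactly the verification intended, including the correct observation that no convexity or compactness of $\X^d$ is needed.
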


In a recent paper~\cite{QLFT20}, we introduced a quantum algorithm called \emph{quantum Legendre-Fenchel transform} (QLFT) to efficiently compute the discrete LFT in superposition. We give an intuitive description of the algorithm and refer to~\cite{QLFT20} for more details. Given a discrete dual space $\cS^d_K$, to compute the discrete LFT at every dual point we need to determine the maximizer of~\eqref{eq_dLFT}. If the function $f$ is sufficiently well-behaved, we show in \cite{QLFT20} that every point in primal space $\cX^d_N$ is a maximizer for exactly the same number of dual points. In this case, starting from a superposition of $x \in \cX^d_N$, the computation of $f^*(s)$ for all $s \in \cS^d_K$ amounts to simple relabeling of the indices followed by trivial computations. In reality, the calculations are more involved because computing the relabeling requires some work, and we must be able to relax the assumption that each point in primal space is a maximizer of \eqref{eq_dLFT} for the same number of $s \in \cS^d_K$. It is shown in~\cite[Proposition~6.1]{QLFT20} and \cite[Proposition~6.3]{QLFT20} that any classical algorithm to output the value of the discrete LFT of a $d$-dimensional function at an arbitrary dual point $s$ has time complexity $\tilde{\Omega}(2^d)$, and furthermore, that even sampling $f^*(s)$ for $s$ chosen uniformly at random in $\cS^d_K$ has the same complexity. On the other hand, for a function $f$ with condition number $\kappa$, the running time of the QLFT scales as $\tilde O(\kappa^{d/2})$ --- an exponential advantage when $\kappa = 1$, but this only accounts for the time to prepare the answer in superposition. If we want to output $f^*(s)$ at a given $s \in \cS^d_K$, the quantum algorithm is only quadratically faster than classical.

The quantum algorithm for DP introduced in this paper exploits the quantum speedup of the QLFT. Here, we need a variant of the QLFT algorithm presented in~\cite{QLFT20}, which we describe next. We state the result for univariate functions; using the factorization property of the LFT we can reduce the $d$-dimensional case to $d$ times the one-dimensional scenario. We remark that the quantum speedup of the QLFT comes precisely from exploiting superposition in the $d$-dimensional case: for univariate functions, there also exists an efficient classical algorithm. For a set $\cX_N=\{x_0,\ldots,x_{N-1}\}$, define the discrete gradients as
\begin{align*}
    c_{i} := \frac{f(x_{i+1})-f(x_i)}{x_{i+1}-x_i}  \quad \textnormal{for } i \in [N-1] \, .
\end{align*}
\begin{proposition}[Variant of the QLFT~\cite{QLFT20} for one-dimensional case] \label{prop_QLFT}
Let $f:\X\to \R$ be a function that is convex and has a condition number $\kappa_f<\infty$.
Let $k,n,\ell \in \N$, $N=2^{n}$, $K=2^k$, $\cX_N=\{x_0,\ldots,x_{N-1}\}$ such that $x_{i+1}=\delta_{\mathrm{x}}+x_i$ for some $\delta_{\mathrm{x}}\geq 0$, and $\cS_K=\{s_0,\ldots,s_{K-1}\}$ such that $s_0=c_0$, $s_{K-1}=c_{N-2}$ and $s_{j+1}=\delta_{\mathrm{s}}+s_j$ for some $\delta_{\mathrm{s}}\geq 0$.
There is an algorithm that performs the transformation\footnote{Elements of the vector $x$ that are outside of $[N]$ are irrelevant and thus can be set to an arbitrary value.}
\begin{align} \label{eq_iterative_regular_QLFT}
        \frac{1}{\sqrt{N}} \sum_{i=0}^{N-1} \ket{i}  \ket{f(x_{i-\ell -1}^{i+\ell+1})} \ket{\textnormal{Garbage}(i)} \quad \to \quad \frac{1}{\sqrt{K}} \sum_{j=0}^{K-1} \ket{j} \ket{f^*(s_{j-\ell}^{j+\ell})}\ket{\textnormal{Garbage}(j)} \, .
\end{align}
Combined with amplitude amplification, the expected running time for constant probability of success is $O(\sqrt{\kappa_f} \, \polylog(N,K))$.
\end{proposition}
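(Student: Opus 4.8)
The plan is to realize the map \eqref{eq_iterative_regular_QLFT} by exploiting that, in one dimension, the discrete LFT of a convex function is essentially a relabeling of the grid indices. The guiding fact is that convexity makes the discrete gradients non-decreasing, $c_0 \leq c_1 \leq \cdots \leq c_{N-2}$, so the maximizer of $f^*(s)=\max_{x\in\cX_N}\{sx-f(x)\}$ in \eqref{eq_dLFT} at a dual point $s$ is the primal index $i$ whose subgradient interval $[c_{i-1},c_i]$ contains $s$; once this index is identified, $f^*(s)=s\,x_i-f(x_i)$. Because the dual grid is fixed with $s_0=c_0$, $s_{K-1}=c_{N-2}$ and uniform spacing $\delta_{\mathrm{s}}$, each $s_j$ is assigned a unique primal index $i(j)$ by this interval rule, and the entire transform amounts to transporting amplitude from $\ket{i}$ to the dual indices assigned to $i$, followed by a local arithmetic evaluation of the output values.

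Concretely, first I would compute from the windowed input $\ket{f(x_{i-\ell-1}^{i+\ell+1})}$ the $2\ell+2$ local gradients $c_{i-\ell-1},\ldots,c_{i+\ell}$ by reversible consecutive differencing into an ancilla register. These gradients bracket $2\ell+1$ consecutive subgradient intervals, which is exactly what is needed to identify, for each of the $2\ell+1$ dual output points $s_{j-\ell},\ldots,s_{j+\ell}$, the primal index serving as its maximizer; this explains the input window of radius $\ell+1$ shrinking to an output window of radius $\ell$, since differencing and interval-bracketing each consume one grid point at the boundary. The heart of the construction, following the QLFT of~\cite{QLFT20}, is then a reversible relabeling that moves the amplitude on $\ket{i}$ to the block of dual indices $\{j : c_{i-1}\leq s_j < c_i\}$; afterwards $f^*(s_{j'})=s_{j'}x_{i(j')}-f(x_{i(j')})$ and its window neighbours are evaluated locally from the transported primal data.

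The main obstacle is the relabeling, because the number of dual points assigned to primal index $i$ is $\approx (c_i-c_{i-1})/\delta_{\mathrm{s}}$, which is \emph{not} constant across $i$: strong convexity forces $c_i-c_{i-1}\geq \mu_f\delta_{\mathrm{x}}$ while Lipschitz continuity of the gradient gives $c_i-c_{i-1}\leq L'_f\delta_{\mathrm{x}}$, so the block sizes vary by a factor $\kappa_f=L'_f/\mu_f$. A uniform quantum relabeling is therefore not a clean bijection. The remedy, as in~\cite{QLFT20}, is to embed the non-uniform assignment into a uniform superposition by padding each primal block up to the maximal block size, relabel on this regular structure, and uncompute the padding; the amplitude on each valid dual index is then independent of $i$, so conditioned on success the state is already the uniform dual superposition, and the success probability is bounded below by the ratio of average to maximal block size, i.e.\ by $\Omega(1/\kappa_f)$. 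Amplitude amplification restores constant success probability in $O(\sqrt{\kappa_f})$ rounds, which is exactly the $\sqrt{\kappa_f}$ factor in the claimed running time; the remaining $\polylog(N,K)$ factors come from arithmetic on $O(\polylog(N,K))$-bit registers and from computing the block boundaries.

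It then remains to confirm correctness of the transported values. Lemma~\ref{lem_discreteLFT_approx}\eqref{it_LFT4} guarantees that the discrete transform computed on the grid agrees with the genuine LFT, and the uniform dual spacing together with the interval-assignment rule ensures every $s_j$ is matched to its correct maximizer, so the locally evaluated $s_j x_{i(j)}-f(x_{i(j)})$ indeed equals $f^*(s_j)$. The $\mathrm{Garbage}$ registers absorb the gradient ancillas, the padding flags, and the uncomputation residue, which the statement permits to remain entangled with the index; verifying that the assignment rule is implemented without collisions or gaps is the delicate point, and is where I would expect the bulk of the technical care to lie.
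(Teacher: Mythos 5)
Your proposal follows essentially the same route as the paper's proof: compute the local discrete gradients from the windowed function values, pad each primal index's block of dual points up to the maximal block size $W$ by tensoring with a uniform superposition, postselect on an indicator that the whole window lies in the valid set (success probability $K/(NW)\geq 1/\kappa_f$), relabel to the dual index, and evaluate $f^*(s_j)=s_jx_{i(j)}-f(x_{i(j)})$ locally, with amplitude amplification supplying the $O(\sqrt{\kappa_f})$ factor. The window-radius bookkeeping ($\ell+1\to\ell$) and the identification of the non-uniform block sizes as the source of the $\kappa_f$ dependence also match the paper's argument.
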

The above result follows from a small modification of~\cite[Proof of Theorem~4.7]{QLFT20}. The main difference compared to \cite{QLFT20} is that we simultaneously compute the discrete LFT for $f$ evaluated at multiple consecutive points, while keeping the same probability of success. The proof is given in Appendix~\ref{app_proof_QLFT}. 
The algorithm can be extended to the $d$-dimensional setting, as discussed in~\cite[Section~5]{QLFT20}, with a running time \smash{$O(\kappa_f^{d/2} \, \polylog(N,K))$}.
If the function $f$ is not differentiable but it is convex extensible, we can use $\kappa$ of the convex extension to determine the running time; if it is not convex extensible, then the running time of the QLFT algorithm from Proposition~\ref{prop_QLFT} for dimension $d$ is $O((NW/K)^{d/2} \polylog(N,K))$, where the parameter $W$ is defined in~\eqref{eq_W}.

\section{Dynamic programming via the Legendre-Fenchel transform} \label{sec_classical}
In this section we present an approach to approximately compute the DP operator using the discrete LFT. This approach, inspired by~\cite{peyman}, serves as the starting point to derive the quantum algorithm in Section~\ref{sec_quantum_algo}. 

\subsection{Deterministic setting}
 Let \smash{$J:\Y^{d_{\dr}} \times \cZ^{d_{\ci}}_{N_{\ci}} \to\R$} be convex in $y$ and convex extensible in $z$, let \smash{$J':\cY_{N_{\dr}}^{d_{\dr}} \times \cZ^{d_{\ci}}_{N_{\ci}} \to\R$} be convex extensible such that $J(y,z)=J'(y,z)$ $\forall y\in \cY_{N_{\dr}}^{d_{\dr}}, z\in \cZ^{d_{\ci}}_{N_{\ci}}$.
Consider a discrete dual space $\cS_{K_{\dr}}^{d_{\dr}}\times {\cS'}_{K_{\ci}}^{d_{\ci}}$. Then, we define the conjugate DP operator as
\begin{align} \label{eq_conjDPoperator}
   \mathrm{DP_{\!conj}}[J'](y,z) := g_{\mathrm{x}}(y,z) +  h^*(Ay,Dz) \quad \forall y\in \mathcal{Y}_{N_{\dr}}^{d_{\dr}},z\in \cZ^{d_{\ci}}_{N_{\ci}} \, , 
\end{align}
where $h(s,s'):= g_{\mathrm{u}}\hspace{-1.5mm}\protoast \!(-B^\top s,-C^\top s - E^\top s') + {J'}^*(s,s')$ $\forall s\in\mathcal{S}_{K_{\dr}}^{d_{\dr}},s'\in\mathcal{S'}_{K_{\ci}}^{d_{\ci}}$ as visualized in Figure~\ref{fig_classical_idea} and formally stated in Algorithm~\ref{algo_CDP}.
\begin{figure}[!htb]
    \begin{center}
    \begin{tikzpicture}
    \def \x{6}
    \def \y{2}
    \def \la{0.3}      
    \node at (0,0) {$J'(y, z)$};
    \node at (0,-\y) {${J'}^{*}(s, s')$};
    \node at (\x,0) {$h^{*}(Ay, Dz) $};
    \node at (\x,-\y) {$h(s, s')$};
    \node at (2*\x-3*\la,0) {$\mathrm{DP_{\!conj}}[J'](y, z)$};  
    \draw[thick,->,gray] (0,-\la) -- (0,-\y+\la);  
    \draw[thick,->,gray] (3*\la,-\y) -- (\x-2*\la,-\y);  
    \draw[thick,<-,gray] (\x,-1.5*\la) -- (\x,-\y+\la);  
    \draw[thick,->,gray] (3.5*\la+\x,0) -- (2*\x-7.4*\la,0);   
    \node[gray] at (-2.5*\la,-0.5*\y)  {$\cL_{y,z\to s,s'}$};   
    \node[gray] at (\x/2,-\y+\la)  {$+g_{\mathrm{u}}\hspace{-2.5mm}\protoast \!(-B^\top s, -C^\top s -E^\top s')$};       
    \node[gray] at (\x-3.5*\la,-0.5*\y)  {$\cL_{s,s' \to Ay, Dz}$};     
    \node[gray] at (3*\x/2-2*\la,\la)  {$+g_{\mathrm{x}}(y, z)$};        
    \end{tikzpicture}
    \end{center}
\caption{Graphical visualization of Algorithm~\ref{algo_CDP}. It can be seen that the only operations to compute the conjugate DP operator are the discrete LFT and simple additions.}
\label{fig_classical_idea}
\end{figure}

The conjugate DP operator enjoys many useful properties: (a) it can be easily computed via the discrete LFT, under the assumption that $g_{\mathrm{u}}\hspace{-1.5mm}\protoast \,$ is available; (b) it preserves convex extensibility, i.e., it maps convex extensible functions to convex extensible functions; (c) it approximates the DP operator defined in~\eqref{eq_DP}, where the approximation error can be controlled via the size of the primal and dual discrete spaces. The first property is obvious; the second and third property will be shown in the following.
Indeed, we next show that $\mathrm{DP_{\!conj}}[J'] - g_{\mathrm{x}}$ approximates the biconjugate $\mathrm{DP_{shift}}[J]\!\protoast\!\protoast$, which is the convexification of \smash{$\mathrm{DP_{shift}}[J]$}. Via~\eqref{eq_DP_tilde}, the operator $\mathrm{DP_{\!conj}}[J']$ can then be interpreted as a convex approximation to $\mathrm{DP}[J]$. The definition of $J$ and $J'$ in the statement below stems from our goal to accommodate the case in which we are given a function $J$ defined on a continuous or mixed-integer domain, and we discretize the domain, thereby having access to a function $J'$ defined only on the discretization.

\begin{algorithm}[!htb]
\caption{Calculation of the conjugate DP operator~\eqref{eq_conjDPoperator}}
\label{algo_CDP}
\begin{algorithmic}
\STATE \textbf{Input:} $\cY_{N_{\dr}}^{d_{\dr}}=\{y_0,\ldots,y_{N_{\dr}-1}\}$, $\cZ^{d_{\ci}}_{N_{\ci}}=\{z_0,\ldots,z_{N_{\ci}-1}\}$, $\cS_{K_{\dr}}^{d_{\dr}}=\{s_0,\ldots,s_{K_{\dr}-1}\}$, ${\cS'}_{K_{\ci}}^{d_{\ci}}=\{s'_0,\ldots,s'_{K_{\ci}-1}\}$, $J'(\cdot)$, $g_{\mathrm{x}}(\cdot)$, $g_{\mathrm{u}}\hspace{-2.5mm}\protoast \!(\cdot)$ \vspace{2mm} \\
\item \textbf{Output:} $\mathrm{DP_{\!conj}}[J'](y,z)$ $\forall y\in \mathcal{Y}_{N_{\dr}}^{d_{\dr}},~z\in \cZ^{d_{\ci}}_{N_{\ci}}$
\begin{enumerate}
\item Compute ${J'}^*(s,s') \quad \forall s\in\mathcal{S}_{K_{\dr}}^{d_{\dr}},~s'\in\mathcal{S'}_{K_{\ci}}^{d_{\ci}}$;
\item Let $h(s,s')= g_{\mathrm{u}}\hspace{-2.5mm}\protoast \!(-B^\top s,-C^\top s - E^\top s') + {J'}^*(s,s') \quad \forall s\in\mathcal{S}_{K_{\dr}}^{d_{\dr}},~s'\in\mathcal{S'}_{K_{\ci}}^{d_{\ci}}$;
\item Compute $h^*(Ay,Dz) \quad \forall y\in \mathcal{Y}_{N_{\dr}}^{d_{\dr}},~z\in \cZ^{d_{\ci}}_{N_{\ci}}$;
\item Let $\mathrm{DP_{\!conj}}[J'](y,z) := g_{\mathrm{x}}(y,z) +  h^*(Ay,Dz) \quad \forall y\in \mathcal{Y}_{N_{\dr}}^{d_{\dr}},~z\in \cZ^{d_{\ci}}_{N_{\ci}}$; 
\end{enumerate}
Output $\mathrm{DP_{\!conj}}[J'](y,z)$
\end{algorithmic}
\end{algorithm}

\begin{theorem}[Properties of $\mathrm{DP_{\!conj}}$] \label{thm_classical_error}
Consider a DP problem satisfying Assumptions~\ref{ass:modelling} and~\ref{ass:feasibility}.
Let \smash{$J:\Y^{d_{\dr}} \times \cZ^{d_{\ci}}_{N_{\ci}} \to\R$} be convex in $y$ and convex extensible in $z$, let \smash{$J':\cY_{N_{\dr}}^{d_{\dr}} \times \cZ^{d_{\ci}}_{N_{\ci}} \to\R$} be convex extensible such that $J(y,z)=J'(y,z)$ $\forall y\in \cY_{N_{\dr}}^{d_{\dr}}, z\in \cZ^{d_{\ci}}_{N_{\ci}}$, and let $\cS_{K_{\dr}}^{d_{\dr}}\times {\cS'}_{K_{\ci}}^{d_{\ci}}=:\bar\cS^d_K$ be discrete dual spaces. Then,
\begin{enumerate}[(i)]
\item\label{it_convexExtensible} $\mathrm{DP_{\!conj}}[J']$ is convex extensible, i.e., the conjugate DP operator preserves convex extensibility;
\item\label{it_Lipschitz} $\mathrm{DP_{\!conj}}[J']$ is Lipschitz continuous with constant $\sqrt{d}L_{J'}\max\{\norm{A}_{\infty},\norm{D}_{\infty}\} + L_{g_{\mathrm{x}}}$; 
\item\label{it_errorBound} for all $y\in \cY_{N_{\dr}}^{d_{\dr}}, z\in \cZ^{d_{\ci}}_{N_{\ci}}$ we have    
    \begin{align} \label{eq_errorBound}
	 |\mathrm{{DP}_{\!conj}}[J'](y,z) - \mathrm{DP_{shift}}[J]\!\protoast\!\protoast(y,z) - g_{\mathrm{x}}(y,z) | \leq E_1 + E_2\, ,
    \end{align}
with error terms $E_1  := (1+\sqrt{d}) L_J\, \distH(\Y^{d_{\dr}},\cY_{N_{\dr}}^{d_\dr})$ and $E_2 := (1+\sqrt{d})(\tau +\eta)\distH(\mathbb{S}^d,\bar\cS_K^d)$, where the constants $\tau$ and $\eta$ are defined in Assumption~\ref{ass:modelling}, and $\mathbb{S}^d \subseteq \R^d$ is a compact convex space.
\end{enumerate}
\end{theorem}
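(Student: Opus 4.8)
The plan is to handle the three claims separately, with \eqref{it_convexExtensible} and \eqref{it_Lipschitz} following quickly from the max-of-affine structure of a discrete LFT and \eqref{it_errorBound} carrying the real content. For \eqref{it_convexExtensible}, I would note that $h^*(\alpha,\beta)=\max_{(s,s')\in\bar\cS^d_K}\{\langle\alpha,s\rangle+\langle\beta,s'\rangle-h(s,s')\}$ is a pointwise maximum over a finite set of affine functions, and that evaluating at $(\alpha,\beta)=(Ay,Dz)$ keeps these affine in $(y,z)$; a pointwise maximum of affine functions is convex and serves as its own convex extension on $\cY^{d_\dr}_{N_\dr}\times\cZ^{d_\ci}_{N_\ci}$. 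Since $g_{\mathrm{x}}$ is convex extensible by Assumption~\ref{ass:modelling}\eqref{item:ass:convexity} and a sum of convex extensible functions is convex extensible (add the extensions), $\mathrm{DP_{\!conj}}[J']=g_{\mathrm{x}}+h^*(A\cdot,D\cdot)$ is convex extensible. For \eqref{it_Lipschitz}, I would use the same representation together with Lemma~\ref{lem_discreteLFT_approx}\eqref{it_LFT2}, which gives that $h^*$ is Lipschitz in its argument with constant $\Delta_{\bar\cS^d_K}$; the key link is that the dual grid points are discrete gradients of $J'$ and hence bounded coordinatewise by $L_{J'}$, so $\Delta_{\bar\cS^d_K}\le\sqrt{d}\,L_{J'}$. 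Composing with the linear map $(y,z)\mapsto(Ay,Dz)$ contributes the factor $\max\{\norm{A}_\infty,\norm{D}_\infty\}$ and adding $g_{\mathrm{x}}$ contributes $L_{g_{\mathrm{x}}}$, yielding the stated constant.

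The heart of the proof is \eqref{it_errorBound}. Since $\mathrm{DP_{\!conj}}[J']-g_{\mathrm{x}}=h^*(Ay,Dz)$, it suffices to bound $|h^*(Ay,Dz)-\mathrm{DP_{shift}}[J]\protoast\protoast(y,z)|$. I would first establish the exact \emph{continuous} identity $\hat h\protoast(A'x)=\mathrm{DP_{shift}}[J]\protoast\protoast(x)$, where $\hat h(\sigma):=g_{\mathrm{u}}\protoast(-B'^\top\sigma)+J\protoast(\sigma)$ uses the continuous conjugate of $J$ in place of the discrete ${J'}^*$, with the shorthand $-B'^\top\sigma=(-B^\top s,-C^\top s-E^\top s')$ and $A'x=(Ay,Dz)$ read off from the block forms of $A'$ and $B'$. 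This identity is pure Fenchel duality: computing $\mathrm{DP_{shift}}[J]\protoast(A'^\top\rho)$ and substituting $m=A'x'+B'u$ separates the supremum into $\sup_m\{\langle\rho,m\rangle-J(m)\}$ and $\sup_u\{-\langle B'^\top\rho,u\rangle-g_{\mathrm{u}}(u)\}$, giving $\mathrm{DP_{shift}}[J]\protoast(A'^\top\rho)=\hat h(\rho)$; conjugating once more and evaluating at $A'x$ recovers the biconjugate, where I invoke Lemma~\ref{lem_discreteLFT_approx}\eqref{it_LFT4} for $J\protoast\protoast=J$ and $g_{\mathrm{u}}\protoast\protoast=g_{\mathrm{u}}$.

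With the continuous identity in hand, the bound $E_1+E_2$ becomes a two-step discretization estimate via the triangle inequality applied to $|h^*(A'x)-\hat h\protoast(A'x)|\le|h^*(A'x)-\hat h^*(A'x)|+|\hat h^*(A'x)-\hat h\protoast(A'x)|$. The inner error comes from replacing $J\protoast$ by ${J'}^*$: Lemma~\ref{lem_discreteLFT_approx}\eqref{it_LFT1} gives $|J\protoast-{J'}^*|\le(1+\sqrt d)L_J\distH(\Y^{d_\dr},\cY^{d_\dr}_{N_\dr})=E_1$ (the integer block matches exactly, so only the continuous block contributes to the Hausdorff distance), and since this is a uniform bound on $h-\hat h$ it propagates unchanged through the outer LFT by Lemma~\ref{lem_contLFT}. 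The outer error comes from replacing the continuous transform $\hat h\protoast$ by its discrete counterpart $\hat h^*$: Lemma~\ref{lem_discreteLFT_approx}\eqref{it_LFT1} bounds this by $(1+\sqrt d)L_{\hat h}\distH(\mathbb S^d,\bar\cS^d_K)$, and it remains to verify $L_{\hat h}\le\tau+\eta$, which follows from Lemma~\ref{lem_discreteLFT_approx}\eqref{it_LFT2} applied to each summand of $\hat h$ (the transform of $J$ is Lipschitz with a constant governed by $\tau$ and that of $g_{\mathrm{u}}$ by $\eta$), giving $E_2$.

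The main obstacle I anticipate is the continuous identity. The change of variables silently uses invertibility of $A'$: surjectivity is needed so that $\xi=A'x'$ ranges over all of $\R^d$ in the inner supremum, and injectivity so that $A'^\top\rho$ ranges over all dual directions in the outer supremum; without both one obtains only one inequality. I would therefore either invoke invertibility of $A$ and $D$, or prove the two inequalities separately and use Assumption~\ref{ass:feasibility} to ensure the relevant minimizers remain in-domain, so that the restricted suprema still attain the full biconjugate value. Care is also needed that every conjugate is taken over the correct mixed-integer versus continuous domain; this is precisely where the convex-extensibility hypotheses on $J$, $g_{\mathrm{x}}$, and $g_{\mathrm{u}}$ together with Lemma~\ref{lem_discreteLFT_approx}\eqref{it_LFT4} are essential, since they let us pass freely between the discrete functions and their convex extensions when applying biconjugation.
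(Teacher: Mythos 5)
Your proposal is correct and follows essentially the same route as the paper's proof: max-of-affine plus $g_{\mathrm{x}}$ for convex extensibility, the dual-space bound $\Delta_{\bar\cS^d_K}\le\sqrt{d}L_{J'}$ composed with the linear dynamics for the Lipschitz constant, and for the error bound the exact continuous identity (the paper's Lemma~\ref{lem_DP_conj_cont}) followed by the two discretization estimates $E_1$ (primal, propagated through the outer transform) and $E_2$ (dual, via $L_{\hat h}\le\tau+\eta$). The change-of-variables obstacle you flag is real, and the paper resolves it exactly as you suggest: the primal substitution $x'=A'\tilde x+B'u$ is justified by Assumption~\ref{ass:feasibility}, and the dual-side surjectivity issue is sidestepped by \emph{defining} $\mathbb{S}^d$ as the image of the auxiliary dual space $\tilde{\mathbb{S}}^d$ under $A'^{\top}$.
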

The proof of Theorem~\ref{thm_classical_error} is given in Appendix~\ref{app_pf_thm_error_bound}.
The error bounds of Theorem~\ref{thm_classical_error} require the Lipschitz continuity of the function $J$. 
In the following remark we argue that this naturally holds.
\begin{remark}[Lipschitz continuity] \label{rmk_Lipschitz}
The $\mathrm{DP}$ operator~\eqref{eq_DP} preserves Lipschitz continuity. One can easily show that
\begin{align*}
\mathrm{DP}[J](x) - \mathrm{DP}[J](x') &\leq g_{\mathrm{x}}(x) -  g_{\mathrm{x}}(x') + J(A'x + B'u^\star) - J(A'x'+B'u^\star) \\
&\leq L_{g_{\mathrm{x}}} \|x-x'\| + L_J \|A'x - A'x'\| \\
&\leq (L_{g_{\mathrm{x}}} + L_J \norm{A'}_{\infty})\|x-x'\| \, .
\end{align*}
Hence, if $J$ is $L_J$-Lipschitz continuous, then $\mathrm{DP}[J]$ is $(L_{g_{\mathrm{x}}} + L_J \norm{A'}_{\infty})$-Lipschitz continuous. Theorem~\ref{thm_classical_error} shows that the conjugate DP operator also preserves Lipschitz continuity. 
\end{remark}
\begin{remark}[Scaling of discretization error] \label{rmk_scaling}
When considering the approximation error of Theorem~\ref{thm_classical_error}, it is natural to ask how the two error terms $E_1$ and $E_2$ scale in terms of the discretization granularity, i.e, in $N_{\dr}$ and $K$. Note that in case of a regular discretization scheme, it can be seen that $\distH(\Y^{d_{\dr}},\cY_{N_{\dr}}^{d_{\dr}})$ scales proportionally to $(1/N_{\dr})^{1/d_{\dr}}$, and analogously $\distH(\mathbb{S}^{d},\cS_{K}^{d})$ scales proportionally to $(1/K)^{1/d}$. In other words, provided that $J$ is Lipschitz continuous (Remark~\ref{rmk_Lipschitz}), we see that 
\begin{align*}
E_1 \sim (1/N_{\dr})^{1/d_{\dr}} \qquad \textnormal{and} \qquad E_2 \sim (1/K)^{1/d} \, ,
\end{align*}
which implies $\eps_{\mathrm{disc}}:= E_1 + E_2 \leq \eps$ when choosing
\begin{align} \label{eq_scaling_param}
N_{\dr} \sim (1/\eps)^{d_{\dr}} \qquad \textnormal{and} \qquad K \sim (1/\eps)^d \, .
\end{align}
\end{remark}

Theorem~\ref{thm_classical_error} proves that the scheme from Figure~\ref{fig_classical_idea} is able to approximate a single step of the backward recursion for DP. The scheme also allows us to approximately compute the optimal policy, as shown in the next result. Recall that in the scheme depicted in Figure~\ref{fig_classical_idea} we have
\begin{align} \label{eq_def_sStar}
   h^*(A'x) = \max_{s \in \bar\cS^d_K} \{ \inprod{A'x}{s} - J^*(s) - g_{\mathrm{u}}\hspace{-2.5mm}\protoast \!(-{B'}^\top s)  \} \, .
\end{align}
\begin{lemma}[Optimal policy]\label{lem_optimal_policy}
Let $J$ and $J'$ be as in Theorem~\ref{thm_classical_error} and consider a DP satisfying Assumptions~\ref{ass:modelling} and~\ref{ass:feasibility}, for which the shifted DP operator \smash{$\mathrm{\widetilde{DP}}$} preserves convexity.
Let $\cX_N^d, \cS_K^d$ be primal and dual spaces such that, using Theorem~\ref{thm_classical_error}, we have $|\mathrm{DP}[J](x)-\mathrm{DP}_{\mathrm{conj}}[J'](x)|\leq \eps$ for all $x\in\cX_N^d$ and $\eps>0$. For any $x\in\cX_N^d$, let
\begin{align}
 \hat \pi(x) = \arg\min_{u \in \U^c} \Big\{g_{\mathrm{u}}(u) + \inprod{u}{{B'}^\top s_x^{\star}}  \Big \} \, ,    \label{eq_optPolicy_via_LFT_robust}
\end{align}
where $s_x^{\star}$ denotes the optimizer in~\eqref{eq_def_sStar}. Then, $\norm{\pi^{\star}(x) - \hat \pi(x)} \leq \sqrt{4\eps/\mu_{ g_{\mathrm{u}}}}$, where $\pi^{\star}(x)$ is the optimal action.
\end{lemma}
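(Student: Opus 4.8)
The plan is to convert the assumed $\eps$-closeness of the \emph{values} $\mathrm{DP}[J]$ and $\mathrm{DP}_{\mathrm{conj}}[J']$ into closeness of the \emph{minimizers} by exploiting strong convexity of the inner problem over $u$. Write $\phi(u):=\bar g_{\mathrm{u}}(u)+\bar J(A'x+B'u)$, so that $\pi^\star(x)=\arg\min_{u\in\U^c}\phi(u)$ and $\mathrm{DP_{shift}}[J](x)=\phi(\pi^\star(x))$ by~\eqref{eq_DP_tilde}. Since $\bar g_{\mathrm{u}}$ is $\mu_{g_{\mathrm{u}}}$-strongly convex and $u\mapsto \bar J(A'x+B'u)$ is convex (the hypothesis that $\mathrm{\widetilde{DP}}$ preserves convexity guarantees $\bar J$ is convex), $\phi$ is $\mu_{g_{\mathrm{u}}}$-strongly convex.

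Next I introduce the linearized objective obtained by replacing $\bar J$ with the affine minorant read off from the dual optimizer $s_x^\star$ of~\eqref{eq_def_sStar},
\begin{equation*}
\tilde\psi(u):=\bar g_{\mathrm{u}}(u)+\inprod{s_x^\star}{A'x+B'u}-{J'}^*(s_x^\star).
\end{equation*}
Two facts drive the argument. First, $\tilde\psi$ differs from the objective in~\eqref{eq_optPolicy_via_LFT_robust} only by the additive constant $\inprod{s_x^\star}{A'x}-{J'}^*(s_x^\star)$, so it is minimized at $\hat\pi(x)$; moreover, using $g_{\mathrm{u}}\!\protoast\!(-{B'}^\top s_x^\star)=-\min_u\{\bar g_{\mathrm{u}}(u)+\inprod{{B'}^\top s_x^\star}{u}\}$, a direct computation gives $\min_u\tilde\psi(u)=h^*(A'x)$, i.e.\ $\tilde\psi(\hat\pi(x))=\mathrm{DP}_{\mathrm{conj}}[J'](x)-g_{\mathrm{x}}(x)$. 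Second, $\tilde\psi\le\phi$ pointwise. This last inequality is the key technical point and the only place the discreteness of the transform enters: I must verify the Fenchel--Young bound $\inprod{s_x^\star}{w}-{J'}^*(s_x^\star)\le\bar J(w)$ for every $w\in\conv(\cX_N^d)$, even though ${J'}^*$ is the \emph{discrete} conjugate. This follows from the definition~\eqref{eq:convex:extension} of the convex extension by inserting the feasible pair $p=s_x^\star$, $\alpha=-{J'}^*(s_x^\star)$, whose admissibility $\inprod{s_x^\star}{y}-{J'}^*(s_x^\star)\le J'(y)$ for all $y\in\cX_N^d$ is precisely the definition of ${J'}^*(s_x^\star)$. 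Since $\tilde\psi$ and $\phi$ share the strongly convex summand $\bar g_{\mathrm{u}}$, $\tilde\psi$ is itself $\mu_{g_{\mathrm{u}}}$-strongly convex.

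With these in hand the conclusion is short. The quadratic growth bound for the $\mu_{g_{\mathrm{u}}}$-strongly convex $\tilde\psi$ around its minimizer $\hat\pi(x)$ gives $\tfrac{\mu_{g_{\mathrm{u}}}}{2}\norm{\pi^\star(x)-\hat\pi(x)}^2\le \tilde\psi(\pi^\star(x))-\tilde\psi(\hat\pi(x))$. Bounding the right-hand side with $\tilde\psi(\pi^\star(x))\le\phi(\pi^\star(x))=\mathrm{DP_{shift}}[J](x)$ and $\tilde\psi(\hat\pi(x))=h^*(A'x)$, the difference telescopes, after adding and subtracting $g_{\mathrm{x}}(x)$, to $\mathrm{DP}[J](x)-\mathrm{DP}_{\mathrm{conj}}[J'](x)$, which is at most $\eps$ by hypothesis. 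Rearranging yields $\norm{\pi^\star(x)-\hat\pi(x)}\le\sqrt{2\eps/\mu_{g_{\mathrm{u}}}}$, which in particular implies the stated bound $\sqrt{4\eps/\mu_{g_{\mathrm{u}}}}$.

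I expect the main obstacle to be exactly the pointwise inequality $\tilde\psi\le\phi$ through discrete Fenchel--Young, together with the care needed to ensure both $\pi^\star(x)$ and $\hat\pi(x)$ are feasible and that the strong-convexity growth inequality is applied over a convex domain (the convex extension / $\conv(\cX_N^d)$), as furnished by Assumption~\ref{ass:feasibility} and the convexity-preservation hypothesis. Everything else is bookkeeping with the conjugates assembled in Figure~\ref{fig_classical_idea}.
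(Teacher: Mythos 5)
Your overall strategy mirrors the paper's: both arguments reduce to comparing the true objective $\phi(u)=g_{\mathrm{u}}(u)+J(A'x+B'u)$ with the linearized surrogate $\tilde\psi(u)=g_{\mathrm{u}}(u)+\inprod{s_x^\star}{A'x+B'u}-{J'}^*(s_x^\star)$, whose minimizer is $\hat\pi(x)$ and whose minimum equals $h^*(A'x)$, and then invoke $\mu_{g_{\mathrm{u}}}$-strong convexity. (The paper phrases this as a proof by contradiction using the first-order condition $\nabla g_{\mathrm{u}}(\hat\pi(x))=-{B'}^\top s_x^\star$, but that difference is cosmetic.)

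The genuine gap is your claimed pointwise inequality $\tilde\psi\le\phi$. The discrete Fenchel--Young inequality $\inprod{s_x^\star}{w}-{J'}^*(s_x^\star)\le J'(w)$ holds only at grid points $w\in\cX_N^d$, because ${J'}^*$ is a maximum over the grid alone. Inserting the pair $(p,\alpha)=(s_x^\star,-{J'}^*(s_x^\star))$ into~\eqref{eq:convex:extension} bounds the affine function by $\bar J'(w)$, the convex extension of the \emph{grid values} of $J'$, not by $J(w)$; and $\bar J'$ can strictly exceed $J$ between grid points (take $J(w)=w^2$ with grid $\{-1,1\}$: then $\bar J'(0)=1>0=J(0)$). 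Since $A'x+B'\pi^\star(x)$ is in general not a grid point, the inequality $\tilde\psi(\pi^\star(x))\le\phi(\pi^\star(x))$ can fail. The paper repairs precisely this step in~\eqref{eq_step2_ds} via Lemma~\ref{lem_discreteLFT_approx}~\eqref{it_LFT3}: the Fenchel--Young bound with the discrete conjugate holds only up to an additive discretization error, which is $\le\eps$ under the lemma's hypotheses. Propagating this extra $\eps$ through your quadratic-growth estimate gives $\tfrac{\mu_{g_{\mathrm{u}}}}{2}\norm{\pi^\star(x)-\hat\pi(x)}^2\le 2\eps$, i.e.\ exactly the stated $\sqrt{4\eps/\mu_{g_{\mathrm{u}}}}$; the fact that you obtained the strictly stronger $\sqrt{2\eps/\mu_{g_{\mathrm{u}}}}$ is the symptom of the missing error term. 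With that correction your direct argument goes through and coincides with the paper's.
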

The proof is given in Appendix~\ref{app_pf_optimal_policy}. 
Using the Lipschitz continuity of $J$ and $g_{\mathrm{u}}$ we can quantify the quality of the value obtained by taking the approximately optimal decision $\hat \pi(x)$ as
\begin{align*}
|J\big(A'x + B' \hat \pi(x)\big) + g_{\mathrm{u}}\big(\hat \pi(x)\big) + g_{\mathrm{x}}(x) - \mathrm{DP}[J](x)| 
    &\leq L_{J} \norm{B'\big(\hat \pi(x)-\pi^\star(x)\big)} + L_{g_{\mathrm{u}}} \norm{\hat \pi(x)-\pi^\star(x)}\\
    &\leq (L_{J} \norm{B'}_{\infty} + L_{g_{\mathrm{u}}})\sqrt{4\eps/\mu_{g_{\mathrm{u}}}} \, ,
\end{align*}
where the final step uses Lemma~\ref{lem_optimal_policy}.

At this point we have all the necessary components to describe the LFT-based classical algorithm to solve convex DP problems.
\begin{corollary} \label{cor_cDP_optimal}
Given the setting as in Lemma~\ref{lem_optimal_policy}. Let ${\mathrm{DP}^T_{\!\mathrm{conj}}}[J']$ be the function obtained after applying $T$ times Algorithm~\ref{algo_CDP}. Then, for every \smash{$y\in \cY_{N_{\dr}}^{d_{\dr}}, z\in \cZ^{d_{\ci}}_{N_{\ci}}$}
  \begin{align} 
	 |{\mathrm{DP}^T_{\!\mathrm{conj}}}[J'](y,z) - \mathrm{DP}^T[J](y,z)| \leq T(E_1 + E_2)\, ,
  \end{align}
where $E_1$ and $E_2$ are defined as in Theorem~\ref{thm_classical_error}.
\end{corollary}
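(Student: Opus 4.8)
The plan is to prove the error bound by induction on the number of applications of the conjugate DP operator, leveraging Theorem~\ref{thm_classical_error} as the single-step estimate together with the stability of the LFT under uniform perturbations (Lemma~\ref{lem_contLFT}). The key conceptual point is that $\mathrm{DP}^T$ is a composition of the single-step operators, and at each stage we incur (a) a fresh single-step discretization error of magnitude at most $E_1 + E_2$, plus (b) the error already accumulated in the input function from previous stages. The heart of the argument is showing that the single-step operator does not \emph{amplify} the accumulated error; i.e., if two input functions differ by at most $\delta$ uniformly, then their images under $\mathrm{DP_{\!conj}}$ also differ by at most $\delta$.

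First I would set up the induction. Write $\hat J_t := \mathrm{DP}^{T-t}_{\!\mathrm{conj}}[J']$ and $J_t := \mathrm{DP}^{T-t}[J]$, so the claim is $|\hat J_t(y,z) - J_t(y,z)| \le (T-t)(E_1+E_2)$ on the discretization, with the base case $\hat J_T = J_T = J'$ (error $0$) holding by the hypotheses on $J$ and $J'$. For the inductive step, suppose $|\hat J_{t+1}(x) - J_{t+1}(x)| \le \delta$ for all $x$ in the discretized domain, where $\delta = (T-t-1)(E_1+E_2)$. By the triangle inequality,
\begin{align*}
|\hat J_t(x) - J_t(x)| \le |\mathrm{DP_{\!conj}}[\hat J_{t+1}](x) - \mathrm{DP_{\!conj}}[J_{t+1}](x)| + |\mathrm{DP_{\!conj}}[J_{t+1}](x) - \mathrm{DP}[J_{t+1}](x)|.
\end{align*}
The second term is bounded by $E_1 + E_2$ directly from Theorem~\ref{thm_classical_error}~\eqref{it_errorBound}, using that the biconjugate equals the original on the convexity-preserving trajectory (so that $\mathrm{DP_{shift}}[J_{t+1}]\!\protoast\!\protoast + g_{\mathrm{x}}$ coincides with $\mathrm{DP}[J_{t+1}]$ via~\eqref{eq_DP_tilde} and Lemma~\ref{lem_discreteLFT_approx}~\eqref{it_LFT4}). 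The first term is where the non-expansiveness must be established.

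To control the first term, I would track the uniform perturbation $\delta$ through the three operations that constitute $\mathrm{DP_{\!conj}}$ (as depicted in Figure~\ref{fig_classical_idea}): the additive shift by $g_{\mathrm{x}}$ cancels in the difference and contributes nothing; the two discrete LFT steps each preserve the uniform gap by Lemma~\ref{lem_contLFT} (the discrete LFT of two functions that are uniformly $\delta$-close stays $\delta$-close); and the additive $g_{\mathrm{u}}\hspace{-1.5mm}\protoast$ term is identical for both inputs and hence also cancels. Concretely, since $\hat J_{t+1}$ and $J_{t+1}$ differ by at most $\delta$, so do their conjugates ${\hat J_{t+1}}^*$ and ${J_{t+1}}^*$; adding the common term $g_{\mathrm{u}}\hspace{-1.5mm}\protoast$ preserves this gap in $h$; conjugating again preserves it by Lemma~\ref{lem_contLFT}; and the final $+g_{\mathrm{x}}$ cancels. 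Thus the first term is bounded by $\delta$, giving $|\hat J_t(x) - J_t(x)| \le \delta + (E_1 + E_2) = (T-t)(E_1+E_2)$, which closes the induction and yields the claim at $t=0$.

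The main obstacle I anticipate is a \emph{subtle consistency issue in the hypotheses of Theorem~\ref{thm_classical_error}}: that theorem requires the \emph{input} function to $\mathrm{DP_{\!conj}}$ to be convex in $y$ and convex extensible in $z$, and it compares against the biconjugate $\mathrm{DP_{shift}}[J]\!\protoast\!\protoast$ rather than $\mathrm{DP_{shift}}[J]$ itself. To chain the bound across $T$ stages I must verify that each intermediate function $\hat J_t$ remains convex extensible (guaranteed by Theorem~\ref{thm_classical_error}~\eqref{it_convexExtensible}) and, crucially, that the \emph{true} iterate $J_t = \mathrm{DP}^{T-t}[J]$ is genuinely convex, so that the biconjugate in~\eqref{eq_errorBound} collapses to the function itself via Lemma~\ref{lem_discreteLFT_approx}~\eqref{it_LFT4}. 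This is exactly why Lemma~\ref{lem_optimal_policy} (and hence Corollary~\ref{cor_cDP_optimal}) assumes that the shifted DP operator preserves convexity: without that hypothesis the biconjugate need not equal $\mathrm{DP}[J_{t+1}]$ and the per-step error could not be identified with $E_1 + E_2$. I would therefore state explicitly that convexity preservation propagates the required structural hypotheses along the entire recursion, making Theorem~\ref{thm_classical_error} applicable at every stage. A secondary technical care-point is ensuring that the Lipschitz constants $L_J$ entering $E_1$ do not grow across iterations — but Remark~\ref{rmk_Lipschitz} shows Lipschitz continuity is preserved with a controlled constant, so $E_1$ and $E_2$ can be taken as uniform (stage-independent) bounds, which is what lets the total error be simply $T(E_1+E_2)$ rather than a sum of growing terms.
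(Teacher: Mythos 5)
Your proposal is correct and follows essentially the same route as the paper's proof: induction on the backward recursion, with Theorem~\ref{thm_classical_error} supplying the per-step error and the convexity-preservation hypothesis collapsing the biconjugate so that the single-step discrepancy is exactly $E_1+E_2$. You additionally spell out the non-expansiveness of $\mathrm{DP_{\!conj}}$ under uniform perturbations via Lemma~\ref{lem_contLFT}, a step the paper's terse argument leaves implicit, which is a welcome clarification rather than a different approach.
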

\begin{proof}
  The proof is by induction on the backward recursion. The first step of the induction is given by Theorem~\ref{thm_classical_error} plus~\eqref{eq_DP_tilde}. For the induction step, after each application of $\mathrm{DP_{\!conj}}$ we obtain a convex extensible approximation of the shifted DP operator, and accumulate an error upper bounded by $E_1 + E_2$. Since the shifted DP operator preserves convexity by assumption, $\mathrm{DP_{shift}}[J] = \mathrm{DP_{shift}}[J]\!\protoast\!\protoast$. The induction step is thus proven.
\end{proof}

Algorithm~\ref{algo_CDP} and Figure~\ref{fig_classical_idea} show that the basic operation to compute the conjugate DP operator is the discrete LFT. For the generalization to the quantum case in the next section we need to understand how the conjugate DP operator changes the condition number. This is done with the following lemma, which is proven in Appendix~\ref{app_proof_lemma_condNumber}.
\begin{lemma}[Condition number of value function] \label{lem_condNumber_valueFunction}
Consider a DP problem satisfying Assumptions~\ref{ass:modelling} and~\ref{ass:feasibility} and the iteration $J'_t(x):=\mathrm{DP_{\!conj}}[J'_{t+1}](x)$ for $t=T-1,\ldots,0$.
Then,
\begin{align} \label{eq_condNumber_ineq}
    \kappa_{J'_t} 
    \leq \phi(t,T,L'_{g_{\mathrm{x}}},\mu_{g_{\mathrm{x}}},L'_{g_{\mathrm{u}}},\mu_{g_{\mathrm{u}}},L'_{J_T},\mu_{ J_T}) \, ,
\end{align}
where the function $\phi$ is defined in~\eqref{eq_phi_def} in the proof.
\end{lemma}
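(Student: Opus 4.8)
The plan is to prove Lemma~\ref{lem_condNumber_valueFunction} by tracking how the condition number evolves under a single application of $\mathrm{DP_{\!conj}}$, and then unrolling the recursion over the $T$ stages. The key observation is that, by Figure~\ref{fig_classical_idea} and Algorithm~\ref{algo_CDP}, the conjugate DP operator is built entirely out of a small number of elementary operations whose effect on the Lipschitz constant of the gradient ($L'$) and the strong convexity parameter ($\mu$) can be analyzed one at a time: the LFT $\cL_{y,z\to s,s'}$, the addition of $g_{\mathrm{u}}\!\protoast$ (precomposed with the linear maps $-B^\top$, $-C^\top s - E^\top s'$), a second LFT $\cL_{s,s'\to Ay,Dz}$, and finally the addition of $g_{\mathrm{x}}$. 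So first I would establish the single-step rules, using two standard facts from convex analysis: (a) under the Legendre-Fenchel transform the strong convexity parameter and the gradient-Lipschitz constant are \emph{inverted and swapped}, i.e. $L'_{f^\ast} = 1/\mu_f$ and $\mu_{f^\ast} = 1/L'_f$, so that $\kappa_{f^\ast}=\kappa_f$; and (b) for a sum $f+g$ one has $\mu_{f+g}\ge \mu_f+\mu_g$ and $L'_{f+g}\le L'_f + L'_g$. I would also record how a linear precomposition $f\mapsto f(\cdot\, M)$ rescales these constants by factors controlled by $\norm{M}$ and the smallest singular value of $M$.

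Concretely, the single-step bound is obtained by composing these rules along the four arrows in Figure~\ref{fig_classical_idea}. Starting from $J'$ with parameters $(L'_{J'_{t+1}},\mu_{J'_{t+1}})$, the first LFT produces ${J'}^*$ with the reciprocal constants; adding $g_{\mathrm{u}}\!\protoast$ (whose own constants are the reciprocals of those of $g_{\mathrm{u}}$, possibly rescaled by the $B,C,E$ matrices) affects $\mu$ and $L'$ additively to form $h$; the second LFT inverts again to give $h^\ast$; and finally adding $g_{\mathrm{x}}$ contributes additively once more, with the linear maps $A,D$ entering through their norms. Carrying the reciprocals through carefully yields an explicit update of the form $(L'_{J'_t},\mu_{J'_t}) = \Phi(L'_{J'_{t+1}},\mu_{J'_{t+1}};\,L'_{g_{\mathrm{x}}},\mu_{g_{\mathrm{x}}},L'_{g_{\mathrm{u}}},\mu_{g_{\mathrm{u}}})$ for some fixed map $\Phi$ depending on the problem matrices. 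The definition of $\phi$ in~\eqref{eq_phi_def} should then be exactly the $T$-fold (or $(T-t)$-fold) iterate of $\Phi$ applied to the terminal constants $(L'_{J_T},\mu_{J_T})$, with the condition number bound $\kappa_{J'_t}\le L'_{J'_t}/\mu_{J'_t}$ read off at the end.

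I would set up the argument as an induction on the backward index $t$: the base case $t=T$ uses the given finite condition numbers from Assumption~\ref{ass:modelling}\,\eqref{item:ass:ConditionNumber}, and the inductive step applies the single-step rule above, using the fact (established in Theorem~\ref{thm_classical_error}\eqref{it_convexExtensible}) that $\mathrm{DP_{\!conj}}$ preserves convex extensibility so that the constants remain well-defined at each stage. The main obstacle I anticipate is \emph{not} the composition bookkeeping per se, but rather two subtler points: first, the inequalities $\mu_{f+g}\ge\mu_f+\mu_g$ and the LFT inversion $\mu_{f^\ast}=1/L'_f$ hold cleanly only for smooth, finite-valued, strongly convex functions on all of $\R^d$, whereas here we are working with functions that are merely convex \emph{extensible} on mixed-integer domains and defined on compact sets; I would need to justify that passing to the convex extensions $\bar g_{\mathrm{x}},\bar g_{\mathrm{u}},\bar J_T$ (as the notation $\kappa_{\bar J_T}$ etc.\ already suggests) makes the standard smooth convex-analysis identities applicable, and that the minimization defining $\mathrm{DP_{shift}}$ — an infimal convolution in disguise — interacts correctly with strong convexity. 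Second, the linear precompositions by $B',A'$ and the block matrices can \emph{degrade} strong convexity when these maps are not surjective or are ill-conditioned, so I would have to either invoke the finiteness assumptions to keep all the relevant singular values bounded away from zero, or absorb the resulting factors into the definition of $\phi$; getting these matrix factors to combine into a clean closed form is where the technical care concentrates.
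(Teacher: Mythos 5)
Your proposal follows essentially the same route as the paper's proof: the single-step update combines the facts that the LFT swaps and inverts $L'$ and $\mu$ (so $L'_{f\protoast}=1/\mu_f$, $\mu_{f\protoast}=1/L'_f$) with the additivity bounds $L'_{f+g}\le L'_f+L'_g$ and $\mu_{f+g}\ge\mu_f+\mu_g$, yielding the recursion $L'_{J'_t}\le L'_{J_{t+1}}L'_{g_{\mathrm{u}}}/(L'_{J_{t+1}}+L'_{g_{\mathrm{u}}})+L'_{g_{\mathrm{x}}}$ (and its mirror for $\mu$), which is then solved in closed form to define $\phi$. The only substantive difference is that you plan to track the rescaling of the constants under the linear precompositions by $A,B,C,D,E$ — a legitimate concern that the paper's proof in fact silently omits — so your version would, if anything, be more careful than the published argument.
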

The bound from Lemma~\ref{lem_condNumber_valueFunction} can be tight, i.e., there exist problems where~\eqref{eq_condNumber_ineq} holds with equality. For example in scenarios where $\kappa_{ g_{\mathrm{x}}}=\kappa_{g_{\mathrm{u}}}=\kappa_{J_{T}}=1$ we have \smash{$\phi(t,T,L'_{g_{\mathrm{x}}},\mu_{ g_{\mathrm{x}}},L'_{g_{\mathrm{u}}},\mu_{g_{\mathrm{u}}},L'_{J_T},\mu_{J_T})=1$} for all $t\in [T]$ and hence~\eqref{eq_condNumber_ineq} is an equality. We do not give a precise definition of $\phi$ in the main text because it is cumbersome, but all details can be found in Appendix~\ref{app_proof_lemma_condNumber}.

In Theorem~\ref{thm_classical_error} we have seen that the conjugate DP operator preserves convex extensiblity of a function.
Similar to the standard DP operator~\eqref{eq_DP}, we can show that the conjugate DP operator~\eqref{eq_conjDPoperator} is also closed under certain subclasses of convex extensible functions; in particular, we show this for $L^\natural$ convex functions under some conditions on the system dynamics.

\begin{remark}[Conjugate DP operator preserves $L^\natural$-convexity] \label{rmk_Lnatural}
Suppose Assumptions~\ref{ass:modelling} and~\ref{ass:feasibility} hold, the function $\varphi(s)=g_{\mathrm{u}}\hspace{-1.5mm}\protoast (-B^\top s,-C^\top s-E^\top s')$ is separable convex,\footnote{A function $f:\cS_K^d \to \R \cup \{\infty\}$ is called \emph{separable convex} if it can be represented as $f(s)=\sum_{i=0}^{d-1} f_i(s_i)$, where $s=(s_i)_{i=0}^{d-1} \in \cS_K^d$ and $f_i:\cS_K \to \R \cup \{\infty\}$ is a univariate discrete convex function, see~\cite[p.~95]{ref:Murota-03}.} and $A,D \propto \mathds{1}$. If $J$ is $L^\natural$-convex, then $\mathrm{DP_{\!conj}}[J]$ is $L^\natural$-convex.
To see this, we recall the definition of $\mathrm{DP_{\!conj}}$ given in~\eqref{eq_conjDPoperator} and visualized by Algorithm~\ref{algo_CDP} and Figure~\ref{fig_classical_idea}. Given that $J:\cX_N^d\to\R$ is $L^\natural$-convex, its discrete Legendre-Fenchel dual $J^*$ is known to be $M^\natural$-convex \cite[Theorem~8.12]{ref:Murota-03}; and the sum of an $M^\natural$-convex function and a separable convex function is again $M^\natural$-convex \cite{ref:Murota-99}. Hence, the function $h$, defined in Algorithm~\ref{algo_CDP} as $h(s,s')= g_{\mathrm{u}}\hspace{-1.5mm}\protoast \!(-B^\top s,-C^\top s-E^\top s') + J^*(s,s')$ is $M^\natural$-convex.
The discrete Legendre-Fenchel transform $h^*$ of an $M^\natural$-convex function is an $L^\natural$-convex function~\cite[Theorem~8.12]{ref:Murota-03}. This implies that also the function $(y,z)\mapsto h^*(Ay,Dz)$ is $L^\natural$-convex, because $A,D \propto \mathds{1}$ \cite{ref:Murota-19}. Finally, as the sum to two $L^\natural$-convex functions is $L^\natural$-convex, we obtain that $\mathrm{DP_{\!conj}}[J](y,z) := g_{\mathrm{x}}(y,z) +  h^*(Ay,Dz)$ is $L^\natural$-convex.
\end{remark}
Note that a similar statement holds for the standard DP operator, as we discuss next; this property has been exploited in some approximate schemes for DP, e.g., in~\cite{chen2014fixed}.

\begin{remark}[DP operator preserves $L^\natural$-convexity] \label{rmk_Lnatural_DP}
Assume that the matrices $A',B' \propto \mathds{1}$ and that the functions $g_{\mathrm{x}}:\cX^d_N\to\mathbb{R}$, $g_{\mathrm{u}}:\cU^c_M\to\mathbb{R}$, $J:\cX^d_N\to\mathbb{R}$ are $L^\natural$-convex. Then, $\mathrm{DP}[J]$ is $L^\natural$-convex. To see this, recall that the sum of two $L^\natural$-convex functions is again $L^\natural$-convex \cite{ref:Murota-19} and that since $A',B' \propto \mathds{1}$ we have that the mapping $(x,u)\mapsto g_{\mathrm{u}}(u) + J(A'x + B'u)$ is $L^\natural$-convex. Since partial minimization preserves $L^\natural$-convexity \cite[Proposition~4.10]{ref:Murota-19} and since $g_{\mathrm{x}}$ is $L^\natural$-convex, the function $x\mapsto \mathrm{DP}[J](x) = g_{\mathrm{x}}(x) + \min_{u\in\cU^c_M}\{g_{\mathrm{u}}(u) + J(A'x + B'u)\}$ is $L^\natural$-convex.
\end{remark}

Since $L^\natural$-convex functions are convex extensible, the above discussion shows that LFT-based algorithm can be applied on purely discrete problems with $L^\natural$-convex terminal cost function $g_T$. DP with $L^\natural$-convex functions find applications in operations management, see \cite{chen2015natural} and the references therein. Another class of problems for which the DP operator is known to be convex at every stage is the one described in \cite{halman2014fully}, which also lists some applications; problems in this class have $d = 1, c = 1$, convex cost functions (for univariate functions, convexity for discrete functions coincides with convex extensibility), and transition function of the form $f(x, u) = \alpha_x x + \alpha_u u + \alpha_c$, where $\alpha_x \in \mathbb{Z}$, $\alpha_u \in \{-1,0,1\}$ and $\alpha_c \in \mathbb{Z}$.

We end this section with a technical lemma ensuring that for the purely continuous case, convexity of the cost function suffices to guarantee that the shifted DP operator is convexity presererving.
\begin{lemma} \label{lem_cont_convexity_preserving}
In an purely continuous setting, the shifted DP operator defined in~\eqref{eq_DP_tilde} is convexity preserving. 
\end{lemma}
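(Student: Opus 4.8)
The plan is to show that in the purely continuous setting ($d_{\ci}=c_{\ci}=0$), if $J$ is convex, then $\mathrm{DP_{shift}}[J](x) = \min_{u \in \V^{c_{\dr}}} \{ g_{\mathrm{u}}(u) + J(A'x + B'u) \}$ is convex in $x$. First I would reduce to the right object: by Assumption~\ref{ass:modelling}~\eqref{item:ass:convexity}, $g_{\mathrm{u}}$ is convex and $\V^{c_{\dr}}$ is a compact convex set, and $J$ is convex by hypothesis. The key structural observation is that in the purely continuous case the dynamics $f(x,u) = A'x + B'u$ are affine (jointly) in $(x,u)$, so the composition $J(A'x + B'u)$ is a convex function of the pair $(x,u)$, being the composition of a convex function with an affine map.

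Next I would assemble the joint function $\Phi(x,u) := g_{\mathrm{u}}(u) + J(A'x + B'u)$ and argue it is jointly convex in $(x,u)$: it is the sum of $g_{\mathrm{u}}(u)$ (convex in $u$, hence in $(x,u)$) and $J(A'x+B'u)$ (jointly convex as just argued), and sums of convex functions are convex. Then $\mathrm{DP_{shift}}[J](x) = \inf_{u \in \V^{c_{\dr}}} \Phi(x,u)$ is a partial minimization of a jointly convex function over a convex set. The standard result from convex analysis (see, e.g., \cite[Section~3.2.5]{boyd_book}) states that partial minimization of a jointly convex function over a convex constraint set yields a convex function of the remaining variables; this is precisely the statement I would invoke to conclude that $\mathrm{DP_{shift}}[J]$ is convex.

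The main subtlety, and where I expect to spend the most care, is handling the infimum rigorously: partial minimization preserves convexity, but I should confirm the minimum is actually attained (so that $\mathrm{DP_{shift}}[J]$ is real-valued and not $-\infty$) and that the resulting function is proper. Attainment follows from compactness of $\V^{c_{\dr}}$ together with continuity of $\Phi(x,\cdot)$ (which holds since $g_{\mathrm{u}}$ and $J$ are convex and finite, hence continuous on the relevant domain), so the $\inf$ can be replaced by a $\min$ and the value is finite. Feasibility of the constraint $A'x + B'u \in \X^d$ is guaranteed by Assumption~\ref{ass:feasibility}, ensuring a minimizer in the admissible set exists. Once these finiteness and attainment points are settled, convexity is immediate from the partial-minimization principle, completing the proof.
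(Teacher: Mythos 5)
Your proposal is correct and follows essentially the same route as the paper's proof: establish joint convexity of $(x,u)\mapsto g_{\mathrm{u}}(u)+J(A'x+B'u)$ (the paper writes out the affine-composition step explicitly, you cite it as a standard fact) and then invoke the partial-minimization principle from~\cite[Section~3.2.5]{boyd_book}. Your additional remarks on attainment and finiteness via compactness of $\V^{c_{\dr}}$ are a harmless extra layer of care that the paper leaves implicit.
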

\begin{proof}
Note that under Assumption~\ref{ass:modelling} the shifted DP operator~\eqref{eq_DP_tilde} in the purely continuous setting can be expressed as
\begin{align} \label{eq_DP:ass}
\mathrm{DP_{shift}}[J] = \min_{v \in \V^{c_{\dr}}} \{g_{\mathrm{u}}(v) + J \big( Ay+Bv \big) \} \quad \forall y \in \Y^{d_{\dr}} \, .
\end{align}
In a first step, we claim that the function $(y,v)\mapsto g_{\mathrm{u}}(v) + J \big( Ay+Bv \big)$ is jointly convex. Since the function $g_{\mathrm{u}}$ is convex by Assumption~\ref{ass:modelling}, it remains to show that $(y,v)\mapsto J(Ay + Bv)$ is jointly convex, which follows from the convexity of $J$. Indeed for any $y_1,y_2\in\Y^{d_{\dr}}$, $v_1,v_2\in\V^{c_{\dr}}$ and $\lambda\in[0,1]$, we have
\begin{align*}
J(A(\lambda y_1+(1-\lambda)y_2) + B (\lambda v_1+(1-\lambda)v_2)) 
&=J(\lambda(Ay_1 + Bv_1) + (1-\lambda)(Ay_2 + Bv_2))\\
&\leq \lambda J(Ay_1 + Bv_1) + (1-\lambda) J(Ay_2 + Bv_2), 
\end{align*}
where the inequality follows from the convexity of $J$. Therefore,~\eqref{eq_DP:ass} is the partial minimum of a jointly convex function, which is known to be convex~\cite[Section~3.2.5]{boyd_book}.
\end{proof}

\subsection{Stochastic setting} \label{sec_classical_stochastic}
We next show how to modify the scheme in the previous section to approximate the stochastic DP operator for the post-decision state, defined in~\eqref{eq_DPhat}. Let $V': \cM^d_{P} \to \R$ be a convex extensible function and let $V:\M^d\to \R$ be its convex extension.
The conjugate stochastic DP operator is defined as
\begin{align*}
 \mathrm{\widehat{DP}_{conj}}[V](m):= \sum \limits_{k=0}^{r-1} p_{\xi}(\xi_k)\Big(h^{*}\big(A'(m+\xi_k)\big)+g_{\mathrm{x}}(m+\xi_k) \Big)\, ,
\end{align*}
for $m=(q,r)$ and $h(s,s'):= g_{\mathrm{u}}\hspace{-2.5mm}\protoast \!(-B^\top s,-C^\top s - E^\top s') + {V'}^*(s,s')$. Figure~\ref{fig_classical_idea_stochastic} visualizes the definition of the operator \smash{$\mathrm{\widehat{DP}_{conj}}[V]$}.

\begin{figure}[!htb]
    \begin{center}
    \begin{tikzpicture}
    \def \x{6}
    \def \y{2}
    \def \la{0.3}      
    \node at (-1,0) {$V(q,r)$};
    \node at (-1,-\y) {${V}^{*}(s,s')$};
    \node at (3*\x/4,0) {$h^{*}(A'm) $};
    \node at (3*\x/4,-\y) {$h(s,s')$};
    \node at (2*\x-6.5*\la,0) {$\underbrace{\sum \limits_{k=0}^{r-1} p_{\xi}(\xi_k)\Big(h^{*}\big(A'(m+\xi_k)\big)+g_{\mathrm{x}}(m+\xi_k) \Big)}_{\mathrm{\widehat{DP}_{conj}}[V](q,r)} $};  
    \draw[thick,->,gray] (-1,-\la) -- (-1,-\y+\la);  
    \draw[thick,->,gray] (-1+2.5*\la,-\y) -- (-1+3*\x/4+1*\la,-\y);  
    \draw[thick,<-,gray] (3*\x/4,-1.5*\la) -- (3*\x/4,-\y+\la);  
    \draw[thick,->,gray] (3*\la+3*\x/4,0) -- (\x+2.5*\la,0);   
    \node[gray] at (-0.5*\la,-0.5*\y)  {$\cL_{q,r\to s,s'}$};   
    \node[gray] at (3*\x/8-1.5*\la,-\y+\la)  {$+g_{\mathrm{u}}\hspace{-2.5mm}\protoast \!(-B^\top s,-C^\top s-E^\top s')$};       
    \node[gray] at (\x-1.5*\la,-0.5*\y)  {$\cL_{s,s' \to A'm}$};     
    \end{tikzpicture}
    \end{center}
\caption{Modification of the scheme in Figure~\ref{fig_classical_idea} for the stochastic setting where $m=(q,r)$ and $x=(y,z)$.}
\label{fig_classical_idea_stochastic}
\end{figure}

The following corollary is the adaptation of Theorem~\ref{thm_classical_error} to the stochastic setting, showing that $\mathrm{\widehat{DP}_{conj}}[V]$ is a good approximation to $\mathrm{\widehat{DP}}[V]$ as long as the shifted DP operator defined in~\eqref{eq_DP_shifted_stoch} preserves convexity.
\begin{corollary}[Properties of $\mathrm{\widehat{DP}_{conj}}$] \label{cor_stoch_error}
Consider a stochastic DP problem satisfying Assumptions~\ref{ass:modelling} and~\ref{ass:feasibility_stoch}.
Let $V':\cQ^{d_{\dr}}_{P_{\dr}} \times \cR^{d_{\ci}}_{P_{\ci}} \to\R$ be a convex extensible function, let $V:\Q^{d_{\dr}} \times \cR^{d_{\ci}}_{P_{\ci}} \to \R$ be 
convex in $q$ and convex extensible in $r$ such that \smash{$V'(q,r)=V(q,r)$ $\forall q \in \cQ^{d_{\dr}}_{P_{\dr}}, r \in \cR^{d_{\ci}}_{P_{\ci}}$},
and let \smash{$\cS_{K_{\dr}}^{d_{\dr}}\times {\cS'}_{K_{\ci}}^{d_{\ci}}=:\bar \cS^d_K$} be a discrete dual space. Then,
\begin{enumerate}[(i)]
    \item\label{it_convexExtensible_stoch} $\mathrm{\widehat{DP}_{\!conj}}[V']$ is convex extensible, i.e., the conjugate DP operator preserves convex extensibility;
 \item\label{it_Lipschitz_stoch} $\mathrm{\widehat{DP}_{\!conj}}[V']$ is Lipschitz continuous with constant $\sqrt{d}L_{V}\max\{\norm{A}_{\infty},\norm{D}_{\infty}\} + L_{g_{\mathrm{x}}}$; 
    \item\label{it_errorBound_stoch} for all $m \in \cM_{P}^{d}$ we have
    \begin{align*}
	\big|\mathrm{\widehat{DP}_{conj}}[V'](m) - \mathrm{\widehat{DP}_{shift}}[V]\!\protoast\!\protoast(m) -\E[ g_{\mathrm{x}}(m+\xi)] \big| \leq E_1 + E_2\, , 
\end{align*}
with error terms $E_1  := (1+\sqrt{d}) L_J\, \distH(\Q^{d_{\dr}},\cQ_{N_{\dr}}^{d_\dr})$ and $E_2 := (1+\sqrt{d})(\tau +\eta)\distH(\mathbb{S}^d,\bar\cS_K^d)$, where the constants $\tau$ and $\eta$ are defined in Assumption~\ref{ass:modelling} and $\mathbb{S}^d\subseteq \R^d$ is a compact convex space.
\end{enumerate}
\end{corollary}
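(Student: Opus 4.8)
The plan is to prove Corollary~\ref{cor_stoch_error} by reducing it to Theorem~\ref{thm_classical_error} applied inside the expectation, exploiting the linearity of the expectation operator and the fact that the post-decision recursion~\eqref{eq_DPhat} is, up to the averaging over $\xi$, structurally identical to the deterministic conjugate DP operator. First I would observe that for each fixed realization $\xi_k$, the quantity $h^{*}(A'(m+\xi_k)) + g_{\mathrm{x}}(m+\xi_k)$ is exactly $\mathrm{DP_{\!conj}}[V'](m+\xi_k)$ as defined in~\eqref{eq_conjDPoperator}, since $h$ here is defined identically to the $h$ in Algorithm~\ref{algo_CDP}. Thus $\mathrm{\widehat{DP}_{conj}}[V'](m) = \sum_{k=0}^{r-1} p_{\xi}(\xi_k)\,\mathrm{DP_{\!conj}}[V'](m+\xi_k) = \E[\mathrm{DP_{\!conj}}[V'](m+\xi)]$, a convex combination of shifted copies of the deterministic operator.

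For part~\eqref{it_convexExtensible_stoch}, the plan is to use that each $m \mapsto \mathrm{DP_{\!conj}}[V'](m+\xi_k)$ is convex extensible by Theorem~\ref{thm_classical_error}~\eqref{it_convexExtensible} (translation by the constant $\xi_k$ preserves convex extensibility, since it merely reparametrizes the domain), and that a nonnegative combination $\sum_k p_{\xi}(\xi_k)(\cdot)$ of convex extensible functions is again convex extensible because one can take the sum of the individual convex extensions as a common convex extension. For part~\eqref{it_Lipschitz_stoch}, I would note that each summand inherits the Lipschitz constant $\sqrt{d}L_{V}\max\{\norm{A}_{\infty},\norm{D}_{\infty}\} + L_{g_{\mathrm{x}}}$ from Theorem~\ref{thm_classical_error}~\eqref{it_Lipschitz}, and since $\sum_k p_{\xi}(\xi_k)=1$, the convex combination has the same Lipschitz constant (the triangle inequality gives $|\E[F(m+\xi)]-\E[F(m'+\xi)]| \leq \E[|F(m+\xi)-F(m'+\xi)|] \leq L\norm{m-m'}$).

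For the error bound~\eqref{it_errorBound_stoch}, the key step is to apply Theorem~\ref{thm_classical_error}~\eqref{it_errorBound} pointwise at each shifted argument $m+\xi_k$, obtaining
\begin{align*}
\big|\mathrm{DP_{\!conj}}[V'](m+\xi_k) - \mathrm{DP_{shift}}[V]\!\protoast\!\protoast(m+\xi_k) - g_{\mathrm{x}}(m+\xi_k)\big| \leq E_1 + E_2 \, ,
\end{align*}
then multiply by $p_{\xi}(\xi_k)\geq 0$, sum over $k$, and apply the triangle inequality together with $\sum_k p_{\xi}(\xi_k)=1$ so that the right-hand side stays $E_1+E_2$. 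It remains to identify $\sum_k p_{\xi}(\xi_k)\,\mathrm{DP_{shift}}[V]\!\protoast\!\protoast(m+\xi_k)$ with $\mathrm{\widehat{DP}_{shift}}[V]\!\protoast\!\protoast(m)$ and $\sum_k p_{\xi}(\xi_k) g_{\mathrm{x}}(m+\xi_k)$ with $\E[g_{\mathrm{x}}(m+\xi)]$; the latter is immediate by definition, and the former requires unwinding the definitions in~\eqref{eq_DP_shifted_stoch} and using that the shifted stochastic operator preserves convexity (so the biconjugate is the identity on it, matching the deterministic case). The main obstacle I anticipate is precisely this last identification: one must verify that the biconjugate commutes appropriately with the expectation, i.e., that $\mathrm{\widehat{DP}_{shift}}[V]\!\protoast\!\protoast = \E[\mathrm{DP_{shift}}[V]\!\protoast\!\protoast(\cdot+\xi)]$ under the standing convexity-preservation hypothesis, rather than introducing an extra convexification error when averaging; this hinges on the assumption stated in the corollary that $\mathrm{\widehat{DP}_{shift}}$ preserves convexity, which ensures each individual biconjugate is the function itself and the averaging introduces no additional gap.
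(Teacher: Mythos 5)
Your proposal follows exactly the paper's route: the paper's entire proof of Corollary~\ref{cor_stoch_error} is the one-line observation that $\widehat{\mathrm{DP}}[V]$ is a convex combination of $\mathrm{DP}[V]$ evaluated at the shifted points $m+\xi_k$, so that all three statements reduce to Theorem~\ref{thm_classical_error} via linearity, translation invariance, and the triangle inequality; you have simply spelled out the details. One caveat on the final identification you flag: the hypothesis that $\mathrm{\widehat{DP}_{shift}}$ preserves convexity is \emph{not} among the stated hypotheses of Corollary~\ref{cor_stoch_error} (it appears only in Corollary~\ref{cor_optimal_policy_stoch}), and even granting it, convexity of the averaged operator does not by itself give convexity of each summand $\mathrm{DP_{shift}}[V](\cdot+\xi_k)$; without some such ingredient your argument yields the bound with $\E\big[\mathrm{DP_{shift}}[V]\!\protoast\!\protoast(m+\xi)\big]$ in place of $\mathrm{\widehat{DP}_{shift}}[V]\!\protoast\!\protoast(m)$, and in general one only has $\E\big[\mathrm{DP_{shift}}[V]\!\protoast\!\protoast(\cdot+\xi)\big]\leq \mathrm{\widehat{DP}_{shift}}[V]\!\protoast\!\protoast$, so one direction of the two-sided bound does not immediately follow. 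Since the paper's own proof is silent on this point, your write-up is, if anything, more explicit about where the remaining subtlety lies.
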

\begin{proof}
The three statements follow from the proof of Theorem~\ref{thm_classical_error} by recalling that $\widehat{\mathrm{DP}}[V]$ can be viewed as a convex combination of $\mathrm{DP}[V]$ evaluated at different points, as shown in~\eqref{eq_DPhat}.
\end{proof}
We can apply Corollary~\ref{cor_stoch_error} recursively, in the same fashion as in Corollary~\ref{cor_cDP_optimal}, to approximate the initial value function $V_0(m)$ for any $m \in \M^d$. Similarly to the deterministic case, we can also approximately compute the corresponding optimal policy $\pi_0^\star(m)$. 
To see this, recall that in the scheme depicted in Figure~\ref{fig_classical_idea_stochastic} we have
\begin{align} \label{eq_def_sStar_stoch}
   h^*(A'm) = \max_{s \in \bar\cS^d_K} \{ \inprod{A'm}{s} - V^*(s) - g_{\mathrm{u}}\hspace{-2.5mm}\protoast \!(-{B'}^\top s)  \} \, .
\end{align}

\begin{corollary}[Optimal policy]\label{cor_optimal_policy_stoch}
Let $V$ and $V'$ be as in Corollary~\ref{cor_stoch_error} and consider a DP satisfying Assumptions~\ref{ass:modelling} and~\ref{ass:feasibility_stoch}, for which the shifted DP operator $\mathrm{\widehat{DP}_{shift}}$ preserves convexity. 
Let $\cM_P^d,\cS_K^d$ be primal and dual spaces such that, using Corollary~\ref{cor_stoch_error}, we have $|\mathrm{\widehat{DP}}[V](m)-\mathrm{\widehat{DP}_{\!conj}}[V'](m)|\leq \eps$ for all $m \in \cM^d_P$ and $\eps>0$. 
For any $m\in\cM_P^d$, let
\begin{align}
    \hat \pi(m) = \arg\min_{u \in \U^c} \Big\{g_{\mathrm{u}}(u) + \inprod{u}{{B'}^\top s_m^{\star}}  \Big \} \, ,  \label{eq_optPolicy_via_LFT_robust_stoch}
\end{align}
where $s_m^{\star}$ denotes the optimizer in~\eqref{eq_def_sStar_stoch}.
Then, $\norm{\pi^{\star}(m) - \hat \pi(m)} \leq \sqrt{4\eps/\mu_{g_{\mathrm{u}}}}$, where $\pi^{\star}(m)$ is the optimal action.
\end{corollary}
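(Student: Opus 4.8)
The plan is to reduce the statement to its deterministic analogue, Lemma~\ref{lem_optimal_policy}, applied to the cost-to-go function $V$ at the single post-decision state $m$. The crucial observation is that, although the stochastic operator $\widehat{\mathrm{DP}}$ carries an expectation over the noise $\xi$, the optimal action at a post-decision state is a single deterministic minimization with no averaging: mirroring the post-decision policy formula $\pi_0^\star(x)=\arg\min_{u\in\U^c}\{g_{\mathrm{u}}(u)+V_0(A'x+B'u)\}$, the optimal action at $m$ is $\pi^\star(m)=\arg\min_{u\in\U^c}\{g_{\mathrm{u}}(u)+V(A'm+B'u)\}$, the minimizer of $P(m):=\min_{u\in\U^c}\{g_{\mathrm{u}}(u)+V(A'm+B'u)\}$. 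Accordingly, $h^*(A'm)$ in~\eqref{eq_def_sStar_stoch}, its optimizer $s_m^\star$, and the reconstruction $\hat\pi(m)$ in~\eqref{eq_optPolicy_via_LFT_robust_stoch} are exactly the objects produced by the deterministic conjugate scheme of Figure~\ref{fig_classical_idea} applied to $V$ at the unshifted point $A'm$; the realizations $\xi_k$ enter only the cost-to-go recursion, not the policy.

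First I would record the per-point approximation $0\le P(m)-h^*(A'm)\le\eps$. The lower bound is weak duality: $h^*(A'm)$ is a discrete dual maximum and is therefore at most the primal value $P(m)$. For the upper bound I invoke the deterministic error estimate of Theorem~\ref{thm_classical_error}(iii) applied to (the convex extension of) $V$ at the single point $A'm$, which is precisely the building block from which the averaged guarantee of Corollary~\ref{cor_stoch_error}(iii) is assembled. Here I use that $\mathrm{\widehat{DP}_{shift}}$ preserves convexity so that $P=P^{**}$ and the biconjugate in the error bound collapses to $P$ itself; this is exactly the hypothesis of the corollary.

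With the per-point bound in hand, the remaining argument is the one behind Lemma~\ref{lem_optimal_policy}. By construction $\hat\pi(m)$ attains the supremum defining $g_{\mathrm{u}}\!\protoast\!(-{B'}^\top s_m^\star)$, i.e.\ it satisfies the Fenchel--Young equality for $g_{\mathrm{u}}$ at $-{B'}^\top s_m^\star$, so $\hat\pi(m)=\arg\min_u\{g_{\mathrm{u}}(u)+\inprod{{B'}^\top s_m^\star}{u}\}$ minimizes a $\mu_{g_{\mathrm{u}}}$-strongly convex function. Writing $\Gamma_u:=g_{\mathrm{u}}(\pi^\star(m))+g_{\mathrm{u}}\!\protoast\!(-{B'}^\top s_m^\star)+\inprod{{B'}^\top s_m^\star}{\pi^\star(m)}\ge 0$ for the Fenchel--Young gap of $g_{\mathrm{u}}$ at $(\pi^\star(m),-{B'}^\top s_m^\star)$, strong convexity yields $\tfrac{\mu_{g_{\mathrm{u}}}}{2}\norm{\pi^\star(m)-\hat\pi(m)}^2\le \Gamma_u$. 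The key identity is then obtained by expanding $P(m)-h^*(A'm)$ through the two Fenchel--Young relations (one for $g_{\mathrm{u}}$, one for $V$ at the transition point $A'm+B'\pi^\star(m)$) together with $\inprod{s_m^\star}{A'm+B'\pi^\star(m)}=\inprod{s_m^\star}{A'm}+\inprod{s_m^\star}{B'\pi^\star(m)}$; the cross terms cancel and leave $P(m)-h^*(A'm)=\Gamma_u+\Gamma_V$ with $\Gamma_V\ge 0$ the Fenchel--Young gap for $V$. Hence $\Gamma_u\le P(m)-h^*(A'm)\le\eps$, and $\norm{\pi^\star(m)-\hat\pi(m)}^2\le 2\eps/\mu_{g_{\mathrm{u}}}\le 4\eps/\mu_{g_{\mathrm{u}}}$, which is the claim.

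The step I expect to be the main obstacle is the localization in the second paragraph: one must argue that the guarantee is really available at the \emph{single} unshifted point $A'm$ that determines the policy, rather than only in the $\xi$-averaged form in which Corollary~\ref{cor_stoch_error} is phrased. Since the summands $P(m+\xi_k)-h^*(A'(m+\xi_k))$ are individually nonnegative but averaged with the weights $p_{\xi}(\xi_k)$, the averaged bound does not by itself isolate the unshifted term; the correct route is to apply the deterministic per-point estimate directly to $V$, confirming that the expectation over $\xi$ never couples into the policy bound. Once this per-point duality gap at $A'm$ is controlled by $\eps$, the Fenchel--Young/strong-convexity computation transfers verbatim from the deterministic Lemma~\ref{lem_optimal_policy}.
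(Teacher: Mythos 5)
Your proposal is correct and takes essentially the same route as the paper, whose entire proof of this corollary is the observation that the argument of Lemma~\ref{lem_optimal_policy} carries over because the policy at a post-decision state is a single deterministic minimization at $A'm$ (the expectation over $\xi$ never enters); your Fenchel--Young-gap decomposition is a direct, non-contradiction rewriting of that same strong-convexity argument, and your explicit handling of the localization issue (invoking the per-point deterministic estimate at $A'm$ rather than the $\xi$-averaged hypothesis) correctly fills in what the paper leaves implicit. The one imprecision is the claim $\Gamma_V\ge 0$: since $V^*$ is the \emph{discrete} conjugate and $A'm+B'\pi^\star(m)$ need not lie in $\cM^d_P$, the Fenchel--Young inequality for $V$ can fail by the discretization error --- this is precisely the extra $\eps$ appearing in~\eqref{eq_step2_ds} of the paper's proof of Lemma~\ref{lem_optimal_policy} --- so one only obtains $\Gamma_u\le 2\eps$, which still gives the stated bound $\norm{\pi^{\star}(m)-\hat\pi(m)}^2\le 4\eps/\mu_{g_{\mathrm{u}}}$ but with no slack to spare.
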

\begin{proof}
Follows the same steps as in the proof of Lemma~\ref{lem_optimal_policy}.
\end{proof}


\section{Quantum algorithms for dynamic programming} \label{sec_quantum_algo}
In this section we present a quantum algorithm for DP. 
The classical DP algorithm based on the Bellman equations outputs a vector $(J_0(x_0),\ldots,J_0(x_{N-1}))$, containing the initial value function, and the corresponding optimal policies $(\pi_0^\star(x_0),\ldots,\pi_0^\star(x_{N-1}))$, or a way to compute them (see Lemma \ref{lem_optimal_policy}). By contrast, the quantum algorithm outputs a quantum-mechanical representation of the value function. More precisely, we aim to construct the quantum state
\begin{align} \label{eq_superpos_valueFunction}
\frac{1}{\sqrt{N}} \sum_{i=0}^{N-1}\ket{i} \ket{J_0(x_i)} \, .
\end{align}
We are also interested in constructing the state \smash{$\frac{1}{\sqrt{N}} \sum_{i=0}^{N-1}\ket{i} \ket{J_0(x_i)}\ket{\pi_0^\star(x_i)}$}, or alternatively, some quantum state containing information that allows us to recover the optimal policy, similarly to Lemma \ref{lem_optimal_policy}. After creating the state~\eqref{eq_superpos_valueFunction} with a unitary operation, we can apply Grover search~\cite{grover96} to evaluate the value function at any fixed point, i.e., output $J_0(x_i)$ or the pair $(J_0(x_i),\pi^\star_0(x_i))$ at any fixed $x_i \in \cX_N^d$ for $\cX_N^d=\{x_0,\ldots,x_{N-1}\}$, with $O(\sqrt{N})$ applications of the unitary to create the state. This is discussed in Corollary~\ref{cor_eval_J0}.

In certain scenarios it may be helpful to consider a different representation of the value function. We can apply a quantum digital-analog conversion~\cite[Theorem~1]{kosuke19} to the state~\eqref{eq_superpos_valueFunction} to obtain
\begin{align} \label{eq_superpos_amplitudes}
  \ket{J_0}:= \frac{1}{\alpha} \sum_{i=0}^{N-1} J_0(x_i) \ket{i} \, .
\end{align}
This can be done with a quantum algorithm that has an expected running time  $O(\sqrt{\omega}\,\polylog(N))$, where \smash{$\omega:=\alpha/(N \max_{i\in[N]} J_0(x_i)^2)$ for $\alpha:= \sum_{i=0}^{N-1}J_0(x_i)^2$}. If the numbers $J_0(x_0),\ldots,J_0(x_{N-1})$ are sufficiently uniformly distributed, the parameter $\omega$ does not scale with $N$~\cite[Remark~4.4]{QLFT20}.
The analog representation~\eqref{eq_superpos_amplitudes} can be used to efficiently evaluate certain functions of the complete value function. For example, if $H \in \C^{N\times N}$ is an observable that can be implemented with complexity $O(\polylog(N))$, we can approximately compute the expectation value $\bra{J_0} H \ket{J_0}$ in polylogarithmic time.

We next discuss two assumptions that we require to make the quantum algorithm efficient. Both of them are standard in the literature and not too restrictive in practice.
\begin{assumption}[Access to functions $g_{\mathrm{x}}$, $g_{\mathrm{u}}\hspace{-1.5mm}\protoast\,$, and $J_T$] \label{ass_Uf}
We assume that we have access to unitaries $U_{g_{\mathrm{x}}}$, $U_{g_{\mathrm{u}}\hspace{-1.5mm}\protoast\,}$, and $U_{J_T}$ such that
\begin{align*}
U_{g_{\mathrm{x}}}(\ket{x_i}\ket{0})= \ket{x_i}\ket{g_{\mathrm{x}}(x_i)}, \quad 
U_{g_{\mathrm{u}}\hspace{-1.5mm}\protoast\,}(\ket{s_j}\ket{0})= \ket{s_j}\ket{g_{\mathrm{u}}\hspace{-1.9mm}\protoast\!(s_j)}, \quad \textnormal{and} \quad
U_{J_T}(\ket{x_i}\ket{0})= \ket{x_i}\ket{J_T(x_i)} \, ,
\end{align*}
for every given $(x_0,\ldots,x_{N-1})$ and $(s_0,\ldots,s_{K-1})$. Furthermore, the cost of running $U_{g_{\mathrm{x}}}$, $U_{J_T}$, and $U_{g_{\mathrm{u}}\hspace{-1.5mm}\protoast\,}$ is  $O(\polylog(N))$ and $O(\polylog(K))$, respectively.\footnote{In case of a stochastic DP setting the terminal value function $J_T$ is replaced with $V_T$.}
\end{assumption}
The assumption is justified because for every function that can be efficiently computed with a classical algorithm, i.e., computable in time $O(\polylog(N))$, we can use quantum arithmetic to load the data efficiently~\cite{nielsenChuang_book}. For the second mapping above, we require that $g_{\mathrm{u}}$ is sufficiently well-behaved that its continuous LFT $g_{\mathrm{u}}\hspace{-1.5mm}\protoast\,$ features a closed form expression.
\begin{assumption}[Sufficient precision] \label{ass_precision}
We assume to have sufficient precision such that all basic quantum arithmetic operations can be executed without any errors.
\end{assumption}
The assumption of sufficient precision is necessary because otherwise, the discrete LFT could be affected by errors that are hard to quantify. We remark that the classical discrete LFT faces the same difficulty, see \cite{QLFT20} for a discussion. In other words, this assumption is necessary to facilitate the algorithm analysis in both the classical and quantum setting. It is not a restrictive assumption because the running time of the algorithm is polynomial in the number of (qu)bits, hence we can increase precision at a small cost.
\subsection{Deterministic setting} \label{sec_det_quantum}
We now present a quantum algorithm, Algorithm~\ref{algo_QDP}, that computes a superposition of the initial value function, in the form of~\eqref{eq_superpos_valueFunction}, for a deterministic convex DP problem. We consider fixed primal and dual spaces $\cX^d_N=\{x_0,\ldots,x_{N-1}\}$ and $\cS^d_K=\{s_0,\ldots,s_{K-1}\}$ that are discretized with regular steps.\footnote{In Algorithm~\ref{algo_QDP} we consider vectors with $4T$ elements because each of the $T$ iteration steps consists of two QLFT steps and in each QLFT we loose two elements.} Since $\cX^d_N$ and $\cS^d_K$ are multidimensional regular grids, we assume that a superposition of the corresponding points can be constructed in polylogarithmic time: this is natural, as the coordinates of each grid point can be computed with simple quantum arithmetics.

\begin{algorithm}[!htb]
\caption{Deterministic convex QDP}
\label{algo_QDP}
\begin{algorithmic}
\STATE \textbf{Input:} $T \in \N$, $(y_0,\ldots,y_{N_{\dr}-1})$, $(z_0,\ldots,z_{N_{\ci}-1})$, $(s_0,\ldots,s_{K_{\dr}-1})$, $(s'_0,\ldots,s'_{K_{\ci}-1})$, $o_i=-B^\top s_i$, $p_i=-C^\top s_i$ $p'_i=-E^\top s'_i$, $q_i=Ay_i$, $r_i=Dz_i$, and terminal value function $J'_T=:\hat J_T$; \vspace{2mm} \\
\item \textbf{Output:} Approximation to $\frac{1}{\sqrt{N}} \sum_{i=0}^{N_{\dr}-1}\sum_{i'=0}^{N_{\ci}-1}\ket{i,i'} \ket{J_0(y_i,z_{i'})}\ket{\textnormal{Garbage}(i,i')}$; \vspace{2mm} \\
Prepare $\frac{1}{\sqrt{N}} \sum_{i=0}^{N_{\dr}-1}\sum_{i'=0}^{N_{\ci}-1} \ket{i,i'} \ket{\hat J_T(y_{i-2T}^{i+2T},z_{i'-2T}^{i'+2T})}$; \\
\item \textbf{For } $\ell=T,\ldots,1$ \textbf{ do} \\
\begin{enumerate}
\item $\frac{1}{\sqrt{K}} \sum\limits_{j=0}^{K_{\dr}-1}\sum\limits_{j'=0}^{K_{\ci}-1} \ket{j,j'} \ket{\hat{J}^*_{\ell}(s_{j-2\ell+1}^{j+2\ell-1},{s'}_{j'-2\ell+1}^{j'+2\ell-1})} \ket{\textnormal{Garbage}(j,j')}$;
\item $\frac{1}{\sqrt{K}}\! \sum\limits_{j=0}^{K_{\dr}-1}\sum\limits_{j'=0}^{K_{\ci}-1}\! \ket{j,j'}  \ket{\underbrace{\hat{J}^*_{\ell}(s_{j-2\ell+1}^{j+2\ell-1},{s'}_{j'-2\ell+1}^{j'+2\ell-1})\!+\!g_{\mathrm{u}}\hspace{-2.5mm}\protoast \!(o_{j-2\ell+1}^{j+2\ell-1},p_{j-2\ell+1}^{j+2\ell-1}+{p'}_{j'-2\ell+1}^{j'+2\ell-1})}_{=:h_{\ell}(s_{j-2\ell+1}^{j+2\ell-1},{s'}_{j'-2\ell+1}^{j'+2\ell-1})} } \ket{\textnormal{Garbage}(j,j')}$;
\item $\frac{1}{\sqrt{N}} \sum\limits_{i=0}^{N_{\dr}-1}\sum\limits_{i'=0}^{N_{\ci}-1} \ket{i,i'} \ket{h^*_{\ell}(q_{i-2(\ell-1)}^{i+2(\ell-1)},r_{i'-2(\ell-1)}^{i'+2(\ell-1)})} \ket{\textnormal{Garbage}(i,i')}$;
\item $\ket{\psi_{\ell}}= \frac{1}{\sqrt{N}} \sum\limits_{i=0}^{N_{\dr}-1}\sum\limits_{i'=0}^{N_{\ci}-1} \ket{i,i'}  \ket{\underbrace{h^*_{\ell}(q_{i-2(\ell-1)}^{i+2(\ell-1)},r_{i'-2(\ell-1)}^{i'+2(\ell-1)}) + g_{\mathrm{x}}(y_{i-2(\ell-1)}^{i+2(\ell-1)},z_{i'-2(\ell-1)}^{i'+2(\ell-1)})}_{=:\hat{J}_{\ell-1}(y_{i-2(\ell-1)}^{i+2(\ell-1)},z_{i'-2(\ell-1)}^{i'+2(\ell-1)})} } \ket{\textnormal{Garbage}(i,i')}$;
\end{enumerate}
\textbf{end}\\
Output $\ket{\psi_1}= \frac{1}{\sqrt{N}} \sum_{i=0}^{N_{\dr}-1}\sum_{i'=0}^{N_{\ci}-1} \ket{i,i'}\ket{\hat{J}_0(y_i,z_{i'})} \ket{\textnormal{Garbage}(i,i')}$;
\end{algorithmic}
\end{algorithm}

\begin{theorem}[Deterministic convex QDP] \label{thm_QDP}
Let $\eps>0, T \in \N$, $\cY_{N_{\dr}}^{d_{\dr}}=\{y_0,\ldots,y_{N_{\dr}-1}\}$ be a regular discretization with $N_{\dr}\sim (T/\eps)^{d_{\dr}}$, and consider a DP problem satisfying Assumptions~\ref{ass:modelling} and~\ref{ass:feasibility} where the shifted DP operator defined in~\eqref{eq_DP_tilde} preserves convexity.
Given Assumption~\ref{ass_precision}, the output $\ket{\psi_1}=\frac{1}{\sqrt{N}} \sum_{i=0}^{N-1}\ket{i} \ket{\hat{J}_0(x_i)}\ket{\textnormal{Garbage}(i)}$ for a successful run of Algorithm~\ref{algo_QDP} satisfies
\begin{align*}
|\hat{J}_0(x_i) - J_0(x_i)| \leq  \eps \quad \forall i \in [N] \, ,
\end{align*}
where $\cY^{d_{\dr}}_{N_{\dr}} \times \cZ^{d_{\ci}}_{N_{\ci}} = \cX^d_N=\{x_0,\ldots,x_{N-1}\}$.
Combined with amplitude amplification and given Assumption~\ref{ass_Uf}, the expected running time for constant probability of success is 
\begin{align} \label{eq_runtimeQDP}
    O\left(T \gamma^{dT}\polylog\big(N_{\ci},(T/\eps)^{d}\big)\right) \, ,
\end{align}
with $\gamma:=\phi(0,T,L'_{g_{\mathrm{x}}},\mu_{g_{\mathrm{x}}},L'_{g_{\mathrm{u}}},\mu_{g_{\mathrm{u}}},L'_{J_T},\mu_{J_T})$, where the function $\phi$ is defined in Lemma~\ref{lem_condNumber_valueFunction}.
\end{theorem}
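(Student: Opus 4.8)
The plan is to prove the two assertions—correctness and running time—separately, in each case reducing the behaviour of Algorithm~\ref{algo_QDP} to the classical analysis of Section~\ref{sec_classical} together with the QLFT guarantee of Proposition~\ref{prop_QLFT}.

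\textbf{Correctness.} First I would argue that Algorithm~\ref{algo_QDP} is a faithful quantization of the classical conjugate DP scheme of Algorithm~\ref{algo_CDP}: one pass through the loop realizes a single application of $\mathrm{DP_{\!conj}}$, with steps~1 and~3 implementing the two discrete Legendre--Fenchel transforms $\cL_{y,z\to s,s'}$ and $\cL_{s,s' \to Ay, Dz}$ via Proposition~\ref{prop_QLFT}, and steps~2 and~4 adding $g_{\mathrm{u}}\hspace{-1.5mm}\protoast$ and $g_{\mathrm{x}}$ using the oracles of Assumption~\ref{ass_Uf}. Under the sufficient-precision Assumption~\ref{ass_precision} the QLFT computes the discrete LFT exactly, so on every branch of the superposition the value register carries precisely the number ${\mathrm{DP}^T_{\!\mathrm{conj}}}[J'_T](x_i)=\hat J_0(x_i)$ that the classical algorithm produces; the $\pm 2T$ index windows in the subscripts are only bookkeeping to keep enough neighbouring points available, since each transform consumes two of them. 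The bound $|\hat J_0(x_i)-J_0(x_i)|\le\eps$ then follows directly from Corollary~\ref{cor_cDP_optimal}, which gives accumulated error $T(E_1+E_2)$, combined with Remark~\ref{rmk_scaling}: choosing $N_{\dr}\sim(T/\eps)^{d_{\dr}}$ and $K\sim(T/\eps)^{d}$ forces $E_1+E_2\le\eps/T$. The hypothesis that the shifted DP operator preserves convexity is exactly what lets Corollary~\ref{cor_cDP_optimal} identify $\mathrm{DP_{shift}}[J]\!\protoast\!\protoast$ with $\mathrm{DP_{shift}}[J]$ at every stage.

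\textbf{Running time.} The key observation is that the entire procedure is one coherent circuit of depth $O(T\,\polylog(N,K))$: it performs $2T$ $d$-dimensional QLFTs (two per DP step), interleaved with the $O(\polylog)$-cost oracle additions and the $O(\polylog)$-cost preparation of the terminal superposition. Since measuring between stages would destroy the superposition, I would run the \emph{bare} (pre-amplification) QLFT at every stage, carry the garbage along, and apply a single round of amplitude amplification to the composed circuit, flagging success by the conjunction of the individual QLFT flags. A bare $d$-dimensional QLFT factorizes into $d$ one-dimensional transforms, each succeeding with probability at least $1/\kappa$ for a function of condition number $\kappa$ (this is what yields the $O(\sqrt{\kappa}\,\polylog)$ per-transform cost after amplification in Proposition~\ref{prop_QLFT}). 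To control the resulting product of $2Td$ factors I would bound every condition number occurring in the circuit by $\gamma$: Lemma~\ref{lem_condNumber_valueFunction} gives $\kappa_{J'_t}\le\phi(t,T,\dots)\le\phi(0,T,\dots)=\gamma$, and since the LFT preserves the condition number the intermediate dual-side functions $h_\ell$ inherit the same bound. Hence the composed success probability is at least $\gamma^{-2Td}$, a single amplification incurs overhead $O(\gamma^{Td})$, and the total cost is $O(T\gamma^{dT}\polylog(N,K))$; substituting $N_{\dr}\sim(T/\eps)^{d_{\dr}}$, $K\sim(T/\eps)^{d}$ and $N=N_{\dr}N_{\ci}$ rewrites the polylogarithmic factor as $\polylog(N_{\ci},(T/\eps)^{d})$, matching~\eqref{eq_runtimeQDP}.

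\textbf{Main obstacle.} The delicate part is the coherent composition of the QLFT success probabilities across the $2T$ stages. Proposition~\ref{prop_QLFT} is stated for one transform acting on a clean input, whereas here each transform acts on the garbage-laden output of the previous one; I would need to verify that the per-stage success amplitude is genuinely governed only by the condition number of the function being transformed at that stage, so that the amplitudes multiply cleanly and a single final amplification suffices. This is also precisely where the exponent $\gamma^{dT}$ comes from: because we cannot amplify (or measure) between stages without collapsing the superposition, we pay the product of the bare success probabilities rather than a sum of per-stage amplified costs. Verifying this multiplicative composition, together with the monotonicity claim $\phi(t,T,\dots)\le\phi(0,T,\dots)$ that makes the initial value function the worst case, is the main technical burden; the remaining error bookkeeping is then routine given Corollary~\ref{cor_cDP_optimal} and Remark~\ref{rmk_scaling}.
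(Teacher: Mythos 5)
Your proposal is correct and follows essentially the same route as the paper's proof: correctness is reduced to the classical conjugate-DP analysis with $K\sim(T/\eps)^d$ forcing a per-stage error of $\eps/T$ (the paper spells this out via the triangle inequality and contractivity of the DP operator rather than citing Corollary~\ref{cor_cDP_optimal}, but the content is identical), and the running time comes from bounding each stage's bare QLFT success probability by $\kappa_{J_t}^{-2d}\ge\phi(t,T,\dots)^{-2d}$, using monotonicity of $\phi$ in $t$ to get an overall success probability of $\gamma^{-2dT}$, and applying a single round of amplitude amplification for an $O(\gamma^{dT})$ overhead. You also correctly identify the delicate point (coherent multiplicative composition of the per-stage success amplitudes with postselection deferred to the end), which the paper handles in the same way.
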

The proof is given in Appendix~\ref{app_pf_QDP}.
With the help of Lemma~\ref{lem_optimal_policy}, Algorithm~\ref{algo_QDP} can be modified to output a state $\ket{\psi'}=\frac{1}{\sqrt{N}} \sum_{i=0}^{N-1}\ket{i} \ket{\hat{J}_0(x_i)} \ket{\hat\pi_0(x_i)}\ket{\textnormal{Garbage}(i)}$ such that
\begin{align} \label{eq_variant_QDP}
|\hat{J}_0(x_i) - J_0(x_i)| \leq  \eps \qquad \textnormal{and} \qquad \norm{\hat\pi_0(x_i)-\pi_0^\star(x_i)} \leq  \sqrt{\frac{4\eps}{\mu_{g_{\mathrm{u}}}}} \qquad \forall i \in [N]\,,
\end{align}
at the additional cost of solving the convex optimization problem~\eqref{eq_optPolicy_via_LFT_robust} on top of the running time stated in~\eqref{eq_runtimeQDP}. More precisely, during the final QLFT step we can  keep track of the optimizers $\{s_i^\star\}_{i=1}^N$. We can then classically solve  problem~\eqref{eq_optPolicy_via_LFT_robust}, or  use a quantum algorithm for the same task, to obtain the approximation $\hat \pi_0(x_i)$ to $\pi^\star_0(x_i)$. Note that all the data required to solve \eqref{eq_optPolicy_via_LFT_robust} are readily available.

Algorithm~\ref{algo_QDP} computes an approximation of the state~\eqref{eq_superpos_valueFunction}, i.e., a superposition of the value function (and the optimal policy) at $T=0$. The next corollary discusses how to output the value function (and the optimal policy) at any specific point, using amplitude amplification.
\begin{corollary}[Evaluating value function and optimal policy at specific points] \label{cor_eval_J0}
Let $\eps>0, T \in \N$, $\cY_{N_{\dr}}^{d_{\dr}}=\{y_0,\ldots,y_{N_{\dr}-1}\}$ be a regular discretization with $N_{\dr}\sim (T/\eps)^{d_{\dr}}$, and consider a DP problem satisfying Assumptions~\ref{ass:modelling} and~\ref{ass:feasibility} where the shifted DP operator defined in~\eqref{eq_DP_tilde} preserves convexity.  
Given Assumptions~\ref{ass_Uf} and~\ref{ass_precision}, for any $i \in [N]$ Algorithm~\ref{algo_QDP} combined with amplitude amplification outputs $\hat{J}_0(x_i) \in \R$ such that
\begin{align*}
|\hat{J}_0(x_i)  - J_0(x_i)| \leq \eps\, ,
\end{align*}
with an expected running time
\begin{align} \label{eq_runtime_J0x}
    O\left(T \gamma^{d T}(T/\eps)^{d_{\dr}/2} \sqrt{N_{\ci}}\, \polylog\big(N_{\ci},(T/\eps)^d\big) \right) \, ,
\end{align}
for $\gamma:=\phi(0,T,L'_{g_{\mathrm{x}}},\mu_{g_{\mathrm{x}}},L'_{g_{\mathrm{u}}},\mu_{g_{\mathrm{u}}},L'_{J_T},\mu_{J_T})$, where the function $\phi$ is defined in Lemma~\ref{lem_condNumber_valueFunction}.
Furthermore, Algorithm~\ref{algo_QDP} can compute $\hat \pi_0(x_i)$ such that
\begin{align*}
    \norm{\hat \pi_0(x_i)-\pi_0^{\star}(x_i)} \leq \sqrt{\frac{4\eps}{\mu_{g_{\mathrm{u}}}}} \, ,
\end{align*}
in a running time that additionally to~\eqref{eq_runtime_J0x} requires the time to solve~\eqref{eq_optPolicy_via_LFT_robust}, where $\pi_0^{\star}(x_i)$ denotes the optimal initial policy for state $x_i$.
\end{corollary}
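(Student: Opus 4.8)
The plan is to treat the state-preparation procedure of Theorem~\ref{thm_QDP} as a black-box subroutine and then apply one outer layer of amplitude amplification to read off the value of the initial cost function at the desired index $i$. Concretely, let $U$ denote the operation realizing a successful run of Algorithm~\ref{algo_QDP}, so that $U$ prepares an approximation of the superposition $\frac{1}{\sqrt{N}}\sum_{j=0}^{N-1}\ket{j}\ket{\hat J_0(x_j)}\ket{\mathrm{Garbage}(j)}$ in time $O(T\gamma^{dT}\polylog(N_{\ci},(T/\eps)^d))$, where by Theorem~\ref{thm_QDP} each recorded value satisfies $|\hat J_0(x_j)-J_0(x_j)|\leq\eps$. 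To extract the entry at a fixed index $i$, I would use a marking oracle that flips the phase of the branch $\ket{i}$ and run amplitude amplification with state-preparation operator $U$. Since the amplitude of $\ket{i}$ in the prepared state is $1/\sqrt{N}$, boosting it to a constant requires $O(\sqrt{N})$ applications of $U$ and its inverse; measuring the value register then returns $\hat J_0(x_i)$ with constant success probability, and the accuracy $|\hat J_0(x_i)-J_0(x_i)|\leq\eps$ is inherited verbatim from Theorem~\ref{thm_QDP}.

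The running time then follows by multiplying the cost of a single state preparation by the $O(\sqrt{N})$ amplitude-amplification iterations. Using $N=N_{\dr}N_{\ci}$ together with the regular discretization $N_{\dr}\sim(T/\eps)^{d_{\dr}}$, we have $\sqrt{N}=(T/\eps)^{d_{\dr}/2}\sqrt{N_{\ci}}$, so the total expected running time for constant success probability is $O\big(T\gamma^{dT}(T/\eps)^{d_{\dr}/2}\sqrt{N_{\ci}}\,\polylog(N_{\ci},(T/\eps)^d)\big)$, which is exactly~\eqref{eq_runtime_J0x}.

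For the optimal policy I would invoke the variant of Algorithm~\ref{algo_QDP} described immediately after Theorem~\ref{thm_QDP}, which additionally records the optimizer $s_i^\star$ of~\eqref{eq_def_sStar} during the final QLFT step. Once the outer amplitude amplification has isolated the index $i$, the data $(x_i,s_i^\star)$ needed to form the convex program~\eqref{eq_optPolicy_via_LFT_robust} is at hand, and solving that program produces $\hat\pi_0(x_i)$. The guarantee $\norm{\hat\pi_0(x_i)-\pi_0^\star(x_i)}\leq\sqrt{4\eps/\mu_{g_{\mathrm{u}}}}$ is then precisely Lemma~\ref{lem_optimal_policy} applied at the point $x_i$ with accuracy $\eps$, and it costs the time to solve~\eqref{eq_optPolicy_via_LFT_robust} on top of~\eqref{eq_runtime_J0x}.

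The main obstacle is that the ``state preparation'' $U$ is not a clean unitary: a successful run of Algorithm~\ref{algo_QDP} already relies on the internal amplitude amplification of the QLFT steps (Proposition~\ref{prop_QLFT}), so $U$ genuinely succeeds only with some probability and carries a success/failure flag. Nesting the outer amplitude amplification therefore requires handling this flag coherently — conditioning the marking oracle on the success flag and ensuring the failure branch does not interfere with the amplified target branch — so that the $O(\sqrt{N})$ bound on the outer iterations composes cleanly with the per-iteration cost rather than being corrupted by the intermediate randomness. This is the delicate bookkeeping step; once the flags are treated correctly, the amplitude and running-time accounting above goes through as stated.
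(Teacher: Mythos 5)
Your proposal reaches the right running time and correctly identifies the one genuinely delicate point, namely that the state-preparation circuit of Theorem~\ref{thm_QDP} is not a clean unitary because its QLFT steps involve postselection. Where you leave this as ``delicate bookkeeping'' for a \emph{nested} amplification (an outer $O(\sqrt{N})$ layer wrapped around an inner $O(\gamma^{dT})$ layer, which would additionally require making the inner amplification measurement-free), the paper's proof resolves it more directly: it delays \emph{all} postselection to the very end, so the entire algorithm becomes a single unitary producing
\begin{align*}
\frac{1}{\sqrt{N}}\sum_{i=0}^{N-1}\ket{i}\ket{\hat J_0(x_i)}\ket{\hat\pi_0(x_i)}\bigl(\alpha\ket{\text{good subspace}}+\beta\ket{\text{bad subspace}}\bigr)\,,
\end{align*}
and then performs \emph{one} amplitude amplification on the joint event ``index $i$ and good subspace,'' whose amplitude is $\Theta\bigl(\gamma^{-dT}N^{-1/2}\bigr)$; this yields $O(\gamma^{dT}\sqrt{N})$ iterations of a $O(T\,\polylog)$-cost unitary and hence the same bound~\eqref{eq_runtime_J0x} via $\sqrt{N}=(T/\eps)^{d_\dr/2}\sqrt{N_\ci}$. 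The two accountings agree, and your treatment of the policy part via the recorded optimizer $s_i^\star$ and Lemma~\ref{lem_optimal_policy} matches the paper; the only substantive gap is that you assert rather than exhibit the mechanism that makes the composition of the two probabilistic layers coherent, which is exactly what the delayed-postselection trick supplies.
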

\begin{proof}
We simply postpone the postselection steps in the QLFT (see~\cite[Step~3 in Algorithm~3]{QLFT20}) to the very end, and amplify the projection of the quantum algorithm onto the ``good subspace'', i.e., the space that we want to postselect on (given by the indicator function in~\cite[Step~3 in Algorithm~3]{QLFT20}) and that has the desired value $x_i$ in the state register. This is possible because we have flag qubits in known locations. Because the entire process, without postselection, is unitary we can apply amplitude amplification. More precisely, by delaying all postselection we build the state
\begin{align*}
    \frac{1}{\sqrt{N}} \sum_{i=0}^{N-1}\ket{i} \ket{\hat{J}_0(x_i)}\ket{\hat\pi_0(x_i)} \left(\alpha \ket{\textnormal{good subspace}} + \beta \ket{\textnormal{bad subspace}}  \right) \, ,
\end{align*}
and we perform amplitude amplification on the state $\ket{i}\ket{\textnormal{anything}}\ket{\textnormal{good subspace}}$ to obtain the desired output in the given running time, remembering that $N=N_{\dr} N_{\ci}$ and $N_{\dr}\sim (T/\eps)^{d_{\dr}}$.
\end{proof}

\begin{remark}[Parameter $\gamma$] \label{rmk_gamma}
As mentioned in Theorem~\ref{thm_QDP} and Corollary~\ref{cor_eval_J0}, the parameter $\gamma$ defined as $\gamma:=\phi(0,T,L'_{g_{\mathrm{x}}},\mu_{g_{\mathrm{x}}},L'_{g_{\mathrm{u}}},\mu_{g_{\mathrm{u}}},L'_{J_T},\mu_{J_T})$ enters the running time. Of particular interest are problems where $\gamma=1$, because this avoids the exponential dependence on $T$ and $d$. Such problems include settings where 
\begin{enumerate}[(i)]
    \item $\kappa_{g_{\mathrm{x}}}=\kappa_{g_{\mathrm{u}}}=\kappa_{J_{T}}=1$, because in this case $\phi(0,T,L'_{g_{\mathrm{x}}},\mu_{g_{\mathrm{x}}},L'_{g_{\mathrm{u}}},\mu_{g_{\mathrm{u}}},L'_{J_T},\mu_{J_T})=1$;
    \item \label{it_ii_gamma} $\kappa_{g_{\mathrm{u}}}=\kappa_{J_{T}}=1$ and $g_{\mathrm{x}}$ is linear, because the LFT of a quadratic function with condition number $1$ is again a quadratic function with condition number $1$ (under the assumption that the primal and dual spaces are sufficiently large).
\end{enumerate}
We remark that the class of convex DPs with $\gamma=1$, although far from the full generality of DP, is already rich and difficult: Section~\ref{sec_hardness} discusses a DP problem with $\gamma=1$ that is $\#$P-hard.
\end{remark}

\subsection{Stochastic setting}
In this section we consider the stochastic setting as introduced in Section~\ref{sec_classical_stochastic}. Algorithm~\ref{algo_QDP_stoch} mentioned below solves the stochastic DP problem, where we recall that $\cM^d_{P}=\cQ^{d_{\dr}}_{P_{\dr}} \times \cR_{P_{\ci}}^{d_{\ci}}$. The details are given in the proof of Corollary~\ref{cor_QDP_stoch} which can be found in Appendix~\ref{app_pf_cor_QDP_stoch}. 

\begin{algorithm}[!htb]
\caption{Stochastic convex QDP}
\label{algo_QDP_stoch}
\begin{algorithmic}
\STATE \textbf{Input:} $T,r \in \N$, $(q_0,\ldots,q_{P_{\dr}-1})$,$(r_0,\ldots,r_{P_{\ci}-1})$, $(m_0,\ldots,m_{P-1})$, $(s_0,\ldots,s_{K_{\dr}})$, $(s'_0,\ldots,s'_{K_{\ci}})$, $a_i =-E^\top s_i$, $b_i =-C^\top s_i$, $d_i =-E^\top s'_i$, $o_{k,i}=A'(m_i+\xi_k)$, $\bar o_{k,i}=m_i+\xi_k$, and terminal value function $V'_T=:\hat V_T$; \vspace{2mm} \\
\item \textbf{Output:} Approximation to $\frac{1}{\sqrt{P}} \sum_{i=0}^{P-1}\ket{i} \ket{V_0(m_i)}\ket{\textnormal{Garbage}(i)}$; \vspace{2mm} \\
Prepare $\frac{1}{\sqrt{P}} \sum_{i=0}^{P-1} \ket{i} \ket{\hat{V}_T(m^{i+2T}_{i-2T})}$; \\
\item \textbf{For } $\ell=T,\ldots,1$ \textbf{ do} \\
\begin{enumerate}
\item \label{it_stoch1} $\frac{1}{\sqrt{K}} \sum\limits_{j=0}^{K_{\dr}-1}\sum\limits_{j'=0}^{K_{\ci}-1}\ket{j,j'} \ket{\hat{V}^*_{\ell}(s^{j+2\ell-1}_{j-2\ell+1},{s'}^{j'+2\ell-1}_{j'-2\ell+1})}^{\otimes r} \ket{\textnormal{Garbage}(j,j')}$;
\item\label{it_stoch2} $\frac{1}{\sqrt{K}} \!\sum\limits_{j=0}^{K_{\dr}-1}\!\sum\limits_{j'=0}^{K_{\ci}-1}\!\ket{j,j'}  \ket{\underbrace{\hat{V}^*_{\ell}(s^{j+2\ell-1}_{j-2\ell+1},{s'}^{j'+2\ell-1}_{j'-2\ell+1})\!+\!g_{\mathrm{u}}\hspace{-2.5mm}\protoast \!(a^{j+2\ell-1}_{j-2\ell+1},b^{j+2\ell-1}_{j-2\ell+1}+d^{j'+2\ell-1}_{j'-2\ell+1})}_{=:h_{\ell}(s^{j+2\ell-1}_{j-2\ell+1},{s'}^{j'+2\ell-1}_{j'-2\ell+1})} }^{\otimes r} \ket{\textnormal{Garbage}(j,j')}$;
\item \label{it_stoch3} $\frac{1}{\sqrt{P}}\!\! \sum\limits_{i=0}^{P-1}\!\ket{i} \ket{ h^*_{\ell}(o^{i+2(\ell-1)}_{0,i-2(\ell-1)})\!+\!g_{\mathrm{x}}(\bar o^{i+2(\ell-1)}_{0,i-2(\ell-1)})}\ldots \ket{ h^*_{\ell}(o^{i+2(\ell-1)}_{r-1,i-2(\ell-1)})\!+\!g_{\mathrm{x}}(\bar o^{i+2(\ell-1)}_{r-1,i-2(\ell-1)})} \ket{\textnormal{Garbage}(i)}$;
\item\label{it_stoch4} $\ket{\psi_{\ell}}=\frac{1}{\sqrt{P}} \sum\limits_{i=0}^{P-1} \ket{i}  \ket{\hat{V}_{\ell-1}(m_{i-2(\ell-1)}^{i+2(\ell-1)})} \ket{\textnormal{Garbage}(i)}$;
\end{enumerate}
\textbf{end}\\
Output $\ket{\psi_1}=\frac{1}{\sqrt{P}} \sum_{i=0}^{P-1} \ket{i}\ket{\hat{V}_0(m_{i})} \ket{\textnormal{Garbage}(i)}$;
\end{algorithmic}
\end{algorithm}

\begin{corollary}[Stochastic convex QDP] \label{cor_QDP_stoch}
Let $\eps>0, T \in \N$, $\cQ^{d_{\dr}}_{P_{\dr}}=\{q_0,\ldots,q_{P_{\dr}-1}\}$ be a regular discretization with $P_{\dr}\sim (T/\eps)^{d_{\dr}}$, and consider a stochastic discrete DP problem satisfying Assumptions~\ref{ass:modelling} and~\ref{ass:feasibility_stoch} where the shifted DP operator defined in~\eqref{eq_DP_shifted_stoch} preserves convexity.
Given Assumption~\ref{ass_precision}, the output $\ket{\psi_1}=\frac{1}{\sqrt{P}} \sum_{i=0}^{P-1}\ket{i} \ket{\hat{V}_0(m_i)}\ket{\textnormal{Garbage}(i)}$ for a successful run of Algorithm~\ref{algo_QDP_stoch} satisfies
\begin{align*}
|\hat{V}_0(m_i) - V_0(m_i)| \leq  \eps \quad \forall i \in [P]\,,
\end{align*}
where $\cQ_{P_{\dr}}^{d_{\dr}}\times \cR^{d_{\ci}}_{P_{\ci}}=\cM^d_{P}=\{m_0,\ldots,m_{P-1}\}$.
Given Assumption~\ref{ass_Uf}, the expected running time for constant probability of success is 
\begin{align*}
O\Big(r T \gamma^{drT} \, \polylog\big(P,(T/\eps)^{d}\big)\Big) \, ,
\end{align*}
with \smash{$\gamma:=\phi(0,T,L'_{g_{\mathrm{x}}},\mu_{g_{\mathrm{x}}},L'_{g_{\mathrm{u}}},\mu_{g_{\mathrm{u}}},L'_{V_T},\mu_{V_T})$}, where the function $\phi$ is defined in Lemma~\ref{lem_condNumber_valueFunction}.
Combined with amplitude amplification Algorithm~\ref{algo_QDP_stoch} outputs $\hat{V}_0(m_i)$ such that $|\hat{V}_0(m_i) - V_0(m_i)| \leq \eps$ for any $i \in [P]$ with an expected running time 
\begin{align*}
O\Big(r T \gamma^{drT} (T/\eps)^{d_{\dr}/2}\sqrt{P_{\ci}}\, \polylog\big(P_{\ci},(T/\eps)^{d}\big)\Big) \, .   
\end{align*}
Furthermore, Algorithm~\ref{algo_QDP_stoch} can compute $\hat \pi_0(m_i)$ such that $\norm{\hat \pi_0(m_i)-\pi_0^{\star}(m_i)} \leq \sqrt{4\eps/\mu_{g_{\mathrm{u}}}}$, where $\pi_0^{\star}(m_i)$ denotes the optimal initial action for state $m_i$, at the additional cost of solving the convex optimization problem~\eqref{eq_optPolicy_via_LFT_robust_stoch}.
\end{corollary}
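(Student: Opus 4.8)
The plan is to mirror the proof of Theorem~\ref{thm_QDP}, replacing the deterministic conjugate operator and its error bound everywhere by their stochastic counterparts from Section~\ref{sec_classical_stochastic}. Algorithm~\ref{algo_QDP_stoch} is simply the quantization of the classical scheme of Figure~\ref{fig_classical_idea_stochastic}: one pass of the loop (Steps~\ref{it_stoch1}--\ref{it_stoch4}) implements exactly one application of $\mathrm{\widehat{DP}_{conj}}$. Step~\ref{it_stoch1} applies the QLFT of Proposition~\ref{prop_QLFT}, in its $d$-dimensional factorized form, to pass from $\hat V_\ell$ to $\hat V_\ell^*$; Step~\ref{it_stoch2} adds the closed-form conjugate $g_{\mathrm{u}}\protoast$ (available by Assumption~\ref{ass_Uf}) to obtain $h_\ell$; Step~\ref{it_stoch3} applies the inverse QLFT to evaluate $h_\ell^*$ at the $r$ shifted arguments $A'(m+\xi_k)$ and adds $g_{\mathrm{x}}(m+\xi_k)$ via $U_{g_{\mathrm{x}}}$; Step~\ref{it_stoch4} forms the $p_\xi$-weighted sum, i.e.\ the expectation, producing $\hat V_{\ell-1}$. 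Under Assumption~\ref{ass_precision} all arithmetic is exact, so the only discrepancy from the exact $\widehat{\mathrm{DP}}$ is the discretization error.

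Correctness then follows by induction on the backward recursion, exactly as in Corollary~\ref{cor_cDP_optimal} but using the stochastic error bound. By Corollary~\ref{cor_stoch_error}, one pass introduces an error at most $E_1+E_2$, and since $\mathrm{\widehat{DP}_{shift}}$ preserves convexity by hypothesis we have $\mathrm{\widehat{DP}_{shift}}[V]\protoast\protoast=\mathrm{\widehat{DP}_{shift}}[V]$ at every stage, so the $T$ errors simply add to give $|\hat V_0(m_i)-V_0(m_i)|\le T(E_1+E_2)$. Invoking the scaling of Remark~\ref{rmk_scaling} with $P_{\dr}\sim(T/\eps)^{d_{\dr}}$ and $K\sim(T/\eps)^d$ makes $T(E_1+E_2)\le\eps$. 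The $\pm 2T$ and $\pm(2\ell{-}1)$ index windows appearing in the algorithm are bookkeeping for the ``consecutive points'' version of Proposition~\ref{prop_QLFT}: each QLFT shrinks the valid block by one element on each side, and since each of the $T$ iterations costs two QLFTs, starting from a block of radius $2T$ guarantees that the central value $\hat V_0(m_i)$ survives all transforms.

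For the running time I would charge each QLFT by Proposition~\ref{prop_QLFT}: a $d$-dimensional transform of a function with condition number $\kappa$ costs $O(\kappa^{d/2}\polylog(P,K))$, and Lemma~\ref{lem_condNumber_valueFunction} bounds the condition number of $\hat V_\ell$ by the function $\phi$, hence by $\gamma$. The genuinely new feature is the $\otimes r$ in Steps~\ref{it_stoch1}--\ref{it_stoch2}: because $h_\ell^*$ must be read off at $r$ distinct shifted arguments and recombined coherently in Step~\ref{it_stoch4}, we need $r$ independent conjugate registers, which cannot be cloned and must each be prepared by its own QLFT. The $r$ preparations succeed simultaneously only with the product of the individual success probabilities, so the amplitude-amplification overhead per stage is raised to the $r$-th power; combined with the per-stage condition-number bound and the $T$ stages this yields the factor $\gamma^{drT}$, while the $r$-fold arithmetic and summation contribute the leading multiplicative $r$. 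I expect this compounding of the QLFT success probability across the $r$ copies, and the verification that it multiplies cleanly through the $T$ nested amplitude-amplification rounds without further blow-up, to be the main obstacle.

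Finally, the specific-point and policy statements are obtained as before. For a fixed $m_i$ I would postpone all QLFT postselection to the very end, as in the proof of Corollary~\ref{cor_eval_J0}, and amplify onto the good subspace restricted to the index $i$; since $P=P_{\dr}P_{\ci}$ with $P_{\dr}\sim(T/\eps)^{d_{\dr}}$, the marking costs the extra $(T/\eps)^{d_{\dr}/2}\sqrt{P_{\ci}}$ factor. The optimal action $\hat\pi_0(m_i)$ is recovered by keeping the optimizer $s_m^\star$ of~\eqref{eq_def_sStar_stoch} during the last inverse QLFT and solving~\eqref{eq_optPolicy_via_LFT_robust_stoch}, with the guarantee $\norm{\hat\pi_0(m_i)-\pi_0^\star(m_i)}\le\sqrt{4\eps/\mu_{g_{\mathrm{u}}}}$ supplied directly by Corollary~\ref{cor_optimal_policy_stoch}.
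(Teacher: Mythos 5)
Your proposal is correct and follows essentially the same route as the paper: induction on the backward recursion using the stochastic error bound of Corollary~\ref{cor_stoch_error} with the scaling $T(E_1+E_2)\le\eps$, the per-stage success probability $1/\phi^{2dr}$ compounding over the $r$ conjugate registers and $T$ stages to give the $\gamma^{drT}$ amplitude-amplification overhead, the $r$-fold QLFT and arithmetic accounting for the leading factor $r$, and the specific-point and policy statements obtained via delayed postselection and Corollary~\ref{cor_optimal_policy_stoch}. The step you flag as the main obstacle is handled in the paper exactly as you describe, with the added observation that the $r$ QLFTs with different dual spaces can reuse the same input register since it is never overwritten.
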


\section{Discussion on generality, running time, and optimality}
In this section we show that 
\begin{enumerate}[(i)]
    \item the class of DP problems we can solve with the quantum algorithm presented in Section~\ref{sec_quantum_algo} contains $\#$P-hard problems (see Section~\ref{sec_hardness});
    \item for quadratic continuous stochastic DP problems, the quantum algorithm achieves a quadratic speedup compared to the standard classical approach using discretization (see Section~\ref{sec_speedup}). 
\end{enumerate}
It is widely believed that quantum computers cannot achieve more than a quadratic speedup on $\#$P-hard problems~\cite{bennett97,BV97} (although we note that for the computation of the Jones polynomial, which is a $\#$P-hard problem, there exists an efficient quantum {\em approximation} algorithm even if no classical polynomial-time approximation algorithm is known \cite{aharonov2009polynomial}). Since our algorithm achieves a quadratic speedup for some DP problems, it may be optimal at least for those problems. However, we are not aware of a classical lower bound matching the running time of discretization followed by the standard Bellman recursion; in other words, the classical algorithm that we compare to may not be optimal.
In Section~\ref{sec_oracle_optimal}, we show that the quantum algorithm is optimal in the oracle setting, up to polylogarithmic factors.
\subsection{Framework contains \#P-hard problems} \label{sec_hardness}
Let $Z_1,\dots,Z_n$ be a sequence of of random variables with support $\{\alpha_{i,j} \in \N:i\in [n], j\in\{1,2\} \}$, where $\alpha_{i,2}=0$ for all $i \in [n]$, and probabilities $\PP(Z_i = \alpha_{i,j})=1/2$ for all $i\in [n]$, $j\in\{1,2\}$.
For $T=n+2$ consider the following one-dimensional DP problem 
  \begin{align}  \label{eq_hardDPproblem}
  \begin{array}{cl}
    \min\limits_{\pi_0,\ldots,\pi_{T-1}} & \mathbb{E}\left[ \sum_{t=0}^{T-1} \big(g_{\mathrm{x},t}(x_t) + g_{\mathrm{u},t}(u_t)\big)  + g_T(x_T)\right] \\
    \textnormal{s.t.} & x_{t+1} = a_t x_t + b_t u_t + \xi_{t+1}, \, x_0=0 \\
    &x_t \in \X, \ t=0,\hdots, T \\
    &u_t=\pi_t(x_t) \in \U,\ t=0,\hdots, T-1 \, ,
    \end{array}
  \end{align}
for $\X=[-U_{\mathrm{x}},U_{\mathrm{x}}]$ and $\U=[0,U_{\mathrm{u}}]$, where $g_T$ and $g_{\mathrm{u},t}$ are quadratic functions (with nonzero lead coefficient), and $g_{\mathrm{x},t}=0$ for all $t=0,\ldots,T-1$. In addition, $a_t=b_t=1$ for all $t=1,\ldots,T-2$ and $a_{T-1}=-b_{T-1}=1$.
Furthermore, we have $\xi_1 = \xi_{T} = 0, \xi_{t+1} = -Z_t \; \forall t = 1,\dots,T-2$. This is a one-dimensional continuous stochastic convex DP problem with specific structure; we show that this problem is already hard, so that, by extension, the class of convex DP problems studied in this paper is hard as well.

\begin{proposition} \label{prop_hardDP}
It is \#P-hard to compute the optimal initial action $\pi^\star(x_0)$ of problem \eqref{eq_hardDPproblem}.
\end{proposition}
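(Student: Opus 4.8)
The plan is to reduce a known $\#$P-hard counting problem to the computation of $\pi^\star(x_0)$ for the DP~\eqref{eq_hardDPproblem}. The natural candidate is \#\textsc{Subset-Sum} (equivalently, counting the number of subsets of $\{\alpha_{1,1},\dots,\alpha_{n,1}\}$ whose sum equals a prescribed target, or the closely related problem of evaluating the distribution function of a sum of independent Bernoulli-scaled integers), which is the archetypal $\#$P-hard problem and matches the combinatorial structure built into the random variables $Z_1,\dots,Z_n$. Observe that the noise terms $\xi_{t+1}=-Z_t$ for $t=1,\dots,T-2$ are independent, each equal to $-\alpha_{t,1}$ or $0$ with probability $1/2$, so the accumulated state just before the final stage encodes a uniformly random subset sum $S:=\sum_{t\in A}\alpha_{t,1}$ where $A\subseteq[n]$ is a uniformly random subset. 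The distribution of $S$ is therefore exactly the quantity whose evaluation is $\#$P-hard, and the dynamics with $a_t=b_t=1$ (a free integrator) together with $a_{T-1}=-b_{T-1}=1$ are designed precisely so that the final optimization ``reads off'' a statistic of this distribution.

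First I would unroll the recursion explicitly. Because $x_0=0$, $a_t=b_t=1$ for the middle stages, and $g_{\mathrm{x},t}=0$ with $g_{\mathrm{u},t}$ quadratic, the middle-stage minimizations are trivial or can be absorbed, and the state $x_{T-1}$ entering the last stage is a sum of the realized noises plus whatever control has been exerted. The key is the terminal structure: with $a_{T-1}=-b_{T-1}=1$ we have $x_T=x_{T-1}-u_{T-1}+\xi_T=x_{T-1}-u_{T-1}$, so the final control $u_{T-1}\in[0,U_{\mathrm{u}}]$ lets us shift $x_{T-1}$ downward (but not below a clipped value, since $u_{T-1}\ge 0$) before paying the quadratic terminal cost $g_T(x_T)$. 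The expectation over $\xi$ then makes the initial value function, and hence $\pi^\star(x_0)=\arg\min_{u_0}\{g_{\mathrm{u},0}(u_0)+\E[J_1(x_1)]\}$, a smooth function of the problem data whose first-order optimality condition involves $\E[g_T'(\cdot)]$ evaluated against the subset-sum distribution. By choosing the quadratic coefficients of $g_T$ and $g_{\mathrm{u},t}$ and the bounds $U_{\mathrm{x}},U_{\mathrm{u}}$ appropriately, I would arrange that $\pi^\star(x_0)$ is an affine (or otherwise easily invertible) function of $\E[S]$ or of a tail probability $\PP[S\le \theta]$ for a tunable threshold $\theta$ built into the terminal cost; extracting this expected value to sufficient precision then yields the $\#$P-hard count.

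The key steps, in order, are: (1) identify the explicit $\#$P-hard source problem and the polynomial-time reduction that maps its instance to a choice of $\{\alpha_{i,1}\}$, $T=n+2$, and the quadratic cost coefficients; (2) solve the last two stages of~\eqref{eq_hardDPproblem} in closed form using the quadratic terminal cost, showing that the minimization over $u_{T-1}\in[0,U_{\mathrm{u}}]$ produces a piecewise-quadratic value function whose expectation over the subset-sum distribution is a weighted count of subsets; (3) propagate back through the free-integrator middle stages to express $J_1$ and finally $\E[J_1(x_1)]$ as an exactly computable function of the distribution of $S$; and (4) show that the first-order condition defining $\pi^\star(x_0)$ lets one recover the relevant distributional quantity (a sum, a tail probability, or a single mass) in polynomial time from $\pi^\star(x_0)$, completing the reduction. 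I expect the main obstacle to be step (2)–(3): one must verify that the clipping induced by the constraints $u\in[0,U_{\mathrm{u}}]$ and $x\in[-U_{\mathrm{x}},U_{\mathrm{x}}]$ does not destroy the exact correspondence between $\pi^\star(x_0)$ and the counting quantity, i.e.\ that the bounds $U_{\mathrm{x}},U_{\mathrm{u}}$ can be set large enough (polynomially in the instance size and $\max_i\alpha_{i,1}$) that no realization of $S$ is truncated, while keeping all costs polynomially bounded so that the reduction and the required precision in $\pi^\star(x_0)$ remain polynomial. Care is also needed to confirm the problem genuinely satisfies the convex-DP assumptions (the quadratic costs have nonzero lead coefficient, so strong convexity holds and $\gamma=1$ per Remark~\ref{rmk_gamma}), ensuring it lies inside the class our algorithm addresses.
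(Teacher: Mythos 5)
You correctly identify the source problem (evaluating the CDF of a sum of independent two-point random variables, which is the \#P-hard Problem~\ref{problem_CDF} the paper reduces from), and you correctly flag the need to verify that the box constraints do not truncate any realization of $S=\sum_t Z_t$. However, the core of the reduction --- your step (4), extracting a \#P-hard quantity from $\pi^\star(x_0)$ --- is not worked out, and the most natural reading of your sketch fails. You propose that the first-order condition, which involves $\E[g_T'(\cdot)]$ for a quadratic terminal cost, makes $\pi^\star(x_0)$ an invertible function of ``$\E[S]$ or of a tail probability $\PP[S\le\theta]$''. The first option is a dead end: $\E[S]=\tfrac12\sum_i\alpha_{i,1}$ is computable in polynomial time, so recovering it proves nothing. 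The second option is the right target, but a tail probability does not arise from differentiating a quadratic terminal penalty through free-integrator dynamics; that derivative is linear in $u_0-\E[S]$ and again yields the mean. To make the optimizer a \emph{quantile} of the distribution of $S$ you need an asymmetric piecewise-linear (newsvendor) cost structure, which is exactly what the paper builds: a cheap linear purchase cost $(1-\lambda)u$ at stage $0$, a more expensive linear recourse cost $u$ at the intermediate stages (so unmet demand can be covered later at unit cost $1$), a free downward adjustment at stage $T-1$, and a terminal quadratic so heavy that $x_T=0$ is forced. With $\beta=0$ this is literally a newsvendor problem with demand $S$, overage cost $1-\lambda$ and underage cost $\lambda$, so $u_0^\star$ is the $\lambda$-quantile of $S$, and comparing it to $\Lambda$ answers the \#P-hard CDF question. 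Your sketch, by contrast, has the intermediate minimizations ``trivial or absorbed'' and puts all the work on the terminal stage, which removes the recourse asymmetry that generates the quantile.

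A second missing piece is the perturbation argument. Problem~\eqref{eq_hardDPproblem} requires the stage costs to be quadratic with nonzero lead coefficient, but the newsvendor structure above is linear; the paper therefore adds a quadratic term $\beta u^2$ with $\beta=\bar\lambda/(8m^2n^2)$ and proves that the perturbed optimal first-stage action stays within $[u_0^\star-\tfrac18,\,u_0^\star]$, so that rounding recovers the integer quantile $u_0^\star$ exactly. You treat strong convexity as a box to tick (``nonzero lead coefficient, so strong convexity holds''), but in this reduction the strongly convex terms must be made \emph{negligible} relative to the linear cost asymmetry, and showing that they do not move the optimizer past a rounding threshold is a genuine step of the proof, not a formality.
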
 
The proof is given in Appendix~\ref{app_pf_hardness}.
Problem~\eqref{eq_hardDPproblem} can be solved by Algorithm~\ref{algo_QDP_stoch} because it satisfies Assumptions~\ref{ass:modelling} and~\ref{ass:feasibility_stoch}, and it is a convex DP problem, i.e., the shifted DP operator is convexity preserving, as ensured by Lemma~\ref{lem_cont_convexity_preserving}. We also remark  that for this problem, the running time parameter $\gamma$  is equal to $1$, as ensured by Remark~\ref{rmk_gamma}.

\subsection{Quadratic quantum speedup} \label{sec_speedup}
In this section we show that Algorithm~\ref{algo_QDP_stoch} achieves a quadratic speedup compared to the classical Bellman approach for continuous stochastic DP problems.
\paragraph{One-dimensional problems.}
For continuous stochastic DP problems with a one-dimensional state and action space and quadratic cost functions, Algorithm~\ref{algo_QDP_stoch} computes an $\eps$-approximation of the value function $V_0(m_i)$ and a $\sqrt{4\eps/\mu_{g_{\mathrm{u}}}}$-approximation of the corresponding optimal policy $\pi^\star_0(m_i)$ in time
\begin{align} \label{eq_quantum_runtime_1d}
    \tilde O\big(r T^{3/2}/\sqrt{\eps}\big) \, , 
\end{align}
see Corollary~\ref{cor_QDP_stoch}.\footnote{Recall that $\gamma=1$ for these problems, as discussed in Remark~\ref{rmk_gamma}} The optimization problem \eqref{eq_optPolicy_via_LFT_robust_stoch}, required to solve for obtaining the optimal policy, does not affect the overall running time (modulo polylogarithmic factors) because it can be done efficiently via binary search.

The standard classical approach to tackle such problems is to discretize the state and action space, and then use the textbook DP algorithm to calculate an $\eps$-solution to the value function. The running time thus scales as $O(r T |\cX_N| \polylog(|\cU_M|)$, where $\cX_N$ and $\cU_M$ denote the discretized state and action space, respectively. Because the problem is one-dimensional, the optimization step over the action space can be solved with binary search, hence $M$ appears polylogarithmically in the running time expression. To ensure an $\eps$-approximation of the value function, the discretization parameter $N$ needs to be of order $N\sim T/\eps$; thus, the overall running time scales as
\begin{align} \label{eq_classical_runtime_1d}
    \tilde O\big(r T^{2}/\eps\big) \, .
\end{align}
Given an approximation with error $\eps$ of the value function $V_0(m_i)$, we can can approximately compute the corresponding optimal policy $\pi^\star_0(m_i)$, up to  error $\sqrt{4\eps/\mu_{g_{\mathrm{u}}}}$, using a similar argument as in the proof of Corollary~\ref{cor_optimal_policy_stoch} (the proof is based on the LFT approach, but the same error bound can be proven for the standard Bellman recursion on a discretized state space).

Comparing~\eqref{eq_quantum_runtime_1d} to~\eqref{eq_classical_runtime_1d} shows that we obtain a quadratic quantum speedup in $N=T/\eps$. We remark that there may be specialized classical approaches that are potentially more efficient than the standard Bellman recursion on a discretized problem. For example, \cite{halman2016fully} describes a fully polynomial-time approximation scheme (FPTAS) for one-dimensional continuous stochastic convex DP problems; the assumptions of their model are slightly different (e.g., the cost functions do not have to be strongly convex, but they have to be nonnegative, and the approximation scheme returns a solution with absolute and relative error, both of which have to be nonzero), and the running time of the algorithm is $\tilde{O}(T^2/\varepsilon)$, where $\varepsilon$ is the relative error --- differently from the absolute error used everywhere else in this paper.

\paragraph{Multidimensional problems.} For continuous stochastic DP problems with $d$-dimensional state space, and quadratic cost functions with condition number $1$, Algorithm~\ref{algo_QDP_stoch} computes an $\eps$-approximation of $V_0(m_i)$ in time
\begin{align}\label{eq_quantum_runtime_d}
    \tilde O\big(r T (T/\eps)^{d/2} \big) \, ,
\end{align}
as guaranteed by Corollary~\ref{cor_QDP_stoch}. Algorithm~\ref{algo_QDP_stoch} also computes a $\sqrt{4\eps/\mu_{g_{\mathrm{u}}}}$-approximation of the corresponding optimal policy $\pi^\star_0(m_i)$; this requires solving the optimization problem \eqref{eq_optPolicy_via_LFT_robust_stoch}, which, since the original problem is continuous, can be solved as a convex mathematical optimization problem in time $\tilde{O}(\poly(d))$.

The classical Bellman recursion, with discretization step $\sim T/\epsilon$ along each axis as in the quantum algorithm, computes an $\eps$-approximation of $V_0(m_i)$ and a $\sqrt{4\eps/\mu_{g_{\mathrm{u}}}}$-approximation of $\pi^\star_0(m_i)$ in time
\begin{align}\label{eq_classicl_runtime_d}
    \tilde O\big(r T (T/\eps)^{d} T_{\mathrm{act}} \big) \, ,
\end{align}
where $T_{\mathrm{act}}$ is the time to choose the optimal action for a given state. As in the one-dimensional case, we observe a quadratic quantum speedup in $N \sim T/\epsilon$ when comparing~\eqref{eq_quantum_runtime_d} with~\eqref{eq_classicl_runtime_d}.

\subsection{Optimality in the oracle setting} \label{sec_oracle_optimal}
The purpose of this section is to show that the quantum algorithm for solving deterministic DP problems presented in Section~\ref{sec_quantum_algo} cannot be substantially improved for problems with $\gamma = 1$.
\begin{proposition} \label{prop_lower_bound_det}
There exist deterministic DP problems satisfying Assumptions~\ref{ass:modelling} and~\ref{ass:feasibility} with $\cX_{2^d}^d \subset [0,1]^d$ such that any quantum algorithm that outputs $J_0(x)$ for any $x\in \cX_{2^d}^d$ requires $\Omega(\sqrt{2^d}/d)$ evaluations of the cost functions.
\end{proposition}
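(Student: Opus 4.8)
\emph{Proof proposal.} The plan is to reduce the evaluation of the deterministic value function $J_0$ at a single state to the computation of one value of a discrete Legendre--Fenchel transform, and then invoke the quantum query lower bound for the discrete LFT established in \cite[Section~6]{QLFT20}. Concretely, I would use as a black box the quantum analogue of \cite[Proposition~6.1]{QLFT20} and \cite[Proposition~6.3]{QLFT20}: for every $d$ there is a $d$-dimensional, convex extensible hard instance $\tilde f$ on a grid of $N=2^d$ points in $[0,1]^d$, lying in the $\gamma=1$ regime, such that any quantum algorithm outputting $\tilde f^{*}(s)$ at a worst-case (or uniformly random) dual point $s$ must make $\Omega(\sqrt{2^d}/d)$ queries to the oracle for $\tilde f$. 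The remaining task is to package such an $\tilde f$ as the cost data of a convex DP problem whose value function encodes $\tilde f^{*}$, with only $O(1)$ oracle overhead per step.

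To build the instance I set the horizon $T=1$, take the state and action grids to both equal the hard grid inside $[0,1]^d$, and choose linear dynamics with $A'=\mathds{1}$ and $B'=-\mathds{1}$, so that $f(x,u)=x-u$. I take $g_{\mathrm{x}}(x)=\tfrac12\norm{x}^2$ and the separable quadratic terminal cost $J_T(w)=\tfrac12\norm{w}^2$, and I set the action cost to $g_{\mathrm{u}}:=\tilde f-\tfrac12\norm{\cdot}^2$, which is convex once $\tilde f$ is rescaled to be $1$-strongly convex. With these choices a single application of the DP operator is a Moreau-type envelope, and completing the square (using $g_{\mathrm{u}}(u)=\tilde f(u)-\tfrac12\norm{u}^2$) gives
\begin{align*}
J_0(x)=\mathrm{DP}[J_T](x)=g_{\mathrm{x}}(x)+\min_{u\in\U^c}\Big\{g_{\mathrm{u}}(u)+\tfrac12\norm{x-u}^2\Big\}=\norm{x}^2-\tilde f^{*}(x),
\end{align*}
where $\tilde f^{*}$ is the discrete LFT of $\tilde f$. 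Since the term $\norm{x}^2$ is explicit and needs no oracle call, an evaluation of $J_0(x)$ at a grid point $x$ returns $\tilde f^{*}(x)$ up to a known constant; moreover every evaluation of $g_{\mathrm{x}},J_T$ is oracle-free and every evaluation of $g_{\mathrm{u}}$ costs exactly one query to $\tilde f$. Hence a quantum algorithm computing $J_0(x)$ with $q$ cost-function evaluations would compute $\tilde f^{*}(x)$ with at most $q$ queries to $\tilde f$, and the imported bound forces $q=\Omega(\sqrt{2^d}/d)$.

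It then remains to verify admissibility and the value of $\gamma$. Assumption~\ref{ass:modelling} holds by construction: the dynamics are linear, the cost is separable, and $g_{\mathrm{x}},J_T,g_{\mathrm{u}}$ are convex and Lipschitz on the bounded domain. Compactness of the (rescaled) state and action sets yields a minimizer with $x-u_x^{\star}\in\X^d$, giving Assumption~\ref{ass:feasibility}. For the running-time parameter, $\kappa_{g_{\mathrm{x}}}=\kappa_{J_T}=1$, and because $\tilde f$ is taken in the $\gamma=1$ regime the Hessian of $g_{\mathrm{u}}=\tilde f-\tfrac12\norm{\cdot}^2$ is a positive multiple of the identity, so $\kappa_{g_{\mathrm{u}}}=1$ as well; Remark~\ref{rmk_gamma}\,(i) then yields $\gamma=1$, placing the instance in exactly the regime for which optimality is claimed.

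I expect the main obstacle to lie not in the algebra above but in pinning down the last paragraph: one must confirm that the hard LFT family of \cite[Section~6]{QLFT20} can genuinely be realized as cost data satisfying Assumption~\ref{ass:modelling} while keeping $\gamma=1$ (equivalently, keeping either the condition number or the relevant $W$-parameter equal to one after subtracting the public quadratic), and that the per-step query overhead is honestly $O(1)$, so that the $1/d$ loss is inherited verbatim from the QLFT bound rather than amplified by the reduction. A secondary point is to ensure that the feasibility Assumption~\ref{ass:feasibility} holds precisely at the hard state $x$ whose value is expensive, which is arranged by the choice $A'=\mathds{1}$, $B'=-\mathds{1}$ making the dual point coincide with an admissible state.
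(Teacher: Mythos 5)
Your reduction is genuinely different from the paper's proof, and the algebra at its core is correct and elegant: with $T=1$, $f(x,u)=x-u$, $g_{\mathrm{x}}=J_T=\tfrac12\norm{\cdot}^2$ and $g_{\mathrm{u}}=\tilde f-\tfrac12\norm{\cdot}^2$, completing the square does give $J_0(x)=\norm{x}^2-\tilde f^{*}(x)$, so a single value-function evaluation yields a single discrete-LFT value with one oracle call per $g_{\mathrm{u}}$ evaluation. The paper instead avoids the LFT entirely: it takes $g_{\mathrm{x}}=g_{\mathrm{u}}=0$, terminal cost $J_1(x)=\max_{i\in[d]}|x_i-\alpha_i|$ for a hidden string $\alpha\in\{0,1\}^d$, $A'=0$, and $B'$ the projection onto all coordinates except the $k$-th, so that $J_0(x)=|x_k-\alpha_k|$; then $d$ evaluations of $J_0$ recover $\alpha$, and Zalka's $\Omega(\sqrt{2^d})$ search lower bound gives the $\Omega(\sqrt{2^d}/d)$ bound directly. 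The two constructions are morally the same hard instance (the paper explicitly models $J_1$ on the hard function of \cite[Proposition~6.1]{QLFT20}), but the paper's proof is self-contained whereas yours outsources the hardness.

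That outsourcing is where the genuine gap lies. The results you cite, \cite[Propositions~6.1 and~6.3]{QLFT20}, are \emph{classical} lower bounds of $\tilde{\Omega}(2^d)$; the present paper nowhere asserts that \cite{QLFT20} contains a citable \emph{quantum} query lower bound of $\Omega(\sqrt{2^d}/d)$ for outputting $\tilde f^{*}(s)$ at a single dual point, and its own proof does not import one --- it re-derives exactly that statement via the Grover argument. So your ``black box'' is, as far as this paper establishes, the very thing to be proved; if it is not available you must supply the hidden-string/Grover argument yourself, at which point your LFT detour adds nothing. A second, concrete error: you assert that the Hessian of $g_{\mathrm{u}}=\tilde f-\tfrac12\norm{\cdot}^2$ is a positive multiple of the identity, hence $\kappa_{g_{\mathrm{u}}}=1$. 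The natural hard instance is piecewise linear ($\max_i|x_i-\alpha_i|$ plus a public quadratic), so after subtracting the quadratic $g_{\mathrm{u}}$ is convex but neither differentiable nor strongly convex; no Hessian exists, and one must argue via the $W$-parameter of \eqref{eq_W} (as the paper does in the discussion following the proposition) rather than via condition numbers. You flag this in your last paragraph but the claim as written is false. Finally, feasibility for $f(x,u)=x-u$ on a grid in $[0,1]^d$ is not automatic (differences of grid points leave $[0,1]^d$); this is fixable by shifting/enlarging the state space but needs to be said, whereas the paper's choice $A'=0$ sidesteps it entirely.
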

\begin{proof}
We consider a sequence of purely discrete DP problems where $T=1$, and the discretized state space is chosen to be $\{0,1\}^d$. For some $\alpha \in \{0,1\}^d$, we choose the terminal cost function (and also value function) to be $J_1(x)=\max_{i \in [d]}|x_i - \alpha_i|$, similar to~\cite[Proposition~6.1]{QLFT20}; it is easy to see that a call to $J_1(x)$ can be simulated with one call to a function $f_\alpha$ such that $f_\alpha(y)=1$ if and only if $y=\alpha$. Furthermore let $g_{\mathrm{x}}(\cdot)=g_{\mathrm{u}}(\cdot)=0$. For every $k=1,\dots,d$, construct a problem where $A'$ is the all-zero matrix and $B'=(e_1,\ldots,e_{k-1},e_{k+1},\ldots,e_d)$, where $e_k$ is the all-zero vector with a one at the $k$-th entry. We thus see that $J_0(x)=|x_k-\alpha_k|$ for all $k \in [d]$; indeed, at state $x$ the action allows us to change all the digits except position $k$, so if $\alpha_k = x_k$ we can choose action $\alpha-x$ and pay total cost $0$, if $\alpha_k \neq x_k$ we cannot reach the optimal $\alpha$ so we have to pay $1$. As a result, by evaluating $J_0(e_k)$ we are able to determine $\alpha_k$, and by repeating this process $k$ times we can determine $\alpha$, evaluating the value function at $d$ different points. It is known that determining $\alpha$ requires $\Omega(\sqrt{2^d})$~\cite{zalka1999grover} evaluations of $f_\alpha$, hence evaluating value function at a single point requires $\Omega(\sqrt{2^d}/d)$ cost function evaluations.
\end{proof}
Proposition~\ref{prop_lower_bound_det} establishes a lower bound on the number of evaluations of the cost functions, and Algorithm~\ref{algo_QDP} attains this lower bound from Proposition~\ref{prop_lower_bound_det} up to polylogarithmic factors. To see this, note that for the problem described in the proof above we have $\gamma=1$: this can be verified by evaluating the $\phi$ function and working with the $W$-parameter~\eqref{eq_W} introduced in~\cite{QLFT20}, instead of the condition number if the function is not differentiable; see a similar discussion in \cite[Section 6]{QLFT20}. Furthermore, \cite[Section 6]{QLFT20} shows that there is an appropriate choice of primal and dual space for the function $J_1$, so that the QLFT computations are exact, implying that we are able to compute $J_0$ with no error. Finally, note that the value function at each stage is convex, as shown in the proof. Hence, by Corollary~\ref{cor_eval_J0} Algorithm~\ref{algo_QDP} outputs value function at a specific point in time $O(\sqrt{2^d}\, \polylog(2^d,(\sqrt{d}/\eps)^d))$: up to polylogarithmic factors, this matches the lower bound of  Proposition~\ref{prop_lower_bound_det}.




\paragraph{Acknowledgments}
We thank Peyman Mohajerin Esfahani for discussions on the connections between dynamic programming and the discrete Legendre-Fenchel transform, related to~\cite{peyman}.

\appendix
\section{Proofs}
\subsection{Proof of Lemma~\ref{lem_discreteLFT_approx}} \label{app_pfLemmadiscLFT}
We start by proving the first statement.
By definition of the LFT we can write:
\begin{align*}
    |f\!\protoast\!(s) - f^*(s)|
    = \max_{x \in \X^d}\{ \inprod{s}{x}-f(x)\} - \max_{z \in \cX_N^d}\{ \inprod{s}{z} -f(z)\}
    \leq \inprod{s}{x^\star-z'} + |f(x^\star) - f(z')| \, ,
\end{align*}
where $x^\star$ denotes the optimizer of the first maximization problem, and $z'$ is any point in $\cX_N^d$; in particular, we can choose $z'$ to be the point in $\cX_N^d$ closest to $x^\star$. Applying the Cauchy-Schwarz inequality then gives
\begin{align*}
   |f\!\protoast(s) - f^*(s)|
   \leq (\norm{s} + L_f) \norm{x^\star-z'}
   \leq (\sqrt{d}L_f+L_f) \distH(\X^d,\cX_N^d) \, ,
\end{align*}
where the final step uses the fact that the largest element in the dual space is bounded by $L_f$~\cite[Remark~3.2]{QLFT20}.

We next prove the second statement of the lemma. By definition of the LFT, for any $s_1,s_2 \in \mathbb{S}^d$ we have
\begin{align*}
|f\!\protoast(s_1) -f\!\protoast(s_2)|
=\big| \max_{x \in \X^d}\{\inprod{s_1}{x} - f(x)\} -\max_{x \in \X^d}\{\inprod{s_2}{x} - f(x)\}  \big| \, .
\end{align*}
Let $x_1^\star$ and $x_2^\star$ denote the optimizers of the first and second maximization problem above. We then define $x^\star=x_1^\star$ if $\inprod{s_1}{x_1^\star} - f(x_1^\star) \geq \inprod{s_2}{x_2^\star} - f(x_2^\star)$ and $x^\star=x_2^\star$, otherwise. Hence, we find
\begin{align*}
|f\!\protoast(s_1) -f\!\protoast(s_2)|
\leq |\inprod{s_1-s_2}{x^\star}|
\leq \norm{s_1-s_2} \Delta_{\X^d} \, ,
\end{align*}
where the last step uses Cauchy-Schwarz. The statement for the discrete LFT follows analogously. 

We next prove the third statement. By the triangle inequality we find
\begin{align*}
|f\!\protoast\!\protoast(x) - f^{**}(x)|
&\leq |f\!\protoast\!\protoast(x) - f\!\protoast^*(x)|+|f\!\protoast^*(x) - f^{**}(x)|\\
&\leq (1+\sqrt{d})L_{f\protoast} \distH(\mathbb{S}^d,\cS_K^d) + (1+\sqrt{d})L_{f} \distH(\mathbb{X}^d,\cX_N^d) \\
& \leq (1+\sqrt{d})\Delta_{\X^d} \distH(\mathbb{S}^d,\cS_K^d) + (1+\sqrt{d})L_{f} \distH(\mathbb{X}^d,\cX_N^d) \, ,
\end{align*}
where the penultimate step uses Lemma~\ref{lem_contLFT} and~\eqref{it_LFT1}. The final step follows from~\eqref{it_LFT2}.

It remains to prove the final statement of the lemma.
In case $\X^d$ is compact convex the statement follows from the  Fenchel-Moreau theorem~\cite{rockafellar70}. If $\X^d$ contains discrete parts the statement requires some more work. Let $\bar f$ denote the convex extension of $f$ such that $f(x)=\bar f(x)$ for all $x \in \X^d$.
By definition of the LFT we have
\begin{align} \label{eq_discLFT1}
f\!\protoast\!\protoast(x)
=\max_{s \in \mathbb{S}^d} \min_{x' \in \X^d} \{ \inprod{x-x'}{s}+f(x') \} \, .
\end{align}
We thus see for all $x \in \X^d$
\begin{align} 
f\!\protoast\!\protoast(x)
&= \max_{s \in \mathbb{S}^d} \min_{x' \in \X^d} \{ \inprod{x-x'}{s}+\bar f(x') \} \nonumber  \\
&\geq \max_{s \in \mathbb{S}^d} \min_{x' \in \conv(\X^d)} \{ \inprod{x-x'}{s}+\bar f(x') \} 
= \bar f\!\protoast\!\protoast(x)
=\bar f(x)
=f(x) \, , \label{eq_discLFT2}
\end{align}
where the penultimate step follows from the Fenchel-Moreau theorem.
By choosing $x'=x$ in~\eqref{eq_discLFT1} we find $f\!\protoast\!\protoast(x) \leq f(x)$ which together with~\eqref{eq_discLFT2} proves the assertion.
\qed
\subsection{Proof of Proposition~\ref{prop_QLFT}} \label{app_proof_QLFT}
From the starting state we prepare 
\begin{align}
   \frac{1}{\sqrt{N}} \sum_{i=0}^{N-1} \ket{i} \ket{x_{i-\ell -1}^{i+\ell+1}}  \ket{f(x_{i-\ell -1}^{i+\ell+1})} \ket{c_{i-\ell-1}^{i+\ell}} \ket{\textnormal{Garbage}(i)} \, , \label{eq_step1_dsss}
\end{align}
following the same steps as in~\cite[Step~2 in the proof of Theorem~4.7]{QLFT20}. We next define a parameter 
\begin{align} 
W:=\left \lfloor \max_{i \in\{1,\ldots,N-2\}}\{c_{\ci}-c_{i-1}\} \frac{1}{\delta_s}  \right \rfloor \, , \label{eq_W}
\end{align}
a set
\begin{align*} 
    &\cA:=\Big \lbrace (i,m) \in [N]\times [W]: \Big(\left \lfloor \frac{c_{\ci} - c_{i-1}}{\delta_s} \right \rfloor \geq  m+1 \wedge i\in \{1,\ldots,N-2\}\Big)  \nonumber \\
    &\hspace{60mm}\vee (m=0 \wedge i=0)\vee (m=0 \wedge i=N-1) \Big \rbrace\, ,
\end{align*}
and a function
\begin{align*}
 j(i,m,c_{i-1}):=\left \lbrace
 \begin{array}{cl}
    \emptyset  & \textnormal{if } (i,m) \not \in \cA\\
      0 &  \textnormal{if } i=0 \wedge m=0 \\
      k-1 &\textnormal{if } i=n-1 \wedge m=0\\
      \min\limits_{\ell}\{\ell+m:\, c_{i-1} < s_{\ell} \wedge \ell\in[N] \} & \textnormal{otherwise}\,.
 \end{array}
 \right. 
\end{align*}
The intuition for these parameters is given in~\cite[Section~3.1]{QLFT20}, see also the proof of \cite[Theorem 4.7]{QLFT20}.
We next evolve the state~\eqref{eq_step1_dsss} into
\begin{align*}
 &\frac{1}{\sqrt{NW}} \sum_{i=0}^{N-1} \sum_{m=0}^{W-1} \ket{i}\ket{x_{i-\ell -1}^{i+\ell+1}}  \ket{f(x_{i-\ell -1}^{i+\ell+1})} \ket{c_{i-\ell-1}^{i+\ell}} \ket{m} \ket{\mathds{1}\{(h,m) \in \cA \, \forall \, h=i-\ell-1,\ldots,i+\ell+1 \}} \nonumber \\ 
 &\hspace{80mm}\ket{j(i,m,c_{i-1})} \ket{\textnormal{Garbage}(i)} \, . 
\end{align*}
We then uncompute the registers $\ket{x_{i-\ell-1},x_{i+\ell+1}}\ket{f(x_{i-\ell-1}),f(x_{i+\ell+1})}\ket{c_{i-\ell-1}^{i+\ell}}$, which gives
\begin{align}
&\frac{1}{\sqrt{NW}} \sum_{i=0}^{N-1} \sum_{m=0}^{W-1} \ket{i} \ket{x_{i-\ell}^{i+\ell}} \ket{f(x_{i-\ell}^{i+\ell})} \ket{m} \ket{\mathds{1}\{(h,m) \in \cA \, \forall \, h=i-\ell-1,\ldots,i+\ell+1 \}} \nonumber \\
&\hspace{80mm}\ket{j(i,m,c_{i-1})} \ket{\textnormal{Garbage}(i)} \, . \label{eq_middle}
\end{align}
The algorithm is successful when the indicator function has value ``$1$'', as will be seen below. Before we discuss that, we observe that the indicator function can indeed have the value 1, i.e., the condition $(h,m) \in \cA$ for all $h=i-\ell-1,\ldots,i+\ell+1$ is verified for some value of $i$ and $m$. Indeed, by convexity of the function $f$, if the condition $ \lfloor \frac{c_{h} - c_{h-1}}{\delta_s} \rfloor \geq  m+1$ in the definition of $\cA$ is verified for $h=i-\ell-1$, it is verified for all other $h=i-\ell,\ldots,i+\ell+1$. As discussed in \cite{QLFT20}, the condition must be verified for some choice of $i$ and $m$, which shows that the indicator must have value 1 for some choice of $i$ and $m$. If we perform a measurement on the register with the indicator function, then conditioned on seeing the outcome ``$1$" and after a relabelling of the sum we obtain
\begin{align*}
   \frac{1}{\sqrt{K}} \sum_{j=0}^{K-1} \ket{j} \ket{\bar{x}_{j-\ell}^{j+\ell}}  \ket{f(\bar{x}_{j-\ell}^{j+\ell})}\ket{\textnormal{Garbage}(j)} \, ,  
\end{align*}
where $\bar{x}_j$ denotes the optimizer given in the definition of the LFT, i.e, $f^*(s_j)=s_j \bar{x}-f(\bar x)$. This step is probabilistic, and succeeds with probability $K/(NW) \geq 1/\kappa_f$~\cite[Theorem~4.5]{QLFT20}, where we used the fact that the indicator function in~\eqref{eq_middle} maps $N\times W$ nonzero indices to $K$ nonzero indices, due to the fact that each point in the dual space must have an optimizer in the primal space.
The final step follows the same lines as~\cite[Step~4 in proof of Theorem~4.5]{QLFT20}.\qed

\subsection{Proof of Theorem~\ref{thm_classical_error}} \label{app_pf_thm_error_bound}
For a compact convex space $\mathbb{\tilde S}^d\subseteq \R^d$ let $\mathbb{S}^d:=\{A'^\top s : s \in \mathbb{\tilde S}^d \}\subseteq \R^d$.
For the proof of Theorem~\ref{thm_classical_error} we use the notation $x=(y,z)$, and define the continuous conjugate DP operator analogously to $\mathrm{DP_{\!conj}}[J']$, with the difference that all the LFTs are continuous rather than discrete, i.e.,
\begin{align}
\mathrm{DP_{\!conj}^{cont}}[J](x) 
    :=\big(J\!\protoast\!(s) + g_{\mathrm{u}}\!\!\!\protoast\!(-{B'}^\top s) \big)\!\protoast\!(A'x) + g_{\mathrm{x}}(x) \, ,
\end{align}
for primal and dual spaces $\X^d$ and $\mathbb{\tilde S}^d$, respectively.
We next relate this operator with the biconjugate $\mathrm{DP_{shift}}[J]\!\protoast\!\protoast(x)$ for primal and dual spaces $\X^d$ and $\mathbb{S}^d$, respectively.
\begin{lemma} \label{lem_DP_conj_cont}
For the setting of Theorem~\ref{thm_classical_error} we have $\mathrm{DP_{\!conj}^{cont}}[J](x)=\mathrm{DP_{shift}}[J]\!\protoast\!\protoast(x) + g_{\mathrm{x}}(x)$ $\forall x \in \X^d$.
\end{lemma}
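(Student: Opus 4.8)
The plan is to reduce the claimed identity to a single pointwise statement about the conjugate of the shifted operator, and then prove that statement by a two-sided inequality. Write $F := \mathrm{DP_{shift}}[J]$, so that $F(x) = \min_{u\in\U^c}\{g_{\mathrm{u}}(u) + J(A'x + B'u)\}$, and set $h(s) := J\!\protoast\!(s) + g_{\mathrm{u}}\!\protoast\!(-{B'}^\top s)$, so that $\mathrm{DP_{\!conj}^{cont}}[J](x) = h\!\protoast\!(A'x) + g_{\mathrm{x}}(x)$. The right-hand side of the lemma is $F\!\protoast\!\protoast(x) + g_{\mathrm{x}}(x)$, so the additive term $g_{\mathrm{x}}(x)$ cancels and it suffices to prove $h\!\protoast\!(A'x) = F\!\protoast\!\protoast(x)$ for every $x\in\X^d$.

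First I would unfold the outer transform and change variables. By definition, $h\!\protoast\!(A'x) = \sup_{s\in\mathbb{\tilde S}^d}\{\langle A'x, s\rangle - h(s)\} = \sup_{s\in\mathbb{\tilde S}^d}\{\langle x, A'^\top s\rangle - h(s)\}$. If the pointwise identity $h(s) = F\!\protoast\!(A'^\top s)$ holds, then substituting $\sigma = A'^\top s$ and using the defining relation $\mathbb{S}^d = \{A'^\top s : s\in\mathbb{\tilde S}^d\}$ turns this supremum into $\sup_{\sigma\in\mathbb{S}^d}\{\langle x,\sigma\rangle - F\!\protoast\!(\sigma)\} = F\!\protoast\!\protoast(x)$, which is exactly what we want (the substitution preserves the supremum since we range over the full image of $A'^\top$). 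Hence the entire lemma reduces to the key claim
\[
F\!\protoast\!(A'^\top s) = h(s) \qquad \textnormal{for all } s\in\mathbb{\tilde S}^d .
\]

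Next I would prove the key claim by two inequalities. Expanding the definition gives $F\!\protoast\!(A'^\top s) = \sup_{x\in\X^d,\,u\in\U^c}\{\langle s, A'x\rangle - g_{\mathrm{u}}(u) - J(A'x + B'u)\}$, where admissibility forces the argument $w := A'x+B'u$ into $\X^d$. For the direction ``$\leq$'', each admissible objective can be rewritten as $[\langle s,w\rangle - J(w)] + [-\langle s, B'u\rangle - g_{\mathrm{u}}(u)]$, and bounding each bracket by its own supremum over $\X^d$ and $\U^c$ respectively yields $J\!\protoast\!(s) + g_{\mathrm{u}}\!\protoast\!(-{B'}^\top s) = h(s)$. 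For the direction ``$\geq$'', I would take near-optimizers $w^\star\in\X^d$ of $J\!\protoast\!(s)$ and $u^\star\in\U^c$ of $g_{\mathrm{u}}\!\protoast\!(-{B'}^\top s)$ (handling the suprema through maximizing sequences as in the proof of Lemma~\ref{lem_contLFT}), then use Assumption~\ref{ass:feasibility} to produce a state $x^\star\in\X^d$ with $A'x^\star = w^\star - B'u^\star$; plugging $x^\star$ into the supremum for $F\!\protoast\!(A'^\top s)$ and using $F(x^\star)\leq g_{\mathrm{u}}(u^\star) + J(w^\star)$ recovers $h(s)$.

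The main obstacle is precisely this construction of a feasible preimage $x^\star$ in the ``$\geq$'' direction. The supremum defining $h$ decouples $w$ and $u$, but to realize a prescribed pair inside $F\!\protoast\!(A'^\top s)$ one must hit it simultaneously through the linear map $(x,u)\mapsto A'x+B'u$, which in general is neither surjective onto $\X^d$ nor has $A'$ invertible. Making this rigorous is exactly where the feasibility hypothesis is essential, and I would structure the argument so that the required solvability $A'x^\star = w^\star - B'u^\star$ with $x^\star\in\X^d$ follows from Assumption~\ref{ass:feasibility}. A secondary technical point is to verify that the change of variables $\sigma = A'^\top s$ preserves suprema even when $A'^\top$ is not injective, which holds because one only needs equality of the two supremum values over $\mathbb{\tilde S}^d$ and its image $\mathbb{S}^d$.
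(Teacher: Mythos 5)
Your reduction to the single identity $h\!\protoast\!(A'x)=F\!\protoast\!\protoast(x)$ and the change of variables $\sigma=A'^\top s$ are fine, but the key claim you reduce everything to, $F\!\protoast\!(A'^\top s)=h(s)$ for all $s$, is false in general, and this is a genuine gap rather than a technicality to be absorbed by Assumption~\ref{ass:feasibility}. Only the direction $F\!\protoast\!(A'^\top s)\leq h(s)$ survives (your ``$\leq$'' argument), which after conjugation yields $h\!\protoast\!(A'x)\leq F\!\protoast\!\protoast(x)$; the reverse inequality of the lemma is exactly the part your plan does not deliver. For the ``$\geq$'' half you must solve $A'x^\star=w^\star-B'u^\star$ with $x^\star\in\X^d$ for \emph{independently} chosen near-optimizers $w^\star$ of $J\!\protoast\!(s)$ and $u^\star$ of $g_{\mathrm{u}}\!\protoast\!(-{B'}^\top s)$. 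Assumption~\ref{ass:feasibility} cannot provide this: it only asserts that for each given state $x$ some optimal action $u_x^\star$ keeps $A'x+B'u_x^\star$ inside $\X^d$; it says nothing about surjectivity of $x\mapsto A'x$ or about preimages of arbitrary targets $w^\star-B'u^\star$. A concrete counterexample: take $d=c=1$, $A'=0$, $B'=1$, $\X=\U=[-1,1]$, $J(w)=w^2/2$, $g_{\mathrm{u}}(u)=u^2/2$ (Assumption~\ref{ass:feasibility} holds, since $u_x^\star=0$ gives $A'x+B'u_x^\star=0\in\X$). Then $F(x)=\min_u\{u^2/2+J(u)\}=0$ is constant, so $F\!\protoast\!(A'^\top s)=F\!\protoast\!(0)=0$ for every $s$, whereas $h(s)=J\!\protoast\!(s)+g_{\mathrm{u}}\!\protoast\!(-s)=s^2$ for $|s|\leq 1$. (Note that $A'=0$ is not pathological here; the paper itself uses it in Proposition~\ref{prop_lower_bound_det}.) The lemma nevertheless holds in this example because $h\!\protoast\!(0)=-\inf_s h(s)=0=F\!\protoast\!\protoast(x)$: the pointwise failure is washed out by the outer supremum over $s$.

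This is precisely why the paper's proof never asserts your pointwise identity. It expands $h\!\protoast\!(A'x)$ as $\sup_{s}\min_{x',u}\big\{\inprod{A'x+B'u-x'}{s}+J(x')+g_{\mathrm{u}}(u)\big\}$ and performs the substitution $x'=A'\tilde x+B'u$ \emph{inside the joint minimization, underneath the supremum over $s$}, before re-parametrizing the dual variable and recognizing $\max_{s\in\mathbb{S}^d}\{\inprod{x}{s}-\mathrm{DP_{shift}}[J]\!\protoast\!(s)\}$; equality is claimed only for the whole sup--min expression, not stage-by-stage in $s$. To repair your argument you would have to abandon the decoupled key claim and prove $h\!\protoast\!(A'x)\geq F\!\protoast\!\protoast(x)$ directly at the level of the outer supremum, which essentially forces you back to the paper's route (or to a minimax-type argument).
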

\begin{proof}
By definition continuous conjugate DP operator we have
\begin{align*}
\mathrm{DP_{\!conj}^{cont}}[J](x)-g_{\mathrm{x}}(x)
&= \big(J\!\protoast(s)+g_{\mathrm{u}}\!\!\!\protoast(-B^\top s) \big)\!\protoast(A'x) \\
&= \sup_{s\in \mathbb{\tilde S}^d} \big\{ \inprod{A'x}{s} - J\!\protoast(s) - g_{\mathrm{u}}\!\!\!\protoast(-B^\top s) \big\}\\
&=\sup_{s\in \mathbb{\tilde S}^d} \min_{x' \in \X^d,u\in \U^c} \big\{ \inprod{Ax+Bu-x'}{s} +J(x') + g_{\mathrm{u}}(u) \big\} \, .
\end{align*}
Substituting $x'=A'\tilde x+B'u \in \X^d$ for $\tilde x \in \X^d$ gives
\begin{align}
\mathrm{DP_{\!conj}^{cont}}[J](x)-g_{\mathrm{x}}(x)
&=\max_{s\in \mathbb{\tilde S}^d} \min_{\tilde x \in \X^d,u\in \U^c} \big\{ \inprod{x-\tilde x}{A'^\top s} +J(A'\tilde x+B'u) + g_{\mathrm{u}}(u) \big\}\nonumber\\
& =\max_{s\in \mathbb{ S}^d} \min_{\tilde x \in \X^d,u\in \U^c} \big\{ \inprod{x-\tilde x}{s} +J(A'\tilde x+B'u) + g_{\mathrm{u}}(u) \big\}\label{eq_dsLemma_1} \, .
\end{align}
By definition of the shifted DP operator~\eqref{eq_DP_tilde} we find
\begin{align*}
 \mathrm{DP_{shift}}[J]\!\protoast\!\protoast(x)
 &= \max_{s\in \mathbb{ S}^d} \big\{\inprod{x}{s} -\mathrm{DP_{shift}}[J]\!\protoast(s) \big\} \\
 &= \max_{s\in \mathbb{ S}^d} \min_{\tilde x\in \X^d, u \in \U^c} \big\{\inprod{x-\tilde x}{s} +J(A'\tilde x+B'u) + g_{\mathrm{u}}(u) \big\} \, ,
\end{align*}
 which together with~\eqref{eq_dsLemma_1} proves the assertion. 

\end{proof}

\begin{proof}[Proof of Theorem~\ref{thm_classical_error}]
Recall that $x=(y,z)$, $\X^d=\Y^{d_{\dr}}\times \cZ^{d_{\ci}}_{N_{\ci}}$, and $\cX^d_N=\cY_{N_{\dr}}^{d_{\dr}} \times \cZ^{d_{\ci}}_{N_{\ci}}$.
We start by proving that $\mathrm{DP_{\!conj}}[J']$ is convex extensible. To do so recall that by definition of the conjugate DP operator~\eqref{eq_conjDPoperator} we have for all $y\in \cY_{N_{\dr}}^{d_{\dr}}$ and $z \in \cZ^{d_{\ci}}_{N_{\ci}}$
\begin{align}
 &\mathrm{DP_{conj}}[J'] (y,z)\nonumber \\
 &= g_{\mathrm{x}}(y,z) + h^*(Ay,Dz) \nonumber \\
 &= g_{\mathrm{x}}(y,z) + \max_{s\in\cS^{d_{\dr}}_{K_{\dr}}, s'\in {\cS'}^{d_{\ci}}_{K_{\ci}}}\big\{\inprod{s}{Ay}+\inprod{s'}{Dz} - h(s,s')\big\} \nonumber \\
 &= g_{\mathrm{x}}(y,z) + \max_{s\in\cS^{d_{\dr}}_{K_{\dr}}, s'\in {\cS'}^{d_{\ci}}_{K_{\ci}}}\{\inprod{s}{Ay}+\inprod{s'}{Dz} - g_{\mathrm{u}}\hspace{-2.5mm}\protoast \!(-B^\top s,-C^\top s -E^\top s') - {J'}^*(s,s')\}\nonumber  \\
 &=  g_{\mathrm{x}}(y,z) + \max_{s\in\cS^{d_{\dr}}_{K_{\dr}}, s'\in {\cS'}^{d_{\ci}}_{K_{\ci}}}\Big\{\inprod{s}{Ay}+\inprod{s'}{Dz}- \nonumber\\ 
 &\hspace{3mm}\max_{v\in\V^{d_{\dr}},w\in \cW^{c_{\ci}}_{M_{\ci}}}\!\!\!\big\{\inprod{-B^\top s}{v}\!+\!\inprod{-C^\top s-E^\top s'}{w}-g_{\mathrm{u}}(v,w)\big \}\!-\!\!\!\!\!\!\!\! \max_{q\in\cY^{d_{\dr}}_{N_{\dr}},q'\in \cZ^{d_{\ci}}_{N_{\ci}}}\!\!\!\!\big\{\inprod{s}{q}+\inprod{s'}{q'} - {J'}(q,q')\big\}\Big\} \nonumber \\
 &=  g_{\mathrm{x}}(y,z) + \max_{s\in\cS^{d_{\dr}}_{K_{\dr}}, s'\in {\cS'}^{d_{\ci}}_{K_{\ci}}}\Big\{\inprod{s}{Ay}+\inprod{s'}{Dz} +  \nonumber\\ 
 &\hspace{3mm}\min_{v\in\V^{d_{\dr}},w\in \cW^{c_{\ci}}_{M_{\ci}}}\!\!\!\big\{\inprod{B^\top s}{v}\!+\!\inprod{C^\top s\!+\!E^\top s'}{w}\!+\!g_{\mathrm{u}}(v,w)\big \}\!+\!\!\!\!\!\!\!\! \min_{q\in\cY^{d_{\dr}}_{N_{\dr}},q'\in \cZ^{d_{\ci}}_{N_{\ci}}}\!\!\big\{{J'}(q,q')\!-\!\inprod{s}{q}\!-\!\inprod{s'}{q'} \big\}\Big\}\, . \label{eq_convexExt1}
\end{align}
To show that $\mathrm{DP_{conj}}[J'](y,z)$ is convex extensible it suffices to show that there exists a convex function defined on the convex hull of $\cY^{d_{\dr}}_{N_\dr}\times \cZ^{d_\ci}_{N_\ci}$ that matches $\mathrm{DP_{conj}}[J'] (y,z)$ on the discrete points. This however follows from~\eqref{eq_convexExt1}: by Assumption~\ref{ass:modelling}, the function $g_{\mathrm{x}}$ is convex extensible; furthermore, the pointwise maximum of a family of convex functions (in $y, z$) is convex, which proves the assertion~\eqref{it_convexExtensible}.

We next prove the Statement~\eqref{it_Lipschitz}. By definition of the conjugate DP operator~\eqref{eq_conjDPoperator} we have for all $x,x' \in \cX_N^d$
\begin{align*}
|\mathrm{DP_{conj}}[J'](x) - \mathrm{DP_{conj}}[J'](x')|
& \leq |h^*(A'x)-h^*(A'x')| + |g_{\mathrm{x}}(x)-g_{\mathrm{x}}(x')| \\
& \leq \sqrt{d} L_{J} \| A'x - A'x' \| + L_{g_{\mathrm{x}}}\norm{x-x'} \\
& \leq (\sqrt{d} L_{J}\| A' \|_{\infty} + L_{g_{\mathrm{x}}}) \norm{x-x'} \\
& = (\sqrt{d} L_{J}\max\{\norm{A}_{\infty},\norm{D}_{\infty} \} + L_{g_{\mathrm{x}}}) \norm{x-x'} \, ,
\end{align*}
where the second step uses Statement~\eqref{it_LFT2} of Lemma~\ref{lem_discreteLFT_approx} and that nontrivial dual space for the discrete LFT if bounded by the Lipschitz constant of the function to be transformed~\cite[Remark~3.2]{QLFT20}, which implies $\Delta_{\cS^d_K} \leq \sqrt{d}L_J$. 
The penultimate step above follows by definition of the operator norm. This completes the proof of~\eqref{it_Lipschitz}.

It remains to prove Statement~\eqref{it_errorBound}. 
Lemma~\ref{lem_DP_conj_cont} implies that for all $x \in \X^d$
\begin{align}
  \mathrm{DP_{shift}}[J]\!\protoast\!\protoast(x)+ g_{\mathrm{x}}(x)
  &= \mathrm{DP_{\!conj}^{cont}}[J](x)\nonumber\\
  &= \big(J\!\protoast\!(s) + g_{\mathrm{u}}\!\!\!\protoast\!(-{B'}^\top s) \big)\!\protoast\!(A'x) \nonumber\\
  &\geq \big(J^*(s) + g_{\mathrm{u}}\!\!\!\protoast\!(-{B'}^\top s) + (1+\sqrt{d})L_{J}\distH(\X^d,\cX_N^d) \big)\!\protoast\!(A'x) \nonumber \\
  &=\big(J^*(s) + g_{\mathrm{u}}\!\!\!\protoast\!(-{B'}^\top s)\big)\!\protoast\!(A'x)  - (1+\sqrt{d})L_{J}\distH(\X^d,\cX_N^d) \, , \label{eq_DSS1}
\end{align}
where the penultimate step uses Lemma~\ref{lem_discreteLFT_approx}, the inequality $J\protoast \le J^* + (1+\sqrt{d})L_{J}\distH(\X^d,\cX_N^d)$, and Property~\eqref{it_reverse} from Fact~\ref{fact_LFT_basic}. The final step follows from Statement~\eqref{it_shift} in Fact~\ref{fact_LFT_basic}. Defining $\omega(s):=J^*(s) + g_{\mathrm{u}}\!\!\!\protoast\!(-{B'}^\top s)$ and using Lemma~\ref{lem_discreteLFT_approx} gives
\begin{align}
\omega\!\protoast\!(A'x)
&\geq \omega^*(A'x) - (1+\sqrt{d})L_{\omega}\distH(\mathbb{S}^d,\cS_K^d) \, , \label{eq_DSS2}
\end{align}
where $L_{\omega} \leq L_{J^*} + L_{g_{\mathrm{u}}\!\!\!\protoast} \leq \tau + \eta$. To see the final step note that 
\begin{align*}
    |J^*(s)-J^*(s')| 
    \leq |\max_{x\in \cX_N^d}\{\inprod{s}{x}-J(x)\} -\max_{x\in \cX_N^d}\{\inprod{s'}{x}-J(x)\} |
    \leq |\inprod{s-s'}{x^\star}|
    \leq \norm{s-s'} \norm{x^\star} \, .
\end{align*}
Combining~\eqref{eq_DSS1} and~\eqref{eq_DSS2} implies
\begin{align*}
 \mathrm{DP_{shift}}[J]\!\protoast\!\protoast(x)+ g_{\mathrm{x}}(x)
 \geq \big(J^*(s) + g_{\mathrm{u}}\!\!\!\protoast\!(-{B'}^\top s)\big)^*(A'x)  - E_1 -E_2
 =\mathrm{DP_{\!conj}}[J'](x) - E_1 - E_2 \, .
\end{align*}
The same steps can be applied to verify $\mathrm{DP_{shift}}[J]\!\protoast\!\protoast(x)+ g_{\mathrm{x}}(x) \leq \mathrm{DP_{\!conj}}[J'](x)  + E_1 + E_2$ which completes the proof.
\end{proof}

\subsection{Proof of Lemma~\ref{lem_optimal_policy}}\label{app_pf_optimal_policy}
By assumption the parameters $N_{\dr}$ and $K$ be sufficiently large such that
\begin{align} \label{eq_byass}
     E_1 + E_2 \leq \eps \, ,
\end{align}
For $E_1$ and $E_2$ defined in Theorem~\ref{thm_classical_error}.
Assume for the sake of contradiction that $\pi^\star(x)$ and $\hat \pi(x)$ are such that $\|\pi^\star(x)-\hat \pi(x)\| > \sqrt{4\eps/\mu_{g_{\mathrm{u}}}}$. From the DP scheme via the LFT we recall that
\begin{align*} 
 h^{*}(A'x)
=\inprod{A'x}{s_x^{\star}} -g_{\mathrm{u}}\hspace{-2.5mm}\protoast \!(-{B'}^\top s_x^{\star}) - J^{*}(s_x^{\star})
=\min_{u \in \U^c} \big\{g_{\mathrm{u}}(u) + \inprod{A'x + B'u}{ s_x^{\star}}  \big \}  - J^{*}(s_x^{\star}) \, .
\end{align*}
Hence, by definition of the DP operator we have for all $x\in\cX_N^d$
\begin{align}
\eps 
&\geq |\mathrm{DP}[J](x)-\mathrm{DP_{conj}}[J'](x)| \nonumber \\
&=\left|\min_{u \in \U^c} \{ g_{\mathrm{u}}(u) + J (A'x+B'u) \} - h^{*}(A'x) \right|\nonumber\\
&=\left|\min_{u \in \U^c} \{ g_{\mathrm{u}}(u) + J (A'x+B'u) \} - \min_{u \in \U^c} \big\{g_{\mathrm{u}}(u) + \inprod{A'x + B'u}{ s_x^{\star}}  \big \}  + J^{*}(s_x^{\star}) \right| \, . \label{eq_step1_ds}
\end{align}
In addition we have for all $x\in \cX_N^d$ and $u\in \U^c$
\begin{align}
   g_{\mathrm{u}}(u)+ J(A'x+B'u)
   &=g_{\mathrm{u}}(u)+ J\!\protoast\!\protoast(A'x+B'u) \nonumber \\
   &\geq g_{\mathrm{u}}(u)+ J^{**}(A'x+B'u) - \eps \nonumber \\
   &=g_{\mathrm{u}}(u)+ \max_{z \in \cS^d_K}\{ \inprod{A'x+B'u}{z} - J^*(z) \}- \eps \nonumber \\
   &\geq g_{\mathrm{u}}(u)+\inprod{A'x+B'u}{s_x^{\star}} - J^*(s_x^{\star})- \eps \, , \label{eq_step2_ds}
\end{align}
where the second step uses Statement~\eqref{it_LFT3} of Lemma~\ref{lem_discreteLFT_approx} together with~\eqref{eq_byass}.
Combining~\eqref{eq_step1_ds} with~\eqref{eq_step2_ds} implies the assertion of the lemma, i.e., $ \norm{\pi^{\star}(x) - \hat \pi(x)} \leq \sqrt{4\eps/\mu_{g_{\mathrm{u}}}}$. 
To see this, recall that for any fixed $x\in\mathbb{X}^d$ we defined $\pi^{\star}(x)\in\mathbb{U}^d$ and $\hat \pi(x) \in\mathbb{U}^d$ as the minimizers of the first and second terms of~\eqref{eq_step1_ds}, respectively.
By definition of strong convexity we have
\begin{align*}
g_{\mathrm{u}}\big(\pi^\star(x)\big) 
&\geq g_{\mathrm{u}}\big(\hat \pi(x)\big) + \inprod{\pi^\star(x)-\hat\pi(x)}{\nabla g_{\mathrm{u}}\big(\hat \pi(x)\big) } + \frac{\mu_{g_{\mathrm{u}}}}{2}  \norm{\pi^{\star}(x) - \hat \pi(x)}^2 \\
&= g_{\mathrm{u}}\big(\hat \pi(x)\big) - \inprod{\pi^\star(x)-\hat\pi(x)}{B'^\top s^\star_x } + \frac{\mu_{g_{\mathrm{u}}}}{2}  \norm{\pi^{\star}(x) - \hat \pi(x)}^2 \\
&= g_{\mathrm{u}}\big(\hat \pi(x)\big) - \inprod{B'\pi^\star(x)-B'\hat\pi(x)}{s^\star_x } + \frac{\mu_{g_{\mathrm{u}}}}{2}  \norm{\pi^{\star}(x) - \hat \pi(x)}^2 \, ,
\end{align*}
where the pentultimate step uses the first order optimality condition of the optimization problem~\eqref{eq_optPolicy_via_LFT_robust} which ensures that $\nabla g_{\mathrm{u}}(\hat \pi(x))=-B'^\top s_x^\star$.
Hence, we find
\begin{align}
 &g_{\mathrm{u}}\big(\pi^\star(x)\big)+\inprod{A'x+B'\pi^\star(x)}{s_x^{\star}} - J^*(s_x^{\star}) -\eps \nonumber \\
 &\hspace{30mm}\geq  g_{\mathrm{u}}\big(\hat \pi(x)\big)+\inprod{A'x+B' \hat \pi(x)}{s_x^{\star}} - J^*(s_x^{\star}) + \frac{\mu_{g_{\mathrm{u}}}}{2}  \norm{\pi^{\star}(x) - \hat \pi(x)}^2 -\eps\nonumber \\
 &\hspace{30mm}\geq g_{\mathrm{u}}\big(\pi^\star(x)\big) + J\big(A'x+B'\pi^\star(x)\big) + \frac{\mu_{g_{\mathrm{u}}}}{2}  \norm{\pi^{\star}(x) - \hat \pi(x)}^2-2\eps \,,\label{eq:contradiction}
\end{align}
where the last step follows from~\eqref{eq_step1_ds}. Since we assumed in the beginning that $\|\pi^\star(x)-\hat \pi(x)\| > \sqrt{4\eps/\mu_{g_{\mathrm{u}}}}$, we have $\frac{\mu_{g_{\mathrm{u}}}}{2}  \norm{\pi^{\star}(x) - \hat \pi(x)}^2-2\eps>0$. Therefore \eqref{eq:contradiction} implies
\begin{align*}
    g_{\mathrm{u}}\big(\pi^\star(x)\big)+\inprod{A'x+B'\pi^\star(x)}{s_x^{\star}} - J^*(s_x^{\star})-\eps
    > g_{\mathrm{u}}\big(\pi^\star(x)\big) + J\big(A'x+B'\pi^\star(x)\big)\, ,
\end{align*}
which contradicts \eqref{eq_step2_ds} and hence proves the assertion of the lemma.\qed

\subsection{Proof of Lemma~\ref{lem_condNumber_valueFunction}} \label{app_proof_lemma_condNumber}
It is known~\cite{convexAnalysis93} that $\nabla f$ is $L'$-Lipschitz continuous if, and only if  $f^{\ast}$ is $1/L'$-strongly convex. Furthermore, by definition of the Lipschitz constant and the strong convexity parameter we have for $f=f_1+f_2$ that $L'_{f}\leq L'_{f_1}+L'_{f_2}$ and $\mu_{f}\geq \mu_{f_1}+\mu_{f_2}$.
Hence, we find
\begin{align*}
   L'_{J'_t}
   \leq L'_{h^*} + L'_{g_{\mathrm{x}}}
   = \frac{1}{\mu_{h}} + L'_{g_{\mathrm{x}}}
   \leq \frac{1}{\mu_{J^*_{t+1}} + \mu_{g_{\mathrm{u}}\hspace{-1.5mm}\protoast}}+ L'_{g_{\mathrm{x}}}
   = \frac{L'_{J_{t+1}} L'_{g_{\mathrm{u}}} }{L'_{J_{t+1}}+L'_{g_{\mathrm{u}}}}  + L'_{g_{\mathrm{x}}} \, . 
\end{align*}
Because the function $\R_+ \ni x \mapsto \alpha x/(\alpha + x) +\beta$ for $\alpha,\beta \in \R_+$ is monotonically increasing, this recursive inequality can be solved to obtain an explicit function $\varphi$ satisfying $L'_{J'_t} \leq \varphi(t,T,L'_{g_{\mathrm{x}}},L'_{g_{\mathrm{u}}},L'_{J_T})$ where 
\begin{align*} 
\varphi(t,T,x,y,z)
:= \frac{\sqrt{x(x+4y)}\, z\, \nu_1(t,T,x,y,z)+ x(2y+ z)\nu_2(t,T,x,y,z)}{\sqrt{x(x+4y)}\,z\, \nu_1(t,T,x,y,z) +(x- 2z)\nu_2(t,T,x,y,z) } \, , 
\end{align*}
for $\alpha_{\pm}(t,T,x,y,z):=-(x+2y \pm \sqrt{x(x+4y)})/z^2$ and
\begin{align*}
   \nu_1(t,T,x,y,z)&:=\alpha_{-}(t,T,x,y,z)^T\alpha_+(t,T,x,y,z)^t+\alpha_{-}(t,T,x,y,z)^t\alpha_+(t,T,x,y,z)^T \\
   \nu_2(t,T,x,y,z)&:=\alpha_{-}(t,T,x,y,z)^T\alpha_+(t,T,x,y,z)^t - \alpha_{-}(t,T,x,y,z)^t\alpha_+(t,T,x,y,z)^T\, .
\end{align*}
Analogously we obtain the recursive relation
\begin{align*}
   \mu_{J'_t} 
   \geq \mu_{h^*} + \mu_{g_{\mathrm{x}}}
   = \frac{1}{L'_{h}} + \mu_{g_{\mathrm{x}}}
   \geq \frac{1}{{L'}_{J^*_{t+1}} +L'_{g_{\mathrm{u}}\hspace{-1.5mm}\protoast}}+ \mu_{g_{\mathrm{x}}}
   = \frac{\mu_{J_{t+1}} \mu_{g_{\mathrm{u}}} }{\mu_{J_{t+1}}+\mu_{g_{\mathrm{u}}}}+ \mu_{g_{\mathrm{x}}} \, , 
\end{align*}
which again can be solved to obtain $ \mu_{J'_t} \geq \varphi(t,T,\mu_{g_{\mathrm{x}}},\mu_{g_{\mathrm{u}}},\mu_{T})$.
Combining this for
\begin{align}
  \phi(t,T,L'_{g_{\mathrm{x}}},\mu_{g_{\mathrm{x}}},L'_{g_{\mathrm{u}}},\mu_{g_{\mathrm{u}}},L'_{J_T},\mu_{J_T}):= \frac{\varphi(t,T,L'_{g_{\mathrm{x}}},L'_{g_{\mathrm{u}}},L'_{J_T})}{\varphi(t,T,\mu_{g_{\mathrm{x}}},\mu_{g_{\mathrm{u}}},\mu_{J_T})}\, , \label{eq_phi_def} 
\end{align}
gives
\begin{align*}
    \kappa_{J'_t}
    =\frac{L'_{J'_t}}{\mu_{J'_t}}
    \leq \frac{\varphi(t,T,L'_{g_{\mathrm{x}}},L'_{g_{\mathrm{u}}},L'_{J_T})}{\varphi(t,T,\mu_{g_{\mathrm{x}}},\mu_{g_{\mathrm{u}}},\mu_{J_T})}
    =\phi(t,T,L'_{g_{\mathrm{x}}},\mu_{g_{\mathrm{x}}},L'_{g_{\mathrm{u}}},\mu_{g_{\mathrm{u}}},L'_{J_T},\mu_{J_T})\, ,
\end{align*}
which proves the assertion.\qed

\subsection{Proof of Theorem~\ref{thm_QDP}} \label{app_pf_QDP}
We recall that $K=K_r K_i$, $N=N_{\dr} N_{\ci}$, and $d=d_{\dr}+ d_{\ci}$.
The regular dual spaces \smash{$\cS_{K_r}^{d_{\dr}}=\{s_0,\ldots,s_{K_r-1}\}$} and \smash{${\cS'}_{K_r}^{d_{\dr}}=\{s'_0,\ldots,s'_{K_i-1}\}$} are chosen such that $K \sim (T/\eps)^d$ which by~\eqref{eq_scaling_param} implies 
\begin{align} \label{eq_eps_disc}
    \eps_{\mathrm{disc}} \leq \frac{\eps}{T}\, ,
\end{align}
where we used that by assumption the DP operator preserves convexity.
The two QLFT steps that are required for each time step $t$ are probabilistic, and the condition numbers of the functions to which the QLFT is applied control the success probability. The iteration at stage $t=T,\ldots,1$ is successful with probability
\begin{align*}
    \frac{1}{\kappa_{J_{t}}^d \kappa_{h_{t}}^d}
    \geq \frac{1}{\kappa_{J_{t}}^{2d}}
    \geq \frac{1}{\phi(t,T,L'_{g_{\mathrm{x}}},\mu_{g_{\mathrm{x}}},L'_{g_{\mathrm{u}}},\mu_{g_{\mathrm{u}}},L'_{J_T},\mu_{J_T})^{2d}}\, ,
\end{align*}
where we the first inequality uses the facts that the condition number of a sum of two functions can be bounded from below by the sum of the individual condition numbers, and that the LFT does not change the condition number~\cite{QLFT20}. The final step follows from Lemma~\ref{lem_condNumber_valueFunction}. By definition of the function $\phi$ given in~\eqref{eq_phi_def} it follows that $\phi$ is monotonically decreasing in its first argument $t$.
Hence, because all $T$ runs need to be successful, the overall probability of success can be bounded by
\begin{align*}
   \frac{1}{\gamma^{2dT}}
   :=\frac{1}{\phi(0,T,L'_{g_{\mathrm{x}}},\mu_{g_{\mathrm{x}}},L'_{g_{\mathrm{u}}},\mu_{g_{\mathrm{u}}},L'_{J_T},\mu_{J_T})^{2dT}} \, .
\end{align*}
Combining our algorithm with amplitude amplification thus shows that we have to perform $\gamma^{dT}$ rounds, in expectation, to be successful with a constant probability.

The correctness of the algorithm follows from the correctness of the classical approach, which ensures that $h_\ell^*(q_i,r_{i'}) + g_{\mathrm{x}}(y_i,z_{i'}) = \hat{J}_{\ell-1}(y_i,z_{i'})$ for all $i \in [N_{\dr}],~i'\in [N_{\ci}]$. The nontrivial part is to ensure that all the steps above can be run efficiently on a quantum computer.

We go through the complexity of the different steps in Algorithm~\ref{algo_QDP}. The initialization takes $O(\polylog(N))$ time, since, by Assumption~\ref{ass_Uf}, we can load of $\hat{J}_T(\cdot)$ efficiently.
We next analyze the complexity of all the steps:
\begin{enumerate}
\item This calculation can be done in $O(\polylog(N,K))$ time because Proposition~\ref{prop_QLFT} ensures that the mapping
\begin{align}
&\frac{1}{\sqrt{N}} \sum_{i=0}^{N_{\dr}-1}\sum_{i'=0}^{N_{\ci}-1} \ket{i,i'}\ket{\hat{J}_{\ell}(y_{i-2\ell}^{i+2\ell},{z}_{i'-2\ell}^{i'+2\ell})}\ket{\textnormal{Garbage}(i,i')} \nonumber \\
&\hspace{20mm}\to  \frac{1}{\sqrt{K}} \sum_{j=0}^{K_r-1}\sum_{j'=0}^{K_i-1} \ket{j,j'} \ket{\hat{J}^*_{\ell}(s_{j-2\ell+1}^{j+2\ell-1},{s'}_{j'-2\ell+1}^{j'+2\ell-1})} \ket{\textnormal{Garbage}(j,j')} \, , \label{eq_step1}
\end{align}
if successful, requires $O(\polylog(N,K))$ steps. 
\item This computation can be done in $O(\polylog(K))$ time. To see this note that~\eqref{eq_step1} can be transformed into 
\begin{align*}
\frac{1}{\sqrt{K}} \sum_{j=0}^{K_r-1}\sum_{j'=0}^{K_i-1} \ket{j,j'} \ket{o_{j-2\ell+1}^{j+2\ell-1},p_{j-2\ell+1}^{j+2\ell-1}+{p'}_{j'-2\ell+1}^{j'+2\ell-1}} \ket{\hat{J}^*_{\ell}(s_{j-2\ell+1}^{j+2\ell-1},{s'}_{j'-2\ell+1}^{j'+2\ell-1})}  \ket{\textnormal{Garbage}(j,j')} \, ,
\end{align*}
as we know $(o_0,\ldots,o_{K_r-1})$, $(p_0,\ldots,p_{K_r-1})$, and $(p'_0,\ldots,p'_{K_i-1})$ we can perform $\ket{j,j'}\ket{0,0} \mapsto \ket{j,j'}\ket{o_{j-2\ell+1}^{j+2\ell-1},p_{j-2\ell+1}^{j+2\ell-1}+{p'}_{j'-2\ell+1}^{j'+2\ell-1}}$.
Since we know $g_{\mathrm{u}}\hspace{-1.5mm}\protoast$ we can create
\begin{align*}
\frac{1}{\sqrt{K}} \sum_{j=0}^{K_r-1}\sum_{j'=0}^{K_i-1} \ket{j,j'} \ket{g^*_{\mathrm{u}}(o_{j-2\ell+1}^{j+2\ell-1},p_{j-2\ell+1}^{j+2\ell-1}+{p'}_{j'-2\ell+1}^{j'+2\ell-1})}  \ket{\hat{J}^*_{\ell}(s_{j-2\ell+1}^{j+2\ell-1},{s'}_{j'-2\ell+1}^{j'+2\ell-1})}  \ket{\textnormal{Garbage}(j,j')}
 \, ,
\end{align*}
because by Assumption~\ref{ass_Uf} we can do $\ket{j,j'} \ket{o_j,p_j+{p'}_{j'}}\ket{0} \to \ket{j,j'}\ket{o_j,p_j+{p'}_{j'}} \ket{g_{\mathrm{u}}\hspace{-1.5mm}\protoast \!(o_j,p_j+{p'}_{j'})}$ efficiently. Using quantum arithmetics to add the content of two registers we obtain
\begin{align*}
&\frac{1}{\sqrt{K}} \sum_{j=0}^{K_r-1}\sum_{j'=0}^{K_i-1} \ket{j,j'}  \ket{\hat{J}^*_{\ell}(s_{j-2\ell+1}^{j+2\ell-1},{s'}_{j'-2\ell+1}^{j'+2\ell-1})+g^*_{\mathrm{u}}(o_{j-2\ell+1}^{j+2\ell-1},p_{j-2\ell+1}^{j+2\ell-1}+{p'}_{j'-2\ell+1}^{j'+2\ell-1})}  \ket{\textnormal{Garbage}(j,j')} \\
&\hspace{20mm}=\frac{1}{\sqrt{K}} \sum_{j=0}^{K_r-1}\sum_{j'=0}^{K_i-1} \ket{j,j'}\ket{h_{\ell}(s_{j-2\ell+1}^{j+2\ell-1},{s'}_{j'-2\ell+1}^{j'+2\ell-1})} \ket{\textnormal{Garbage}(j,j')} \, ,
\end{align*}
as desired.

\item This calculation can be done, if successful, in $O(\polylog(N,K))$ time by using the regular QLFT described in~\eqref{eq_iterative_regular_QLFT}.

\item This computation can be done in $O(\polylog(N))$ time. The state from the previous step can be transformed to
\begin{align}
\frac{1}{\sqrt{N}} \sum_{i=0}^{N_{\dr}-1} \sum_{i'=0}^{N_{\ci}-1} \ket{i,i'} \ket{h^*_{\ell}(q_{i-2(\ell-1)}^{i+2(\ell-1)},r_{i'-2(\ell-1)}^{i'+2(\ell-1)})} \ket{y_{i-2(\ell-1)}^{i+2(\ell-1)},z_{i'-2(\ell-1)}^{i'+2(\ell-1)}} \ket{\textnormal{Garbage}(i,i')} \, , \label{eq_almost_done}
\end{align}
since we know the vectors $(y_0,\ldots,y_{N_{\dr}-1})$, $(z_0,\ldots,z_{N_{\ci}-1})$, thus we can perform $\ket{i,i'}\ket{0,0} \mapsto \ket{i,j}\ket{y_{i-2(\ell-1)}^{i+2(\ell-1)},z_{i'-2(\ell-1)}^{i'+2(\ell-1)}}$.
State~\eqref{eq_almost_done} can then be turned into
\begin{align*}
\frac{1}{\sqrt{N}}\sum_{i=0}^{N_{\dr}-1} \sum_{i'=0}^{N_{\ci}-1} \ket{i,i'} \ket{h^*_{\ell}(q_{i-2(\ell-1)}^{i+2(\ell-1)},r_{i'-2(\ell-1)}^{i'+2(\ell-1)})}\ket{g_{\mathrm{x}}(y_{i-2(\ell-1)}^{i+2(\ell-1)},z_{i'-2(\ell-1)}^{i'+2(\ell-1)})} \ket{\textnormal{Garbage}(i,i')} \, ,
\end{align*}
where we used Assumption~\ref{ass_Uf} ensuring that the mapping $\ket{i} \ket{x_i}\ket{0} \to \ket{i} \ket{x_i} \ket{g_{\mathrm{x}}(x_i)}$ can be done efficiently. Adding the two registers then completes this step.
\end{enumerate}
The triangle inequality, together with the fact that the DP operator is contractive, implies that for all $x \in \cX^d_N$
\begin{align*}
&|\mathrm{DP}\circ \mathrm{DP}[J_t](x) - \mathrm{DP_{\!conj}}\circ\mathrm{DP_{\!conj}}[J'_t](x)|\\
&\hspace{10mm}\leq |\mathrm{DP}\circ \mathrm{DP}[J_t](x) - \mathrm{DP_{\!conj}}\circ\mathrm{DP}[J_t](x)| + |\mathrm{DP_{\!conj}}\circ \mathrm{DP}[J_t](x) - \mathrm{DP_{\!conj}}\circ\mathrm{DP_{\!conj}}[J'_t](x)|\\
&\hspace{10mm}\leq |\mathrm{DP}[J_{t-1}](x) - \mathrm{DP_{\!conj}}[J_{t-1}](x)| +|\mathrm{DP}[J_t](x) -\mathrm{DP_{\!conj}}[J'_t](x)| \\
&\hspace{10mm}\leq 2\eps_\mathrm{disc} \, ,
\end{align*}
where the final step uses~\eqref{eq_errorBound} from Theorem~\ref{thm_classical_error} and the fact that $J_t(x)=J'_t(x)$ for all $x \in \cX^d_N$. Applying this argument $T$ times, in an inductive fashion, together with~\eqref{eq_eps_disc} proves the assertion.\qed
\subsection{Proof of Corollary~\ref{cor_QDP_stoch}} \label{app_pf_cor_QDP_stoch}
We start by recalling that the error bound from Corollary~\ref{cor_stoch_error} ensures that for $K \sim (T/\eps)^d$ we have $\eps_{\mathrm{disc}} \leq \eps/T$.
The two QLFT transforms (each on $r$ registers) at each time step $t$ are successful with probability
\begin{align*}
    \frac{1}{\kappa_{V_{t}}^{dr} \kappa_{h_{t}}^{dr}}
    \geq \frac{1}{\kappa_{V_{t}}^{2dr}}
    \geq \frac{1}{\phi(t,T,L'_{g_{\mathrm{x}}},\mu_{g_{\mathrm{x}}},L'_{g_{\mathrm{u}}},\mu_{g_{\mathrm{u}}},L'_{ V_T},\mu_{V_T})^{2dr}}\, ,
\end{align*}
where the last inequality uses the fact that the condition number of $h^*(A'(m+\xi_k))+g_{\mathrm{x}}(m+\xi_k)$ is the same as the condition number of $\sum_{k=0}^{r-1} p_{\xi}(\xi_k) (h^*(A'(m+\xi_k))+g_{\mathrm{x}}(m+\xi_k))$. Following the same reasoning as in the proof of Theorem~\ref{thm_QDP} yields
\begin{align*}
   \frac{1}{\gamma^{2drT}}
   :=\frac{1}{\phi(0,T,L'_{g_{\mathrm{x}}},\mu_{g_{\mathrm{x}}},L'_{g_{\mathrm{u}}},\mu_{g_{\mathrm{u}}},L'_{V_T},\mu_{V_T})^{2drT}}
   \geq \frac{1}{(\kappa_{V_T}+T\kappa_{g_{\mathrm{u}}}+T\kappa_{g_{\mathrm{x}}})^{2drT}} \, ,
\end{align*}
where the final inequality uses Lemma~\ref{lem_condNumber_valueFunction}.
Combining our algorithm with amplitude amplification thus shows that we have to perform on average $\gamma^{drT}$ amplification rounds to be successful with a constant probability.

It thus remains to very that all steps in Algorithm~\ref{algo_QDP_stoch} can be done efficiently. Steps~\ref{it_stoch1}-\ref{it_stoch2} are standard and similar to Algorithm~\ref{algo_QDP}, with the only difference that we perform the QLFT $r$ times in all the $r$ registers. (Several QLFT, with different dual spaces, can be applied on the same input register because we never modify or overwrite the input registers.) Hence, the running time for these two steps is $O(r\, \polylog(P,K))$. 
In Step~\ref{it_stoch3} we perform the QLFT $r$ times with dual space $o_{k,i}$, where $k=0,\ldots,r-1$ is different in all $r$ registers. We then use quantum arithmetics to add the known function $g_{\mathrm{x}}$ evaluated at $\bar o_{k,i}$. The time complexity for this step is $O(r\, \polylog(P,K))$. 
To see how Step~\ref{it_stoch4} can be performed,  recall that for all $i \in [P]$
\begin{align*}
\hat{V}_{\ell-1}(m_{i})
= \sum_{k=0}^{r-1} p_{\xi}(\xi_k) \Big(h^*\big(A'(m_i+\xi_k)\big)+g_{\mathrm{x}}(m_i+\xi_k) \Big) 
= \sum_{k=0}^{r-1} p_{\xi}(\xi_k) \big(h^{*}(o_{k,i}) + g_{\mathrm{x}}(\bar o_{k,i}) \big)\, .
\end{align*}
We thus see that to transform the state from Step~\ref{it_stoch3} to the state from Step~\ref{it_stoch4} all we need to do is to compute the expectation with respect to the known discrete distribution $p_{\xi}(\xi_k)$, using the values contained in $r$ separate registers. This can be done with standard quantum arithmetics. Computing the value function and the optimal policy at a specific point then follows using similar steps to Corollary~\ref{cor_eval_J0} and Corollary~\ref{cor_optimal_policy_stoch}. We remark that in each QLFT step we accumulate a small amount of garbage, for a total of $O(rT \polylog(P,K)$ qubits. These extra qubits do not effect the probability of success of the algorithm because the postselection process in the QLFT algorithm~\cite{QLFT20} does not depend on them.\qed
\subsection{Proof of Proposition~\ref{prop_hardDP}}
To show this, we find a reduction from the problem described next, to the problem of calculating the optimal initial action of~\eqref{eq_hardDPproblem}.
\begin{problem}[Evaluating the CDF of the convolution of discrete random variables] \label{problem_CDF} \mbox{}\\
Instance: Discrete random variables $Z_1,\dots,Z_n$, $\Lambda \in \N$ and $\lambda \in \Q$ with $0 < \lambda \leq 1$. \\ 
Question: Is $\PP(\sum_{i=1}^n Z_i \le \Lambda) \geq \lambda$?
\end{problem}

\begin{proposition}[{\cite[Theorem~4.1]{halman09}}]
Problem~\ref{problem_CDF} is \#P-hard, even if the random variables $Z_i$ are independent and they have support $\{0, a_i\}$ with probability $\frac{1}{2}$ each.
\end{proposition}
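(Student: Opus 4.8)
The plan is to give a polynomial-time Turing reduction from the counting version of \textsc{Subset-Sum} to the threshold problem in Problem~\ref{problem_CDF}, working entirely inside the restricted class of random variables with support $\{0,a_i\}$ and probability $\tfrac12$. The starting observation is the exact identity
\[
\PP\Big(\sum_{i=1}^n Z_i \le \Lambda\Big)
= \frac{1}{2^n}\big|\{S\subseteq\{1,\dots,n\}: \sum\nolimits_{i\in S}a_i\le\Lambda\}\big|
=: \frac{C_{\le \Lambda}}{2^n},
\]
which holds precisely because choosing $Z_i=a_i$ or $Z_i=0$ with probability $\tfrac12$ each puts uniform mass $2^{-n}$ on the subsets $S\subseteq\{1,\dots,n\}$, with $\sum_i Z_i=\sum_{i\in S}a_i$. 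Thus the queried probability is, up to the fixed normalisation $2^{-n}$, exactly the number $C_{\le\Lambda}$ of $0/1$ knapsack solutions of capacity $\Lambda$ with weights $a_1,\dots,a_n$.

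First I would recall that computing $C_{=t}:=|\{S:\sum_{i\in S}a_i=t\}|$, the counting version of \textsc{Subset-Sum}, is $\#$P-complete; this is the canonical counting analogue of the NP-complete \textsc{Subset-Sum} problem and follows because the textbook digit-gadget reduction from $3$-\textsc{Sat} is parsimonious, while membership in $\#$P is immediate. Given an instance $(a_1,\dots,a_n,t)$, I would instantiate the independent variables $Z_i$ with support $\{0,a_i\}$ and feed them to an oracle for Problem~\ref{problem_CDF}. Since $C_{\le\Lambda}$ is an integer in $\{0,1,\dots,2^n\}$ and $\PP(\sum_i Z_i\le\Lambda)=C_{\le\Lambda}/2^n$, the query ``is $\PP(\sum_i Z_i\le\Lambda)\ge k/2^n$?'' answers exactly ``is $C_{\le\Lambda}\ge k$?''. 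This predicate is monotone in $k$, so a binary search over $k\in\{1,\dots,2^n\}$ pins down $C_{\le\Lambda}$ using only $O(n)$ oracle calls, each with a rational threshold $\lambda=k/2^n$ of denominator $2^n$ and hence polynomial bit-size. Evaluating $C_{\le t}$ and $C_{\le t-1}$ this way and subtracting yields $C_{=t}=C_{\le t}-C_{\le t-1}$, the desired $\#$P count, in polynomial time.

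This exhibits a polynomial-time Turing reduction from a $\#$P-complete problem to Problem~\ref{problem_CDF}, proving $\#$P-hardness; moreover every instance produced uses only variables of the restricted two-point, equal-probability form demanded by the statement, so the hardness survives the restriction with no extra work. The main obstacle, and the only genuinely non-routine point, is the decision-to-counting gap: Problem~\ref{problem_CDF} is a yes/no threshold question, whereas $\#$P-hardness concerns the exact evaluation of a counting function, so the argument must recover the exact integer $C_{\le\Lambda}$ from threshold answers. The binary search over the dyadic grid accomplishes this, and it is essential that the convolution's probabilities live on the grid $\{k/2^n\}$ of polynomially many bits --- exactly what the equal-probability, two-point support guarantees; for unstructured general distributions the attainable probabilities could require exponentially many bits and the search would not terminate in polynomial time. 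The only external ingredient to state carefully is the $\#$P-completeness of the base \textsc{Subset-Sum} counting problem.
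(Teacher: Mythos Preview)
The paper does not supply a proof of this proposition: it is simply quoted from~\cite[Theorem~4.1]{halman09} and then used as a black box in the proof of Proposition~\ref{prop_hardDP}. There is therefore no in-paper argument to compare your proposal against.

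Taken on its own, your argument is correct and is essentially the standard route. The identity $\PP(\sum_i Z_i\le\Lambda)=C_{\le\Lambda}/2^n$ is exact for the restricted two-point, equiprobable variables, and because $C_{\le\Lambda}\in\{0,1,\dots,2^n\}$ the threshold oracle with $\lambda=k/2^n$ answers the monotone predicate ``$C_{\le\Lambda}\ge k$?'', so binary search over $k$ recovers $C_{\le\Lambda}$ in $O(n)$ calls with thresholds of $O(n)$-bit size. Subtracting $C_{\le t-1}$ from $C_{\le t}$ then yields the exact \#\textsc{Subset-Sum} count, giving a polynomial-time Turing reduction from a \#P-complete problem to Problem~\ref{problem_CDF} in the restricted class. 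The only external ingredient is the \#P-completeness of counting subset sums, which is classical (and, as you note, follows from a parsimonious reduction out of \#3-\textsc{Sat}). Your observation that the dyadic grid of attainable probabilities is what makes the binary search polynomial is the right point to stress.
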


\begin{proof}[Proof of Proposition~\ref{prop_hardDP}] \label{app_pf_hardness}
  The proof technique is inspired by \cite{halman2018complexity}. 
  Let $m = \max_{i} \alpha_{i,1}$, $g_T(x) = mn x^2$ for $m\geq 0$ sufficiently large, $g_{\mathrm{u},0}(u) = \beta u^2 + (1-\lambda) u$, $g_{\mathrm{u},t}(u) =\beta u^2+ u \; \forall t = 2,\dots,T-2$, and $g_{\mathrm{u},T-1}(u) = \beta u^2$, where $\beta>0$ will be determined subsequently and $\lambda>0$. Let $U_{\mathrm{x}}\geq \max\{1/\beta,mn\}$ and $U_{\mathrm{u}}\geq 1/\beta$. Note that $\xi_1,\ldots,\xi_{T}$ are discrete independent random variables with finite support $r>0$. 
  
  We first analyze the problem assuming $\beta = 0$, as it is easier to understand. In this case, the cost functions are not quadratic strongly convex; we will later analyze the problem with $\beta > 0$. Notice that the only way to increase the value of the state is via the action $u$, because of the structure of the transition function. Furthermore, at stages $t=2,\dots,T-1$ the state decreases by a random amount equal to $Z_{t-1}$. Finally, at stage $T-1$ we can decrease the value of the state ``for free'', as $b_{T-1} = -1$ and $g_{\mathrm{u},T-1}(u) = 0$.

  Since the terminal cost function $g_{T}(\cdot)$ is a quadratic with a cost coefficient that is larger than the cost for buying even a single unit, it is optimal to reach state $x_T = 0$ at the last stage, since the penalty for $x_T \neq 0$ is larger than any other cost. Also notice that, because at stage $T-1$ we can decrease the value of the state variable ``for free'', we can assume that any optimal solution must reach stage $T-1$ with a state variable $x_{T-1} \ge 0$. 

  At the first stage of the dynamic program we have the option of buying at price $(1-\lambda)$ per unit. In the following stages we have to satisfy demand $Z_t$ using stored resource, or buying (via the action $u_t$) at cost $1$. Note that since costs do not vary after the first stage, it is always optimal to use stored resource as much as possible, then buy via the action $u_t$ the exact amount necessary to ensure that we reach $x_{T-1} \ge 0$ in case the state becomes negative. Hence, there is only one decision to take: the amount $u_0^\star$ of energy that is bought at stage $0$ with unit cost $(1-\lambda)$, and the remaining amount $(\sum_{i=1}^n Z_i - u_0^\star)^+$ must be bought in subsequent stages at unit cost $1$. It follows that this problem is equivalent to a newsvendor problem with demand $\sum_{i=1}^n Z_i$ where the unit cost for overbuying is $(1-\lambda)$ and the unit cost for underbuying is $1 - (1-\lambda) = \lambda$. It is well known that in this case, the optimal amount $u_0^\star$ to be bought is:
  \begin{align*}
    u_0^\star = \arg \min_{z\in \R^d} \left\{ z: \PP\left(\sum_{i=1}^n Z_i \leq z \right) \geq
    \frac{\lambda}{\lambda + (1 - \lambda)} = \lambda \right\}.
  \end{align*}
  Hence, if we could determine the optimal policy for this DP with stochastic demand in polynomial time, we would be able to solve Problem~\ref{problem_CDF} in polynomial time. This implies that determining the optimal policy $u_0^\star$ for this form of DP, where we assumed $\beta = 0$, is \#P-hard.
  
  We finally argue that the cost functions can be made quadratic strongly convex while yielding almost the same optimal action. Let $u_0^\star$ be the optimal action at the first stage (i.e., order quantity in the equivalent newsvendor) when $\beta = 0$, and $J_0(0)$ the corresponding optimal objective function value of the DP at initial state $x_0=0$. Define $\bar{\lambda} := \min \{\lambda, 1-\lambda\}$. Notice that $u_0^\star$ is integer, because all the $X_i$ are integer. Now consider a new instance of the problem where we set $\beta = \bar{\lambda}/(8 m^2 n^2)$. As compared to the instance with linear costs, we need to pay additional quadratic costs at every stage. If we buy $u_0^\star$ at the first stage, we have to buy at most $mn-u_0^\star$ at subsequent stages. Thus, the cost of the action $u^\star$ with the quadratic cost function is at most $J_0(0) + \beta (u^\star)^2 + \beta (mn-u_0^\star)^2 \leq J_0(0) + \frac{\bar{\lambda}}{8} + \frac{\bar{\lambda}}{8} < J_0(0) + \frac{\bar{\lambda}}{2}$, where we used the fact that $u_0^\star \leq m n$. We also know that in the instance with linear cost, buying $\geq u_0^\star + 1$ units in the first stage has cost at least $J_0(0) + (1-\lambda)$, and buying $\le u_0^\star - 1$ units has cost at least $J_0(0) + \lambda$. By definition of $\bar{\lambda}$, this implies that it is suboptimal to buy an amount $\ge u_0^\star+1$ or $\le u_0^\star-1$ in the first stage. Furthermore, it is easy to see that buying a fractional amount $u' \in (u_0^\star, u_0^\star+1)$ can only increase the expected cost. If we buy a fractional amount $u' \in (u_0^\star-1, u_0^\star)$, say, $u_0^\star - \Delta_u$ with $0 < \Delta_u < 1$, the first-stage cost decreases by at most $(1-\lambda) \Delta_u + \frac{\bar{\lambda}}{8}$, but we still need to pay at least $\Delta_u$ (in expectation) in subsequent time stages to buy the difference. This means that in order for the cost to decrease, we must have:
  \begin{equation*}
      \Delta_u - (1-\lambda)\Delta_u - \frac{\bar{\lambda}}{8} < 0 \, ,
  \end{equation*}
  which implies $\Delta_u < \frac{\bar{\lambda}}{8\lambda} \leq \frac{1}{8}$. Hence, we have shown that the optimal first-stage action belongs to the interval $[u_0^\star-\frac{1}{8}, u_0^\star]$. It follows that simple rounding of the optimal first-stage action of the DP with quadratic cost is equal to $u_0^\star$; as shown above, determining $u_0^\star$ is \#P-hard, concluding the proof.
 \end{proof}
 
\bibliographystyle{arxiv_no_month}
\bibliography{bibliofile}

\end{document}